\documentclass[11pt,lettersize]{article}

\usepackage{authblk}
\usepackage{fullpage}
\usepackage{amsmath,amsthm,amssymb}
\usepackage{todonotes}
\presetkeys{todonotes}{inline}{}
\usepackage{xspace} 
\usepackage{complexity} 
\usepackage{thm-restate}

\usepackage{cleveref} 

\usepackage{algpseudocode}
\usepackage{algorithm}

\usepackage{tcolorbox}

\usepackage{tikz}
	\usetikzlibrary{calc,fit,arrows.meta}
    \usetikzlibrary{decorations.pathreplacing} 

\usepackage{xcolor}

\definecolor{myred}{HTML}{db3f3d}
\definecolor{myblue}{HTML}{0072BD}

\newtheorem{theorem}{Theorem}[section]
\newtheorem{proposition}[theorem]{Proposition}

\newtheorem{lemma}[theorem]{Lemma}
\newtheorem{corollary}[theorem]{Corollary}

\newtheorem{claim}[theorem]{Claim}

\theoremstyle{definition}

\newtheorem{example}[theorem]{Example}

\usepackage[round]{natbib}
\bibliographystyle{plainnat}

\title{Robust Popular Matchings}
\date{}

\author[1]{Martin Bullinger}
\author[2]{Gergely Csáji}
\author[3]{Rohith Reddy Gangam}
\author[3]{Parnian Shahkar}

\affil[1]{ \small School of Engineering Mathematics and Technology, University of Bristol, UK}
\affil[2]{\small ELTE KRTK KTI, Hungary}
\affil[3]{ \small Department of Computer Science, University of California Irvine, USA\protect\\ \vspace*{0.05cm} martin.bullinger@bristol.ac.uk, csaji.gergely@krtk.elte.hu, rgangam@uci.edu, shahkarp@uci.edu}

\newtcolorbox{wbox}
{
	colback  = white,
}

\newcommand{\MatP}{MP\xspace} 
\newcommand{\ins}{\mathcal I} 
\newcommand{\iA}{\mathcal I_A} 
\newcommand{\iB}{\mathcal I_B} 
\newcommand{\iX}{\mathcal I_X} 
\newcommand{\iH}{\mathcal H_e} 
\newcommand{\iHH}{\mathcal H_{e,e'}} 
\newcommand{\iP}{(\iA,\iB)} 

\newcommand{\vote}{\text{vote}} 
\newcommand{\pmarg}{\Delta} 
\newcommand{\UI}{\mathcal U^{\ins}} 
\newcommand{\UA}{\mathcal U^{\iA}} 
 
\newcommand{\gadg}{\mathcal{G}}

\newcommand{\RobustProb}{\textsc{RobustPopularMatching}\xspace}
\newcommand{\RobustDom}{\textsc{RobustDominantMatching}\xspace}
\newcommand{\ForbEdge}{\textsc{2-ForbiddenEdge}\xspace}
\newcommand{\ForbEdgeForceVert}{\textsc{ForbiddenEdgeForceVert}\xspace}
\newcommand{\PopEdge}{\textsc{PopularEdge}\xspace}
\newcommand{\DomEdge}{\textsc{DominantEdge}\xspace}
\newcommand{\StabMat}{\textsc{StableMatching}\xspace}
\newcommand{\mSAT}{\textsc{monotone-3-SAT}}
\newcommand{\tSAT}{\textsc{3-SAT}}
\newcommand{\nega}{\overline{a}}
\newcommand{\negb}{\overline{b}}
\newcommand{\negc}{\overline{c}}
\newcommand{\negd}{\overline{d}}
\newcommand{\nege}{\overline{e}}
\newcommand{\negf}{\overline{f}}

\sloppy

\begin{document}

\maketitle 

\begin{abstract}
	We study popularity for matchings under preferences.
    This solution concept captures matchings that do not lose against any other matching in a majority vote by the agents.
    A popular matching is said to be \emph{robust} if it is popular among multiple instances.
    We present a polynomial-time algorithm for deciding whether there exists a robust popular matching if instances only differ with respect to the preferences of a single agent.
    The same method applies also to dominant matchings, a subclass of maximum-size popular matchings. 
    By contrast, we obtain \NP-completeness if two instances differ only by two agents of the same side or by a swap of two adjacent alternatives by two agents.
    The first hardness result applies to dominant matchings as well.
    Moreover, we find another complexity dichotomy based on preference completeness for the case where instances differ by making some options unavailable.
    We conclude by discussing related models, such as strong and mixed popularity.
\end{abstract}

\section{Introduction}

Matching under preferences has been a long-standing subject of study for several decades with diverse applications spanning labor markets, organ transplantation, and dating.
The general idea is to match two types of agents that each possess a ranking of the agents from the other side.
One of the most celebrated results in this area is the Deferred Acceptance Algorithm by \citet{GaSh62a} for identifying so-called stable matchings.
These are matchings that do not admit a blocking pair of agents preferring each other to their designated matching partners.
Subsequently, many related algorithms and solution concepts have been developed and investigated.
Among these, the concept of popular matchings proposed by \citet{Gard75a} has caused substantial research, see, e.g., the book chapter by \citet{Cseh17a}.
A matching is said to be popular if it does not lose a majority election against any other matching.
In this election, the agents vote according to their preferences between their respective matching partners.
As already shown by \citeauthor{Gard75a}, stable matchings are popular, but the converse is not necessarily true.

A common feature of real-world scenarios is that it can be difficult for agents to express their exact preferences.
For instance, an agent might report their preferences but revise them at a later stage.
In a matching market, this situation can easily arise when interactions with other agents influence opinions about them.
Alternatively, an agent might maintain their preferences, but some event may occur that renders certain options unacceptable or impossible, or introduces new opportunities.
Again, such situations occur frequently, for example, when some potential matching partners move away, or when an agent meets new ones.

In terms of algorithmic solutions, it is desirable to establish a solution that is robust to changes.
To formalize this idea, we propose \emph{robust popular matchings}, which are popular matchings across multiple instances.
We then consider \RobustProb, the algorithmic problem of computing a robust popular matching, or deciding that no such matching exists.
Specifically, we consider this problem in the two scenarios described above.
First, we assume that the set of available matching partners remains fixed, but agents may \emph{alter their preferences}.
We present a polynomial-time algorithm for the case where only a single agent alters their preferences.
The key idea behind this algorithm is to define a set of hybrid instances, on which we search for popular matchings that include a predefined edge.
This algorithm extends to dominant matchings, which are an important subclass of maximum-size popular matchings \citep{CsKa17a}.
Hence, we can also solve the analogous \RobustDom problem if a single agent alters their preferences.
This result is tight with respect to the number of agents. 
Indeed, we show that \RobustProb and \RobustDom become \NP-complete when two agents from the same side alter their preferences.
For \RobustProb, we obtain a second \NP-completeness for the case when two agents are allowed to perform a swap of two adjacent alternatives, which is arguably the smallest possible preference alteration.
Second, we consider \RobustProb in the setting where agents' preference orders are fixed, but some options may \emph{become unavailable}.
We identify a complexity dichotomy based on whether one of the input instances includes all potential partners as available.
We conclude by discussing related problems, such as robustness with respect to other notions of popularity.

\section{Related Work}
\label{sec:relatedwork}

Popularity was first considered by \citet{Gard75a} under the name of ``majority assignments.''
He also introduced strong popularity, a stricter version of the concept in which a matching must defeat every other matching in a majority vote. 
Hence, unlike standard popularity, ties against other matchings are not sufficient for a matching to be strongly popular.
In a broader interpretation, popular and strongly popular matchings correspond to weak and strong Condorcet winners as studied in social choice theory \citep{Cond85a}.
The book chapters by \citet{Cseh17a} and \citet[Chapter~7]{Manl13a} provide an excellent introduction to popular matchings.

Our research continues a stream of algorithmic results on popularity.
In this line of work, close relationships between popularity and stability often play an important role:
Popular matchings can have different sizes and stable matchings are popular matchings of minimum size.
An important subclass of popular matchings are dominant matchings, which are popular matchings defeating every larger matching in a popularity vote \citep{CsKa17a}.
These matchings always exist and can be computed efficiently, which implies that some maximum-size popular matching can be computed efficiently \citep{CsKa17a}.
By contrast, somewhat surprisingly, it is \NP-hard to decide if there exists a popular matching that is neither stable nor dominant \citep{FKPZ19a}.

An important algorithmic problem for our research is the problem of computing a popular matching containing a predefined set of edges.
A polynomial time algorithm for this problem exists if only a single edge has to be included in the matching \citep{CsKa17a} but the problem is \NP-hard if at least two edges are forced \citep{FKPZ19a}.
However, this hardness heavily relies on the fact that some matching partners are unavailable.
If the preference orders of the agents encompass the complete set of agents of the other side, then popular matchings of maximum weight can be computed in polynomial time \citep{CsKa17a}.
Consequently, by setting appropriate weights, one can find popular matchings containing or preventing any subset of edges for complete instances.

Popular matchings have also been considered in related domains.
\citet{BIM10a} consider popular matchings for weak preferences, where their computation becomes \NP-hard.
However, in a house allocation setting, where one side of the agents corresponds to objects without preferences, popular matchings can be computed efficiently, even for weak preferences \citep{AIKM07a}.
One can also relax the allowed input instances by considering the roommate setting where every pair of agents may be matched.
Then, popularity already leads to computational hardness for strict preferences \citep{GMSZ19a,FKPZ19a}.

In addition, some work considers a randomized variant of popularity called mixed popularity, where the output is a probability distribution over matchings \citep{KMN11a}.
Mixed popular matchings are guaranteed to exist by the Minimax Theorem, and can be computed efficiently if the output are matchings, even in the roommate setting \citep{KMN11a,BrBu20a}.
However, finding matchings in the support of popular outcomes becomes intractable in a coalition formation scenario where the output may contain coalitions of size larger than two.
Once coalitions of size three are allowed, deciding whether a popular partition exists is known to be \coNP-hard and their verification is \coNP-complete \citep{BrBu20a}.
These results suggest that the exact complexity in this setting may be $\Sigma_2^p$-completeness.
In fact, $\Sigma_2^p$-completeness of popularity holds in additively separable and fractional hedonic games, two prominent models of coalition formation that can be seen as an extension of matchings under preferences \citep{BuGi25a}.

A series of papers study the complexity of finding stable matchings across multiple instances.
\citet{MaVa18a} initiate this stream of work and propose a polynomial-time algorithm if one agent shifts down a single alternative.
This result was improved by \citet{GMRV22a} for arbitrary changes of the preference order by a single agent and subsequently even for arbitrary changes by all agents of one side \citep{GMRV23b}.
Notably, the algorithmic approach of these papers is to exploit the combinatorial structure of the lattice of stable matchings.
By contrast, a similar structure for popular matchings is unknown, and we develop an alternative technique for our positive result.
Moreover, a positive result for popularity when multiple agents from the same side alter their preferences is unlikely because we obtain \NP=hardness even for preference changes by two agents from the same side.
In addition, if all agents may change their preference lists, a computational intractability is also obtained for stable matchings \citep{MiOk19a}.
Notably, in the reduction by \citet{MiOk19a}, the number of agents that change their preference order is linear with respect to the total number of agents, and these agents apply extensive changes, which again is a contrast to our hardness results.

Furthermore, popular matchings across multiple instances have been considered in a concurrent paper by \citet{Csaj24a}.
His multilayer model corresponds to our model of perturbed instances.
In addition, he studies popularity among succinctly encoded, possibly exponentially large sets of profiles.
First, he studies a model in which popularity must hold for every preference profile when each agent can have any one of at most a fixed number of preference orders.
Second, he introduces a different concept of robustness, in which popularity is required to hold for all profiles within a small swap distance.
Generally, he finds hardness results for two-sided preferences and polynomial-time solvability for the house allocation setting.
In particular, the results by \citet{Csaj24a} for his multilayer setting imply \NP-completeness of our notion of robust popularity when only one side of the agents perturbs their preferences, which is already a contrast to the polynomial-time algorithm by \citet{GMRV23b} for robust stability.
Furthermore, \citet{Csaj24a} obtains an \NP-completeness result for robust dominant matchings for the case where any number of agents can perturb their preferences.
Notably, this hardness result relies on changing the preferences of an unbounded number of agents.\footnote{The reduction is from the problem of computing a stable matching across two instances by \citet{MiOk19a}.
In the reduction by \citet{Csaj24a}, the number of agents with different preferences in the reduced instance is the same as the number of agents in the source instance.
As discussed before, the number of agents with perturbed preferences in the reduction by \citet{MiOk19a} is linear in the number of agents, and this property is inherited by the reduction of \citet{Csaj24a}.} 
We improve both hardness results by showing them for the case when only two agents from the same side change their preferences.

Additionally, some work on stable matchings considers other models in which multiple instances interact.
\citet{ABG+20a,ABF+22a} propose a model with uncertainty for the true preference relations.
They ask for matchings that are possibly or necessarily stable, or stable with high probability.
In addition, \citet{BBHN21a} study bribery for stable matchings, i.e., the strategic behaviour for achieving certain goals like forcing a given edge into a stable matching by deliberately manipulating preference orders.

Finally, robustness of outcomes with respect to perturbations of the input has also been studied in other scenarios of multiagent systems, such as voting \citep{FaRo15a,SYE13a,BFK+21a}.
There, robustness is commonly studied under the lens of bribery, i.e., deliberately influencing an election by changing its input.
Then, a cost is incurred for modifying votes, often measured with respect to the swap distance of the original and modified votes \citep{EFS09a}.

\section{Preliminaries}
In this section, we introduce our formal model as well as other important computational problems.

\subsection{Popular Matchings}
An instance $\ins$ of \emph{matchings under preferences} (\MatP) consists of a bipartite graph $G^{\ins} = (W\cup F,E^{\ins})$, where the vertices $W$ and $F$ are interpreted as sets of \emph{workers} and \emph{firms}, respectively.
Elements of the union $W\cup F$ are referred to as \emph{agents}.
Each agent $x\in W\cup F$ is equipped with a linear order $\succ_x^{\ins}$, their so-called \emph{preference order}, over $N_x^{\ins} := \{y\in W\cup F\colon \{x,y\}\in E^{\ins}\}$, that is, the set of their neighbors in $G$.
Note that most of our notation uses superscripts to indicate the instance, as we will soon consider multiple instances in parallel.
However, we may omit the superscript when the instance is clear from context.
Moreover, since the sets of workers and firms are identical across instances, we omit superscripts for them entirely.

Given a graph $G = (W\cup F,E)$, a \emph{matching} is a subset $M\subseteq E$ of pairwise disjoint edges, i.e., $m\cap m' = \emptyset$ for all $m, m'\in M$.
For a matching $M$, we call an agent $x\in W\cup F$ \emph{matched} if there exists $m\in M$ with $x\in m$, and \emph{unmatched}, otherwise.
If $x$ is matched, we denote their matching partner by $M(x)$.

Assume now that we are given an instance $\ins$ of \MatP together with an agent $x\in W\cup F$ and two matchings $M$ and $M'$.
We say that $x$ \emph{prefers} $M$ over $M'$ if $x$ is matched in $M$ and unmatched in $M'$, or if $x$ is matched in both $M$ and $M'$ and $M(x) \succ_x M'(x)$.

The notion of popularity depends on a majority vote of the agents between matchings according to their preferences.
Therefore, we define the following notation for a vote between matchings: 
\begin{align*}
    \vote^{\ins}_x(M,M') := \begin{cases}
        1 & x \text{ prefers } M \text{ over }M'\text,\\
        -1 & x \text{ prefers } M' \text{ over }M\text,\\
        0 & \text{otherwise.}\\
    \end{cases}
\end{align*}

Given a set of agents $N\subseteq W\cup F$, we define $\vote_N^{\ins}(M,M') := \sum_{x\in N} \vote^{\ins}_x(M,M')$.
The \emph{popularity margin} between $M$ and $M'$ is defined as $\pmarg^{\ins}(M,M') := \vote_{W\cup F}^{\ins}(M,M')$.
Now, a matching $M$ is said to be \emph{popular} with respect to instance $\ins$ if, for every matching~$M'$, it holds that $\pmarg^{\ins}(M,M')\ge 0$.
In other words, a matching is popular if it does not lose a majority vote among the agents in an election against any other matching.
Moreover, a matching $M$ is said to be \emph{stable} if for every edge $e = \{x,y\}\in E\setminus M$, it holds that $x$ is matched and prefers $M(x)$ to $y$ or $y$ is matched and prefers $M(y)$ to~$x$.
As discussed before, stable matchings are popular.
Finally, a matching $M$ is said to be \emph{dominant} if it is popular and if, for every matching $M'$ with $|M'| > |M|$, it holds that $\pmarg^{\ins}(M,M') > 0$.
Hence, a dominant matching is a maximum-size popular matching but the converse is not true \citep{CsKa17a}.

Finally, we state a characterization for popular matchings that was proved by
\citet[Theorem~1]{HuKa11a} and will be useful for verifying the popularity of matchings.

Given a matching $M$, we call an edge $e$ a $(-,-)$ edge if both endpoints prefer $e$ to $M$, a $(-,+)$-edge if exactly one endpoint prefers $e$ to $M$, a $(+,+)$ edge if neither endpoint prefers $e$ to $M$, and a $(0,0)$ edge if $e\in M$.
Let $G_M$ be the subgraph of the $(-,-)$ and $(+,-)$ and $(0,0)$ edges, i.e., the subgraph after deletion of the $(+,+)$ edges.

\begin{theorem}[\citet{HuKa11a}]\label{thm:pop_char}
    A matching $M$ is popular if and only if the following three conditions are satisfied in the subgraph $G_M$:
\begin{align*}
\text{(i)} &\quad \text{There is no alternating cycle with respect to } M \text{ that contains a } (-,-) \text{ edge.} \\
\text{(ii)} &\quad \text{There is no alternating path with respect to $M$ starting from an unmatched vertex of } M \\
           &\quad \text{that contains a } (-,-) \text{ edge.} \\
\text{(iii)} &\quad \text{There is no alternating path with respect to $M$ ending in } (-,-) \text{ edges.}
\end{align*}
\end{theorem}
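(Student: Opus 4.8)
The plan is to work entirely with the popularity margin $\pmarg(M,N) = \sum_{x \in W\cup F} \vote_x(M,N)$ and to reduce the global condition ``$\pmarg(M,N)\ge 0$ for all $N$'' to a purely local, per-component count of labelled edges. Since preferences are strict, every agent $x$ with $M(x)\neq N(x)$ (including agents matched in exactly one of $M,N$) casts a vote in $\{-1,+1\}$, while every agent with $M(x)=N(x)$ casts $0$. Hence only agents on the symmetric difference $M\oplus N$ contribute, and $M\oplus N$ decomposes into alternating paths and cycles $Q_1,\dots,Q_k$ with $\pmarg(M,N)=\sum_i \pmarg_{Q_i}$, where $\pmarg_Q$ is the sum of votes over the vertices of the component $Q$.

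First I would establish a closed formula for $\pmarg_Q$. For a non-matching edge $e=\{x,y\}$ write $s(e)\in\{-2,0,+2\}$ for the sum of its two endpoint labels, so that $(-,-)$, $(-,+)$, and $(+,+)$ edges contribute $-2$, $0$, and $+2$, respectively. Every vertex of $Q$ is incident to exactly one non-matching edge of $Q$, except for those endpoints whose terminal edge lies in $M$; each such excepted endpoint is matched in $M$ but unmatched in $N$ and thus votes $+1$. A short case analysis on the type of $Q$ then yields
\[
\pmarg_Q = -2a_Q + 2b_Q + c_Q ,
\]
where $a_Q,b_Q$ count the $(-,-)$ and $(+,+)$ edges of $Q$, and the correction $c_Q$ equals $0$ for a cycle or an augmenting path, $1$ for a path with one endpoint unmatched in $M$ and one unmatched in $N$, and $2$ for a path both of whose endpoints are matched in $M$. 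Restricting to $G_M$ kills the positive term, since $G_M$ has no $(+,+)$ edge and so $b_Q=0$ there.

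For the forward direction I would argue by contraposition: each forbidden configuration of (i)--(iii) lives in $G_M$, so flipping it produces a legitimate matching $N=M\oplus Q$ with $b_Q=0$ and $\pmarg(M,N)=\pmarg_Q<0$, witnessing that $M$ is not popular. Concretely, a cycle through a $(-,-)$ edge gives $\pmarg_Q=-2a_Q<0$; a path from an unmatched vertex through a $(-,-)$ edge has correction $0$ or $1$, giving $\pmarg_Q\in\{-2a_Q,\,-2a_Q+1\}<0$; and a path with both endpoints matched carrying two $(-,-)$ edges gives $\pmarg_Q=-2a_Q+2\le -2<0$.

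For the converse I would again contrapose: assuming $M$ is not popular, fix $N$ with $\pmarg(M,N)<0$ minimizing $|M\oplus N|$. Minimality forces $M\oplus N$ to be a single component $Q$, since otherwise some component already has negative margin and gives a smaller witness. The key remaining step --- and the one I expect to be the main obstacle --- is to confine the witness to $G_M$ by showing $Q$ has no $(+,+)$ edge. This follows from a local exchange: if $e=\{x,y\}\in N$ is a $(+,+)$ edge, then $x$ and $y$ each strictly prefer their $M$-partner, so deleting $e$ from $N$ leaves both of their votes at $+1$, preserves $\pmarg(M,N)<0$, and strictly shrinks $M\oplus N$, contradicting minimality. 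With $b_Q=0$, the formula then forces $Q$ to be exactly one of the three forbidden types --- a cycle with $a_Q\ge 1$, an unmatched-rooted path with $a_Q\ge 1$, or a matched-endpoint path with $a_Q\ge 2$ --- violating (i), (ii), or (iii), respectively. The only routine care needed throughout is checking that each flipped structure is a valid matching and tracking the endpoint corrections $c_Q$; the conceptual crux is the minimality-based cleaning that pushes the witness into $G_M$.
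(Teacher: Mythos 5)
The paper does not prove this statement at all---it imports it verbatim from \citet{HuKa11a}---so the only meaningful comparison is against the standard proof of that characterization, which your argument essentially reproduces. Your component-wise vote formula $\pmarg_Q = -2a_Q + 2b_Q + c_Q$ over the components of $M\oplus N$ is correct, the contrapositive constructions for (i)--(iii) go through (the ``routine care'' you flag is indeed routine: when a flipped path would end at a vertex $v$ matched in $M$, one extends it by the edge $\{v,M(v)\}$, and $M(v)$ cannot already lie on the path since every matched vertex on the path other than $v$ already has its $M$-partner there), and your minimality-plus-local-exchange step deleting $(+,+)$ edges from $N$ correctly confines a minimal witness to $G_M$, so the proof is sound.
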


\subsection{Robust Popularity}
We are interested in matchings that are popular across multiple instances.
For this, we consider a pair of instances $\iP$ of {\MatP} where we assume that they are defined for the same set of workers and firms.

Given such a pair $\iP$, a matching is called a \emph{robust popular matching} with respect to $\iA$ and $\iB$ if it is popular with respect to both $\iA$ and $\iB$ individually.
Note that this implies that a robust popular matching is, in particular, a matching for both $\iA$ and $\iB$ and, therefore, a subset of the edge set of both underlying graphs.
We are interested in the computational problem of computing robust popular matchings, conceptualized as follows.

	\begin{wbox}
		\RobustProb\\ 
		\textbf{Input:} Pair $\iP$ of instances of \MatP.\\ 
		\textbf{Question:} Does there exist a robust popular matching with respect to $\iA$ and $\iB$?
	\end{wbox}

Similarly, a matching is called a \emph{robust dominant matching} with respect to $\iA$ and $\iB$ if it is dominant with respect to both $\iA$ and $\iB$ individually.

	\begin{wbox}
		\RobustDom\\
		\textbf{Input:} Pair $\iP$ of instances of \MatP.\\ 
		\textbf{Question:} Does there exist a robust dominant matching with respect to $\iA$ and $\iB$?
	\end{wbox}

Finally, a maximum-size robust popular matching is a robust popular matching that has maximum cardinality among robust popular matchings.
Clearly, since dominant matchings are maximum-size popular matchings, a robust dominant matching is also a maximum-size robust popular matching.
However, as we show in \Cref{app:dominant}, the converse is not true: there exist instances that do not admit a robust dominant matching, but they contain a robust popular matching and, therefore, a maximum-size robust popular matching.

We study our algorithmic problems for the cases of input pairs, where the underlying graph or the underlying preferences remain the same across instances.

First, if the underlying graphs are the same, i.e., $G^{\iA} = G^{\iB}$, we say that $\iB$ is a \emph{perturbed instance} with respect to $\iA$. 
Hence, a perturbed instance only differs with respect to the preference orders of the agents over the identical sets of neighbors.
As a special case, we consider the case where agents simply swap two adjacent alternatives in their preference order.
Given an agent $x\in W\cup F$ and two preference orders $\succ_x^{\iA}, \succ_x^{\iB}$ over $N_x$, we say that $\succ_x^{\iB}$ evolves from $\succ_x^{\iA}$ by a \emph{swap} if there exists a pair of agents $y,y'\in N_x$ such that
\begin{itemize}
    \item $y \succ_x^{\iA} y'$,
    \item $y' \succ_x^{\iB} y$, and
    \item for all pairs of agents $z, z' \in N_x$ with $\{z,z'\}\neq \{y,y'\}$, it holds that $z \succ_x^{\iA} z'$ if and only if $z \succ_x^{\iB} z'$.
\end{itemize}
Note that the third properties implies that $y$ and $y'$ have to be adjacent in the preference order.
We say that $\iB$ evolves from $\iA$ by \emph{swaps} if, for all agents $x\in W\cup F$ with $\succ_x^{\iA} \neq \succ_x^{\iB}$, it holds that $\succ_x^{\iB}$ evolves from $\succ_x^{\iA}$ by a swap.

Second, we consider the case of identical preference orders.
More formally, we say that $\iB$ evolves from $\iA$ by \emph{altering availability} if, for every agent $x\in W\cup F$, there exists a preference order $\succ_x$ on $N_x^{\iA}\cup N_x^{\iB}$ such that for all $y,z\in N_x^{\iA}$, it holds that $y\succ_x^{\iA} z$ if and only if $y\succ_x z$ and for all $y,z\in N_x^{\iB}$, it holds that $y\succ_x^{\iB} z$ if and only if $y\succ_x z$.
In other words, the underlying graphs of the two input instances may differ but the preferences for common neighbors are identical.
As a special case, we say that $\iB$ evolves from $\iA$ by \emph{reducing availability} if $\iB$ evolves from $\iA$ by altering availability and $E^{\iB}\subseteq E^{\iA}$.

Before presenting our results, we illustrate central concepts in an example.

\begin{example}\label{ex:basic}
    Consider an instance $\iA$ of {\MatP} with $W = \{w_1, w_2, w_3, w_4\}$ and $F = \{f_1, f_2, f_3, f_4\}$ where the graph and preferences are defined as in \Cref{fig:example}.
    \begin{figure}
        \centering
        \begin{tikzpicture}
            \pgfmathsetmacro\graphspan{2.5}
            \pgfmathsetmacro\graphheight{1.2}
            \pgfmathsetmacro\nodesize{.6cm}
            \node[draw, circle, minimum size = \nodesize, label = 180:$f_2\succ^{\iA}_{w_1} f_1\succ^{\iA}_{w_1} f_3$] (w1) at (0,\graphheight) {$w_1$};
            \node[draw, circle, minimum size = \nodesize, label = 180:$f_1\succ^{\iA}_{w_2} f_3$] (w2) at (0,0) {$w_2$};
            \node[draw, circle, minimum size = \nodesize, label = 180:$f_2\succ^{\iA}_{w_3} f_1\succ^{\iA}_{w_3} f_4$] (w3) at (0,-\graphheight) {$w_3$};
            \node[draw, circle, minimum size = \nodesize, label = 180:$f_2\succ^{\iA}_{w_4} f_4$] (w4) at (0,-2*\graphheight) {$w_4$};

            \node[draw, circle, minimum size = \nodesize, label = 0:$w_3\succ^{\iA}_{f_1} w_1\succ^{\iA}_{f_1} w_2$] (f1) at (\graphspan,\graphheight) {$f_1$};
            \node[draw, circle, minimum size = \nodesize, label = 0:$w_3\succ^{\iA}_{f_2} w_1\succ^{\iA}_{f_2} w_4$] (f2) at (\graphspan,0) {$f_2$};            \node[draw, circle, minimum size = \nodesize, label=0:$w_1\succ^{\iA}_{f_3} w_2$] (f3) at (\graphspan,-\graphheight) {$f_3$};
            \node[draw, circle, minimum size = \nodesize, label=0:$w_4\succ^{\iA}_{f_4} w_3$] (f4) at (\graphspan,-2*\graphheight) {$f_4$};
            
            \draw[thick] (w3) -- (f4) -- (w4) -- (f2) -- (w3) -- (f1) -- (w2) -- (f3) -- (w1) -- (f2);
            \draw[thick] (w1) -- (f1);
        \end{tikzpicture}
        \caption{Instance $\iA$ in \Cref{ex:basic}. The perturbed instance~$\iB$ is obtained by having agent $w_1$ swap their preferences for $f_1$ and $f_3$.
        }
        \label{fig:example}
    \end{figure}
    The instance contains a unique stable matching $M_1 = \{\{w_1,f_1\}, \{w_2,f_3\}, \{w_3,f_2\}, \{w_4,f_4\}\}$.
    Moreover, there exists another popular matching $M_2 = \{\{w_1,f_3\}, \{w_2,f_1\}, \{w_3,f_2\}, \{w_4,f_4\}\}$. 
    Note that this matching is not stable, because of the edge $\{w_1,f_1\}$.
    Additionally, since $M_1$ and $M_2$ match all agents, both of them are dominant matchings.

    Now, consider the instance $\iB$ that is obtained from instance $\iA$ by having agent~$w_1$ change their preferences to $f_2\succ^{\iB}_{w_1} f_3\succ^{\iB}_{w_1} f_1$, and leaving everything else the same.
    Hence, $\iB$ evolves from $\iA$ by a swap in the preferences of agent~$w_1$.
    The unique popular (and, therefore, stable) matching in $\iB$ is $M_2$.
    Therefore, $\iP$ is a Yes-instance of \RobustProb.
    Moreover, since $M_2$ still matches all agents in $\iB$, it is also a robust dominant matching for $\iB$.
    Hence, $\iP$ is also a Yes-instance of \RobustDom. \hfill$\lhd$
\end{example}

\subsection{Related Algorithmic Problems}\label{sec:problems}

Our results make use of other algorithmic problems considered in the literature.
First, we consider \StabMat, the problem of computing a stable matching in a given instance of {\MatP}, which can be solved in polynomial time by the famous Deferred Acceptance Algorithm \citep{GaSh62a}.

	\begin{wbox}
		\StabMat\\
		\textbf{Input:} Instance $\ins$ of \MatP.\\
		\textbf{Task:} Compute a stable matching in $\ins$.
	\end{wbox}

Second, we consider \PopEdge, the problem of computing a popular matching in a given instance of {\MatP} containing a designated edge, or deciding that no such matching exists.
This problem can also be solved in polynomial time \citep{CsKa17a}.

	\begin{wbox}
		\PopEdge\\
		\textbf{Input:} Instance $\ins$ of \MatP and designated edge $e\in E^{\ins}$.\\
		\textbf{Task:} Compute a popular matching in $\ins$ containing $e$ or determine that there exists no such matching.
	\end{wbox}

Third, we consider \DomEdge, the analogous problem for dominant matchings.
Once again, this is a problem that can be solved in polynomial time \citep{CsKa17a}.\footnote{In fact, the algorithm by \citet{CsKa17a} for solving \PopEdge uses \DomEdge as an important subroutine.}

	\begin{wbox}
		\DomEdge\\
		\textbf{Input:} Instance $\ins$ of \MatP and designated edge $e\in E^{\ins}$.\\
		\textbf{Task:} Compute a dominant matching in $\ins$ containing $e$ or determine that there exists no such matching.
	\end{wbox}

Finally, we introduce two \NP-complete problems that we employ in our hardness results.
They are both related to finding stable matchings containing or not containing certain edges and vertices.
The first one is the problem of finding a popular matching that avoids two designated edges.
The problem is known to be \NP-complete \citep[Theorem~4.1]{FKPZ19a}.\footnote{The validity of the restriction that the two designated edges can be assumed to be disjoint immediately follows from the proof by \citet{FKPZ19a}.}

	\begin{wbox}
		\ForbEdge\\
		\textbf{Input:} Instance $\ins$ of \MatP and two designated edges $e,e'\in E^{\ins}$ where $e\cap e' = \emptyset$.\\
		\textbf{Task:} Compute a popular matching in $\ins$ not containing $e$ and $e'$, or determine that there exists no such matching.
	\end{wbox}

Moreover, we consider the problem of finding a popular matching that avoids one designated edge while matching a designated vertex.
The following restricted variant of this problem is known to be \NP-complete \citep[Theorem~1]{FPZ18a}.\footnote{The same result is also contained in the conference version of their paper \citep{FKPZ19a}, however, the construction given their does not satisfy the additional restrictions that we want to apply.}

	\begin{wbox}
		\ForbEdgeForceVert\\
		\textbf{Input:} Instance $\ins$ of \MatP, a designated edge $e = \{a,b\}\in E^{\ins}$, and a vertex $d\in W\cup F$, where $N_a^{\ins} = \{b\}$, $b$ only contains one neighbor~$c$ other than~$a$, and $b$ and~$c$ top-rank each other.\\
		\textbf{Task:} Compute a popular matching in $\ins$ not containing $e$ and covering $d$, or determine that there exists no such matching.
	\end{wbox}

Finally, we will use the following \NP-hard version of \tSAT.
    \begin{wbox}
		\mSAT\\
		\textbf{Input:} A \tSAT\ formula $\varphi = C_1\wedge \dots \wedge C_m$ over variables $\{ X_1,\dots, X_n\}$, satisfying that for all clauses $C_j=(\ell_{j_1}\vee \ell_{j_2}\vee \ell_{j_3})$, either all literals $\ell_{j_i}$ are positive, or all are negative. \\
		\textbf{Task:} Find a satisfying truth assignment $\Phi:[n]\to \{ \mathrm{True},\mathrm{False}\}$ or conclude that none exists.
	\end{wbox}
\NP-hardness of \mSAT\ was shown by~\cite{FKPZ19a}. While their hardness reduction contained clauses with less than 3 literals, we can assume that all clauses contain exactly 3 literals by duplicating literals within clauses if needed (e.g., we take $(X_1\vee X_2\vee X_2)$ instead of $(X_1\vee X_2)$).

\section{Results}

In this section, we present our results.

\subsection{Perturbations of One Agent}\label{sec:oneagent}

First, we consider instances of \RobustProb and \RobustDom with identical underlying graphs, where the perturbed instance only differs with respect to the preferences of a single agent.
We will eventually show that both problems admit a polynomial-time algorithm. 
For this, we perform two key steps.
First, we define a set of hybrid instances, which allow us to answer if there exists a robust popular matching that contains a given edge.
Second, we deal with the case of robust popular matchings where the agent with perturbed preferences remains unmatched.
Combining these insights with known algorithmic and structural results about popular matchings, we obtain a polynomial-time algorithm.

We start by defining our hybrid instances.
Consider a pair of \MatP instances $\iP$ 
where $\iB$ only differs from $\iA$ with respect to the preferences of agent~$x$.
Let $G = (W\cup F, E)$ be the underlying graph and consider an edge $e\in E$ with $x\in E$, say $e = \{x,y\}$.
Define $P^A = \{z\in W\cup F\colon z\succ_x^{\iA} y\}$ and $P^B = \{z\in W\cup F\colon z\succ_x^{\iB} y\}$, i.e., $P^A$ and $P^B$ are the agents preferred to $y$ by $x$ in instances $\iA$ and $\iB$, respectively.
Consider any linear order $\succ'$ of the neighbors $N_x$ of $x$ in $G$ that satisfies $z \succ' y$ if $z\in P^A\cup P^B$, as well as $y\succ' z$ if $z\in N_x\setminus (P^A\cup P^B\cup \{y\})$.
Hence, $\succ'$ is a preference order, where $P^A\cup P^B$ is ordered arbitrarily at the top, then agent~$y$, and finally all other neighbors of $x$ in an arbitrary order.

The \emph{hybrid instance} $\iH$ of $\iP$ with respect to $e$ is defined as the instance of {\MatP} where $\succ^{\iH}_z = \succ^{\iA}_z$ for all $z\neq x$ and $\succ^{\iH}_x = \succ'$.
Note that we illustrate hybrid instances later on in \Cref{ex:proofhybrid}, where we also illustrate our main proof.
We now prove two important lemmas that create a correspondence of popular matchings in $\iH$ and robust popular matchings for $\iP$.
The first lemma considers popular matchings in $\iH$ containing~$e$.

\begin{lemma}
    \label{lem:hybrid_complete_1}
    Let $M$ be a matching and $e\in M$ with $x\in e$.
    \begin{enumerate}
        \item If $M$ is popular for $\iH$, then it is robust popular with respect to $\iA$ and $\iB$.
        \item If $M$ is dominant for $\iH$, then it is robust dominant with respect to $\iA$ and $\iB$.
    \end{enumerate}
\end{lemma}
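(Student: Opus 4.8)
The plan is to reduce both parts to a single comparison between the popularity margins in the hybrid instance and in $\iA$, $\iB$. The central observation is that because $e=\{x,y\}\in M$, agent $x$ is matched to $y$ in $M$ under every instance, so the only preference information about $x$ that influences any vote against $M$ is how $x$ ranks $y$ against its partner in the competing matching. First I would record that for every agent $z\neq x$ the preference orders coincide, $\succ_z^{\iH}=\succ_z^{\iA}=\succ_z^{\iB}$ (the last equality because only $x$ is perturbed between $\iA$ and $\iB$), and hence $\vote_z^{\iH}(M,M')=\vote_z^{\iA}(M,M')=\vote_z^{\iB}(M,M')$ for every competing matching $M'$.

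The key step is a pointwise inequality for the vote of $x$: for every matching $M'$,
\[
\vote_x^{\iH}(M,M')\le \vote_x^{\iA}(M,M') \quad\text{and}\quad \vote_x^{\iH}(M,M')\le \vote_x^{\iB}(M,M').
\]
To prove this I would distinguish cases according to $M'(x)$. If $x$ is unmatched in $M'$ or $M'(x)=y$, all three votes agree (equal to $1$ or $0$, respectively), so the inequality is immediate. Otherwise $M'(x)=z$ for some $z\neq y$, and each of the three votes equals $+1$ or $-1$. By construction of $\succ'$, we have $y\succ' z$ if and only if $z\notin P^A\cup P^B$, which is exactly the conjunction $y\succ_x^{\iA} z$ and $y\succ_x^{\iB} z$. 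Thus $\vote_x^{\iH}(M,M')=1$ forces $\vote_x^{\iA}(M,M')=\vote_x^{\iB}(M,M')=1$, while $\vote_x^{\iH}(M,M')=-1$ is already the minimum possible value; in both situations the claimed inequality holds.

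Summing the vote identities for $z\neq x$ together with the vote inequality for $x$ yields, for every matching $M'$,
\[
\pmarg^{\iH}(M,M')\le \pmarg^{\iA}(M,M') \quad\text{and}\quad \pmarg^{\iH}(M,M')\le \pmarg^{\iB}(M,M').
\]
For Part~1, popularity of $M$ in $\iH$ gives $\pmarg^{\iH}(M,M')\ge 0$ for all $M'$, so both $\pmarg^{\iA}(M,M')\ge 0$ and $\pmarg^{\iB}(M,M')\ge 0$, i.e., $M$ is robust popular with respect to $\iA$ and $\iB$. For Part~2, dominance in $\iH$ additionally guarantees $\pmarg^{\iH}(M,M')>0$ whenever $|M'|>|M|$; the same margin inequalities then give strict positivity of $\pmarg^{\iA}$ and $\pmarg^{\iB}$ against every larger matching $M'$, and combined with Part~1 this shows that $M$ is robust dominant.

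I expect the margin inequality---specifically the vote comparison for $x$---to be the only real content; the remaining steps are routine bookkeeping over the vote decomposition. The point to get right is that the hybrid order $\succ'$ places \emph{both} $P^A$ and $P^B$ above $y$, so that $x$ voting for $M$ in $\iH$ (which requires $y$ to beat the competing partner under $\succ'$) is the \emph{stronger} condition, thereby implying that $x$ also votes for $M$ in each of the two original instances. No converse is needed, which is why a one-sided inequality on the margins suffices for both claims.
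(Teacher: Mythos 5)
Your proposal is correct and follows essentially the same route as the paper: votes of all agents other than $x$ coincide across $\iH$, $\iA$, $\iB$, and the vote of $x$ satisfies $\vote_x^{\iH}(M,M')\le \vote_x^{\iX}(M,M')$ for $X\in\{A,B\}$ because the hybrid order places all of $P^A\cup P^B$ above $y=M(x)$, yielding $\pmarg^{\iH}(M,M')\le\pmarg^{\iX}(M,M')$ and hence both claims. Your explicit case analysis for $x$'s vote is just a more detailed write-up of the paper's one-line observation; the substance is identical.
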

\begin{proof}
    Let $M$ be a matching and $e\in M$ with $x\in e$.
    Let $M'$ be any other matching and let $X \in \{A,B\}$ specify an instance.
    We want to relate the popularity margin between $M$ and $M'$ in $\iX$ with their popularity margin in $\iH$.
    First, let $z\in (W\cup F)\setminus \{x\}$.
    Then, since the preferences of $z$ are identical in $\iH$ and $\iX$, it holds that
    $\vote_z^{\iX}(M,M') = \vote_z^{\iH}(M,M')$.
    Second, let us consider the vote of agent~$x$.
    By construction of the hybrid instance, for all agents $z\in N_x$, it holds that
    $y\succ^{\iX}_x z$ whenever $y\succ^{\iH}_x z$.
    Hence, since $M(x) = y$, we can conclude that
    $\vote_x^{\iX}(M,M') \ge \vote_x^{\iH}(M,M')$.
    
    Combining these two insights, we obtain
    \begin{align*}
        &\pmarg^{\iX}(M,M') = \sum_{z\in W\cup F} \vote_z^{\iX}(M,M')\\ &\ge \sum_{z\in W\cup F} \vote_z^{\iH}(M,M') = \pmarg^{\iH}(M,M')\text.
    \end{align*}
    Hence, $\pmarg^{\iH}(M,M') \ge 0$ implies $\pmarg^{\iA}(M,M') \ge 0$ and $\pmarg^{\iB}(M,M') \ge 0$.
    Therefore, whenever $M$ is popular for $\iH$, then it is popular for both $\iA$ and $\iB$.
    
    Similarly, $\pmarg^{\iH}(M,M') > 0$ implies $\pmarg^{\iA}(M,M') > 0$ and $\pmarg^{\iB}(M,M') > 0$ and hence $M$ is dominant for both $\iA$ and $\iB$ whenever it is dominant for $\iH$.
\end{proof}

The next lemma shows that the converse is also true.

\begin{lemma}
    \label{lem:hybrid_complete_2}
    Let $M$ be a matching and $e\in M$ with $x\in e$.
    \begin{enumerate}
        \item If $M$ is robust popular with respect to~$\iA$ and~$\iB$, then it is popular for~$\iH$.
        \item If $M$ is robust dominant with respect to~$\iA$ and~$\iB$, then it is dominant for~$\iH$. 
    \end{enumerate}
    
\end{lemma}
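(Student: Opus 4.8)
The plan is to establish the reverse correspondence via an \emph{exact} comparison of votes rather than a one-sided inequality. Note first that the estimate used in the proof of \Cref{lem:hybrid_complete_1}, namely $\vote_x^{\iX}(M,M') \ge \vote_x^{\iH}(M,M')$, points the wrong way for the converse: it only yields $\pmarg^{\iX}(M,M') \ge \pmarg^{\iH}(M,M')$, from which popularity in $\iA$ and $\iB$ says nothing about $\iH$. Hence a uniform inequality cannot suffice, and I would instead select, \emph{for each competing matching separately}, the original instance whose vote of $x$ coincides exactly with the hybrid vote.

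Concretely, the heart of the argument is the following claim: for every matching $M'$, there exists $X\in\{A,B\}$, possibly depending on $M'$, such that $\vote_x^{\iH}(M,M') = \vote_x^{\iX}(M,M')$. Since $e=\{x,y\}\in M$ we have $M(x)=y$, so the vote of $x$ between $M$ and $M'$ is determined entirely by the position of $M'(x)$ relative to $y$, and in $\iH$ the agents ranked above $y$ are by construction precisely those in $P^A\cup P^B$. I would prove the claim by a short case analysis on $M'(x)$: if $x$ is unmatched in $M'$ or $M'(x)=y$, all three votes agree; if $M'(x)\in P^A\cup P^B$, then $x$ prefers $M'$ in $\iH$ (vote $-1$), and choosing $X=A$ when $M'(x)\in P^A$ and $X=B$ otherwise (so $M'(x)\in P^B$) reproduces this $-1$ vote in $\iX$; and if $M'(x)$ lies strictly below $y$ in $\iH$, then $M'(x)\notin P^A\cup P^B$, so $x$ prefers $M$ in every instance. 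The asymmetric cases $M'(x)\in P^A\setminus P^B$ and $M'(x)\in P^B\setminus P^A$ are the only subtle ones, and they are exactly where the union structure of the top block of the hybrid order is used.

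With the claim in hand the remainder is immediate. For an arbitrary $M'$, fix the corresponding $X$ and recall that every agent $z\neq x$ has identical preferences in $\iA$, $\iB$, and $\iH$, so $\vote_z^{\iH}(M,M') = \vote_z^{\iX}(M,M')$. Summing over all agents yields the exact equality $\pmarg^{\iH}(M,M') = \pmarg^{\iX}(M,M')$. For part~1, robust popularity of $M$ gives $\pmarg^{\iX}(M,M')\ge 0$, hence $\pmarg^{\iH}(M,M')\ge 0$; as $M'$ was arbitrary, $M$ is popular for $\iH$. For part~2, I would additionally restrict attention to matchings $M'$ with $|M'|>|M|$: robust dominance gives $\pmarg^{\iX}(M,M')>0$, hence $\pmarg^{\iH}(M,M')>0$, and together with the popularity just established this certifies that $M$ is dominant for $\iH$.

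The main obstacle is conceptual rather than computational: one must recognize that a single inequality between instances is unavailable and that the correct statement is a per-matching selection of the instance realizing the hybrid vote of $x$. Once the claim is phrased in this way, the verification reduces to a routine case distinction on where $M'(x)$ sits relative to $y$, and the passage from individual votes to popularity margins is just linearity of the vote over agents.
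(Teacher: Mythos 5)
Your proposal is correct and follows essentially the same route as the paper's proof: for each competing matching $M'$ you select $X\in\{A,B\}$ (via the location of $M'(x)$ relative to $y$, using that the top block of the hybrid order is exactly $P^A\cup P^B$) so that the vote of $x$, and hence the popularity margin, carries over from $\iH$ to $\iX$, and then conclude by robust popularity or dominance. The only cosmetic difference is that you obtain exact equality of margins in every case, whereas the paper settles for the (equally sufficient) inequality $\pmarg^{\iH}(M,M')\ge\pmarg^{\iA}(M,M')$ when $\vote_x^{\iH}(M,M')=1$.
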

\begin{proof}
    Let $M$ be a matching and $e\in M$ with $x\in e$.
    Let $M'$ be any other matching.
    We will compute the popularity margin between $M$ and $M'$ in~$\iH$.
    
    Let $z\in (W\cup F)\setminus \{x\}$.
    As in the proof of the previous lemma, since the preferences of $z$ are identical in $\iH$, $\iA$, and $\iB$, it holds that
    $\vote_z^{\iH}(M,M') = \vote_z^{\iA}(M,M') = \vote_z^{\iB}(M,M')$.
    
    We make a case distinction with respect to the vote of agent~$x$.
    If $\vote_x^{\iH}(M,M') = 1$, then the previous observation immediately implies that $\pmarg^{\iH}(M,M') \ge \pmarg^{\iA}(M,M') \ge 0$.
    If $\vote_x^{\iH}(M,M') = 0$, i.e., $M'(x) = M(x)$, then $\pmarg^{\iH}(M,M') = \pmarg^{\iA}(M,M')$.
    If $\vote_x^{\iH}(M,M') = -1$, then $M'(x) \succ^{\iH}_x M(x)$ where $M(x) = y$, and, therefore, $M'(x) \in P^A\cup P^B$.
    Let $X\in \{A,B\}$ with $M'(x)\in P^X$.
    Then, by definition, $M'(x) \succ^{\iX}_x M(x)$, and, therefore, $\vote_x^{\iX}(M,M') = -1$.
    Combining this with the votes of the other agents, it follows that $\pmarg^{\iH}(M,M') = \pmarg^{\iX}(M,M')$.

    Hence, we have shown that, for every $M'$, there exists $X\in \{A,B\}$ with $\pmarg^{\iH}(M,M') = \pmarg^{\iX}(M,M')$.
    Hence, $M$ is popular for $\iH$ if $M$ is robust popular with respect to $\iA$ and $\iB$, and $M$ is dominant for $\iH$ if $M$ is robust dominant with respect to $\iA$ and $\iB$.
\end{proof}

Combining \Cref{lem:hybrid_complete_1,lem:hybrid_complete_2}, we can find robust popular and robust dominant matchings containing a specific edge by solving an instance of \PopEdge or \DomEdge, respectively.

\begin{corollary}\label{cor:robustedge}
    The following are true:
    \begin{enumerate}
        \item The instance $\iP$ contains a robust popular matching containing edge $e$ if and only if \PopEdge for the hybrid instance with designated edge $e$ is a Yes-instance.
        \item The instance $\iP$ contains a robust dominant matching containing edge $e$ if and only if \DomEdge for the hybrid instance with designated edge $e$ is a Yes-instance.
    \end{enumerate}
\end{corollary}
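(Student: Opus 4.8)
The plan is to derive \Cref{cor:robustedge} directly from the two preceding lemmas by reading them as the two directions of an ``if and only if'' statement, so essentially no new combinatorial work is needed. I would treat the popular and dominant cases in parallel, since their proofs are structurally identical and differ only in which lemma clause is invoked. The key observation is that \Cref{lem:hybrid_complete_1,lem:hybrid_complete_2} together establish, for any matching $M$ with $e\in M$ and $x\in e$, the equivalence ``$M$ is popular for $\iH$ if and only if $M$ is robust popular for $\iP$'' (and analogously for dominant). The corollary then just rephrases this equivalence in terms of the solvability of \PopEdge and \DomEdge.

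Concretely, for the first item I would argue both directions. For the forward direction, suppose $\iP$ admits a robust popular matching $M$ containing $e$. Since $e\in M$ and $x\in e$, the second lemma (\Cref{lem:hybrid_complete_2}, part~1) applies and yields that $M$ is popular for $\iH$; as $M$ contains the designated edge $e$, this exhibits $M$ as a witness that \PopEdge on $\iH$ with designated edge $e$ is a Yes-instance. For the backward direction, suppose \PopEdge on $\iH$ with edge $e$ is a Yes-instance, witnessed by a popular matching $M$ containing $e$. Because $e\in M$ and $x\in e$, the first lemma (\Cref{lem:hybrid_complete_1}, part~1) applies and gives that $M$ is robust popular for $\iP$; this same $M$ is then a robust popular matching containing $e$, as required. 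The second item is obtained verbatim by substituting ``dominant'' for ``popular'' throughout and invoking part~2 of each lemma in place of part~1.

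The only point that requires a moment's care — and the thing I would be careful to state explicitly — is that the hypothesis $e\in M$ with $x\in e$ is exactly the common premise of both lemmas, and that ``containing the designated edge $e$'' in the definition of \PopEdge is precisely this condition. Thus the matching produced or consumed by \PopEdge is automatically in the scope where the lemmas apply, so there is no gap between the lemmas' hypotheses and the problem's I/O specification. There is no genuine obstacle here; the work is entirely in the two lemmas, and the corollary is a formal combination. I would therefore keep the proof to a few sentences, writing something like the following.

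\begin{proof}
    We prove the first statement; the second is obtained analogously by replacing ``popular'' with ``dominant'' and invoking the respective second parts of \Cref{lem:hybrid_complete_1,lem:hybrid_complete_2}.

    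Suppose first that $\iP$ contains a robust popular matching $M$ with $e\in M$. Since $x\in e$, \Cref{lem:hybrid_complete_2} implies that $M$ is popular for $\iH$. As $M$ contains $e$, it witnesses that \PopEdge for $\iH$ with designated edge $e$ is a Yes-instance.

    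Conversely, suppose that \PopEdge for $\iH$ with designated edge $e$ is a Yes-instance, and let $M$ be a popular matching for $\iH$ with $e\in M$. Since $x\in e$, \Cref{lem:hybrid_complete_1} implies that $M$ is robust popular with respect to $\iA$ and $\iB$. Hence $M$ is a robust popular matching for $\iP$ containing $e$.
\end{proof}
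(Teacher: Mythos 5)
Your proof is correct and follows exactly the route the paper intends: the corollary is stated there without a separate proof precisely because it is the immediate combination of \Cref{lem:hybrid_complete_1,lem:hybrid_complete_2}, with \Cref{lem:hybrid_complete_2} giving the forward direction and \Cref{lem:hybrid_complete_1} the backward one, just as you argue. Your explicit check that the hypothesis $e\in M$, $x\in e$ matches the I/O specification of \PopEdge and \DomEdge is a fair point of care but introduces nothing beyond what the paper takes as evident.
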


It remains to figure out whether there exists a robust popular and robust dominant matching that leaves the agent with perturbed preferences unmatched.
For this, we make another observation.

\begin{lemma}\label{lem:unmatched}
    Let $M$ be a matching that leaves agent $x$ unmatched. 
    Then, 
    \begin{enumerate}
        \item $M$ is popular for $\iA$ if and only if $M$ is popular for $\iB$, and 
        \item $M$ is dominant for $\iA$ if and only if $M$ is dominant for $\iB$.
    \end{enumerate}
\end{lemma}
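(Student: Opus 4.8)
The plan is to prove the stronger claim that $\pmarg^{\iA}(M,M') = \pmarg^{\iB}(M,M')$ for \emph{every} matching $M'$. Both stated equivalences then follow immediately, since popularity and dominance are defined purely in terms of these popularity margins. Note also that, because we are in the perturbation setting where the underlying graph is shared, $M$ and $M'$ are matchings simultaneously in $\iA$ and $\iB$, so there is no ambiguity in comparing the margins.

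To establish the equality, I would split the margin into the contribution of $x$ and the contributions of all other agents, exactly as in the proofs of \Cref{lem:hybrid_complete_1,lem:hybrid_complete_2}. For every $z\in (W\cup F)\setminus\{x\}$ the preference order $\succ_z$ is identical in $\iA$ and $\iB$, so $\vote_z^{\iA}(M,M') = \vote_z^{\iB}(M,M')$. Hence the only possible discrepancy between the two margins comes from the vote of agent $x$.

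The crux is then to argue that $x$'s vote does not depend on its preference order when $x$ is unmatched in $M$. Since $x$ is unmatched in $M$, the definition of \emph{prefers} implies that $x$ can never prefer $M$ to $M'$, so $\vote_x^{\iX}(M,M')\in\{-1,0\}$ for $X\in\{A,B\}$. Moreover, $\vote_x^{\iX}(M,M') = -1$ holds precisely when $x$ prefers $M'$ to $M$, which — again because $x$ is unmatched in $M$ — happens exactly when $x$ is matched in $M'$, and $\vote_x^{\iX}(M,M') = 0$ otherwise. In either case the value is determined solely by whether $x$ is covered by $M'$, and is therefore independent of $\succ_x$. Consequently $\vote_x^{\iA}(M,M') = \vote_x^{\iB}(M,M')$, and combining this with the agreement on all other agents yields $\pmarg^{\iA}(M,M') = \pmarg^{\iB}(M,M')$.

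With the margins equal for all $M'$, the first claim follows because $M$ is popular for $\iX$ if and only if $\pmarg^{\iX}(M,M')\ge 0$ for all $M'$, and the second because $M$ is dominant for $\iX$ if and only if it is popular and $\pmarg^{\iX}(M,M')>0$ for all $M'$ with $|M'|>|M|$ (and the cardinalities do not depend on the instance). The main, albeit mild, obstacle is exactly the preference-independence check for the unmatched agent: this is the single point where the differing preferences of $x$ could have affected the margin, and it is the definition of \emph{prefers} that rules this out.
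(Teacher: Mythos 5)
Your proposal is correct and follows essentially the same route as the paper's proof: both argue that for every matching $M'$, the vote of every agent other than $x$ is identical across $\iA$ and $\iB$ (since only $x$'s preferences differ), and that the unmatched agent $x$'s vote depends only on whether $x$ is covered by $M'$, yielding $\pmarg^{\iA}(M,M') = \pmarg^{\iB}(M,M')$ and hence both equivalences. Your write-up merely spells out the preference-independence of $x$'s vote, which the paper's proof asserts without elaboration.
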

\begin{proof}
    Let $M$ be a matching that leaves~$x$ unmatched.
    Then, for every matching $M'$, it holds that $\vote^{\iA}_x(M,M') = \vote^{\iB}_x(M,M')$.
    Hence, since $x$ is the only agent to perturb their preferences, it follows that $\pmarg^{\iA}(M,M') = \pmarg^{\iB}(M,M')$.
    Therefore, as $M'$ was an arbitrary matching, it holds that $M$ is popular (or dominant) for $\iA$ if and only if $M$ is popular (or dominant) for $\iB$.
\end{proof}

As a consequence, we can tackle this case by finding a popular matching in $\iA$ that leaves $x$ unmatched, or decide that no such matching exists.
This problem has a surprisingly easy solution:
It suffices to compute any stable matching.
The key insight is captured in the next lemma by \citet{CsKa17a}, a lemma that resembles the fundamental Rural Hospitals Theorem for stable matchings \citep{GaSo85b,Roth84b}.

\begin{lemma}[\citet{CsKa17a}]\label{lem:rural_popular}
    If an agent is unmatched in some popular matching, then it is unmatched in all stable matchings.
\end{lemma}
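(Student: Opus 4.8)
The plan is to prove the contrapositive: if an agent $a$ is matched in some stable matching $S$, then $a$ is matched in every popular matching $M$. Once this is established, the lemma follows by quantifying over all stable matchings, since an agent that is unmatched in a popular matching must then be unmatched in each stable matching. So I would fix a stable matching $S$ and a popular matching $M$, and suppose toward a contradiction that $a$ is matched in $S$ but unmatched in $M$.

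First I would pass to the symmetric difference $D = S \oplus M$, whose connected components are alternating paths and even cycles. Since $a$ is incident to exactly one edge of $D$ (its $S$-edge), the component $P$ containing $a$ is a path that starts at $a$ with an $S$-edge. The natural candidate for a matching that $M$ should not lose against is $M' := M \oplus P$, which coincides with $M$ off $P$ and with $S$ on $P$; because $P$ is a full component, $M'$ is indeed a matching. The goal is then to show $\pmarg(M, M') < 0$, contradicting the popularity of $M$.

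The heart of the argument is to bound the votes of the agents on $P$ using the stability of $S$. Writing $P = x_0, x_1, x_2, \dots$ with $x_0 = a$, the edges alternate $S, M, S, M, \dots$, and in $M'$ every internal agent receives its $S$-partner. Each internal $M$-edge $\{x_{2j-1}, x_{2j}\}$ joins two agents matched in $S$ to their path-neighbours; since this edge lies in $E \setminus S$ and $S$ is stable, it cannot block $S$, so at least one of its endpoints prefers its $S$-partner to its $M$-partner and hence contributes $-1$ to $\pmarg(M,M')$, while the other contributes at most $+1$. Thus every internal $M$-edge contributes a net of at most $0$. The endpoint $a$ is unmatched in $M$ and matched in $M'$, so $\vote_a(M,M') = -1$. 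Grouping the internal vertices of $P$ by the unique $M$-edge each lies on, the sum over these groups is at most $0$, and $a$ already contributes a strict $-1$.

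The hard part is the case analysis at the far end of $P$, which depends on its parity. If $P$ terminates with an $S$-edge, its far endpoint is also unmatched in $M$ and matched in $M'$, contributing another $-1$, so $\pmarg(M,M') \le -2$. If $P$ terminates with an $M$-edge $\{x_{t-1}, x_t\}$, then $x_t$ is unmatched in $S$ and contributes $+1$; here I would invoke stability in its unmatched form, namely that an $S$-unmatched $x_t$ adjacent to $x_{t-1}$ forces $x_{t-1}$ to prefer its $S$-partner, so this terminal edge still contributes a net of $0$ and the lone $-1$ from $a$ yields $\pmarg(M,M') \le -1$. In either case $\pmarg(M,M') < 0$, contradicting the assumed popularity of $M$. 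The only bookkeeping point to verify carefully is that the vertex-to-edge grouping partitions the agents of $P$ exactly, so that no vote is double-counted and the terminal vertices are treated separately from the $M$-edges.
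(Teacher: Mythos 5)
The paper does not actually prove \Cref{lem:rural_popular}; it imports the statement from \citet{CsKa17a} and uses it as a black box, so there is no in-paper proof to compare against. Your blind proof is correct and self-contained, and it is essentially the standard argument for this fact in the literature: pass to the contrapositive, take the component $P$ of $S\oplus M$ containing the agent $a$ (necessarily a path starting with an $S$-edge, since $a$ has degree one in the symmetric difference), and charge the votes along $P$ against the stability of $S$. The key steps all check out: each internal $M$-edge of $P$ is a non-edge of $S$, so by stability at least one endpoint prefers its $S$-partner (its $M'$-partner), making each such pair contribute at most $0$ to $\pmarg(M,M')$; the endpoint $a$ contributes $-1$; and the far end of $P$ contributes $-1$ if the path ends in an $S$-edge, while if it ends in an $M$-edge $\{x_{t-1},x_t\}$ with $x_t$ unmatched in $S$, stability forces $x_{t-1}$ to prefer $S(x_{t-1})$, so that terminal pair nets $0$. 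In both parities $\pmarg(M,M')\le -1<0$, contradicting popularity of $M$; your bookkeeping point (that $\{a\}$, the $M$-edges of $P$, and possibly the far endpoint partition the vertices of $P$, and all agents off $P$ vote $0$) is also sound. The one assertion you gloss over, that $M'=M\oplus P$ is a matching, is the standard fact that symmetric difference with a full alternating component preserves the matching property, and it holds here.
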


Moreover, for dominant matchings, we can check a dominant matching instead of a stable matching.
This follows directly from the results by \citet{CsKa17a}, but since they do not explicitly state this, we include a short proof.

\begin{lemma}\label{lem:dominant_rural_popular}
    All dominant matchings cover the same set of agents.
    Hence, if an agent is unmatched in some dominant matching, then it is unmatched in all dominant matchings.
\end{lemma}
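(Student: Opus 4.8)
The plan is to compare two arbitrary dominant matchings $M_1$ and $M_2$ by analysing the symmetric difference $M_1 \triangle M_2$, which decomposes into alternating paths and even cycles. Writing $U_i$ for the set of agents matched by $M_i$, the cycle components match the same agents in both matchings, so only path components can witness $U_1 \neq U_2$. I would show that \emph{no} path component can occur, so that $M_1 \triangle M_2$ consists of cycles only and hence $U_1 = U_2$ (equality of sizes then being a byproduct, consistent with dominant matchings being maximum-size popular). Each endpoint of a path component is matched in exactly one of $M_1, M_2$, which yields three types of paths: both endpoints matched only in $M_1$, both only in $M_2$, or one endpoint of each kind; I would rule out each type by a local argument on that component. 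The common mechanism is that flipping $M_i$ along a component $R$ yields a matching $M_i \triangle R$ that coincides with the other matching on $V(R)$ and with $M_i$ elsewhere, so that $\pmarg(M_1, M_1 \triangle R) = \vote_{V(R)}(M_1, M_2)$ and $\pmarg(M_2, M_2 \triangle R) = -\vote_{V(R)}(M_1, M_2)$, while $M_i \triangle R$ is larger, smaller, or of equal size according to whether $R$ augments $M_i$.

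For a path $R$ whose endpoints are both matched only in $M_2$, the flip $M_1 \triangle R$ is strictly larger than $M_1$, so dominance of $M_1$ forces $\vote_{V(R)}(M_1, M_2) > 0$; simultaneously $M_2 \triangle R$ is smaller than $M_2$, so popularity of $M_2$ forces $\vote_{V(R)}(M_1, M_2) \le 0$, a contradiction. The symmetric computation, using dominance of $M_2$ and popularity of $M_1$, eliminates paths whose endpoints are both matched only in $M_1$. Note that here it is essential to invoke the \emph{strict} inequality in the definition of dominance; mere popularity of both matchings would be consistent with such balanced augmenting paths, which is exactly why the analogous statement can fail for arbitrary popular matchings.

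I expect the remaining case, a path $R$ with one endpoint in $U_1 \setminus U_2$ and one in $U_2 \setminus U_1$, to be the main obstacle, since here both flips preserve cardinality: the strict inequality of dominance is inapplicable, and popularity of $M_1$ and $M_2$ only yields $\vote_{V(R)}(M_1, M_2) = 0$. I would finish by a parity argument: the two endpoints contribute $+1$ and $-1$ to $\vote_{V(R)}(M_1, M_2)$, whereas each of the odd number of internal vertices is matched in both matchings to two \emph{distinct} partners and therefore contributes $\pm 1$; thus $\vote_{V(R)}(M_1, M_2)$ is odd and cannot equal $0$, contradicting the value just derived. This rules out the last path type and proves $U_1 = U_2$, with the final sentence of the statement being a mere reformulation. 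An alternative route would invoke the correspondence of \citet{CsKa17a} between dominant matchings in $\ins$ and stable matchings in an auxiliary instance together with the Rural Hospitals Theorem, which is presumably the sense in which the claim ``follows from their results'', but I would prefer the self-contained symmetric-difference argument above.
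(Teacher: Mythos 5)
Your proposal is correct, and it takes a genuinely different route from the paper. The paper proves this lemma exactly by the alternative you set aside in your final sentence: it invokes the reduction of \citet{CsKa17a}, under which each worker $w$ corresponds to two copies $w_0,w_1$ and each firm to a single copy, dominant matchings of the original instance correspond to stable matchings of the reduced instance (with $w$ covered by a dominant matching if and only if both copies are covered by the corresponding stable matching), and the Rural Hospitals Theorem for stable matchings then transfers the invariance of the covered set to dominant matchings. That proof is three sentences long but is essentially a black box resting on properties of the reduction that the paper only sketches. Your symmetric-difference argument is self-contained and works directly from the definitions, and all three path cases do check out: augmenting paths for either matching die on the tension between the strict margin required by dominance of one matching and the nonnegative margin required by popularity of the other, and in the mixed case the path has an even number of edges, hence an odd number of internal vertices, each of which votes $\pm 1$ under strict preferences while the two endpoints cancel, so $\vote_{V(R)}(M_1,M_2)$ is odd and cannot equal the $0$ forced by popularity of both matchings. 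Beyond being elementary, your route buys two things the paper's does not: it isolates exactly where the strict inequality in the definition of dominance is used (only to kill augmenting paths), thereby making transparent why the statement fails for arbitrary popular matchings, which can have different sizes; and it nowhere uses bipartiteness, so it applies verbatim to dominant matchings in the roommates setting, where the reduction-based argument does not directly apply. The paper's route, in exchange, is shorter and reuses known machinery.
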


\begin{proof}
    \citet{CsKa17a} provide a reduction that shows a correspondence of dominant matchings in the original instance and stable matchings in a reduced instance.
    In their mapping, a worker $w$ in the original instance corresponds to two copies $w_0$ and $w_1$ in the reduced instance.
    In particular, $w$ is covered in a dominant matching in the original instance if and only if both $w_0$ and $w_1$ are covered in a stable matching in the reduced instance. 
    Moreover, a firm $f$ corresponds to a single copy $f_0$ and $f$ is covered by a dominant matching if and only if $f'$ is covered by a stable matching. 
    Since all stable matchings cover the same set of agents, it follows that dominant matchings in the original instance cover the same set of agents.
\end{proof}

We combine all our insights to state an algorithm for \RobustProb and \RobustDom if the perturbed input instance only differs with respect to the preference order of one agent.
We state the algorithm for robust popularity and highlight the changes for robust dominance in parenthesis.

\begin{algorithm}
\caption{Solving \RobustProb and \RobustDom under preference changes of one agent}\label{alg:robust}
  \begin{flushleft}
    \textbf{Input:} Pair $\iP$ of \MatP instances\\ 
    \textbf{Output:} Robust popular (or dominant) matching for $\iP$ or statement that no such matching exists
  \end{flushleft}
\begin{algorithmic}[1]
\State Compute stable (or dominant) matching $M$ for $\iA$
\If{$M$ leaves $x$ unmatched}
    \Return $M$.
\EndIf
\For{$e\in E$ with $x\in e$}
    \If{there exists a popular (or dominant) matching $M$ for $\iH$ with $e\in M$}
        \Return $M$
    \EndIf
\EndFor
\State\Return ``No robust popular (or dominant) matching exists''
\end{algorithmic}
\end{algorithm}

The algorithm first checks a stable (or dominant) matching to attempt finding a robust popular (or dominant) matching that leaves $x$ unmatched.
Then, it checks the hybrid instances to search for robust popular (or dominant) matchings where $x$ is matched.
The correctness and running time of this algorithm are captured in the main theorem of this section.

\begin{theorem}
    \RobustProb and \RobustDom can be solved in polynomial time if the perturbed input instance only differs with respect to the preference order of one agent.
\end{theorem}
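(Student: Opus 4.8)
The plan is to prove that \Cref{alg:robust} is correct and runs in polynomial time, which establishes the theorem. Throughout, let $x$ denote the unique agent whose preferences differ between $\iA$ and $\iB$. The argument splits into two cases according to whether a robust popular (or dominant) matching leaves $x$ unmatched or matches $x$, mirroring the two phases of the algorithm, and the bulk of the work has already been done in the preceding lemmas.

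First I would handle the case of matchings that leave $x$ unmatched, which corresponds to the opening step of the algorithm, and argue that a single stable matching is a complete certificate. Suppose the algorithm computes a stable matching $M$ for $\iA$. If $M$ leaves $x$ unmatched, then $M$ is popular for $\iA$ (stable matchings are popular) and, by \Cref{lem:unmatched}, popular for $\iB$ as well, so $M$ is robust popular and the algorithm correctly returns it. Conversely, if $M$ matches $x$, then by \Cref{lem:rural_popular} every popular matching for $\iA$ matches $x$, hence so does every robust popular matching; thus no robust popular matching leaves $x$ unmatched and it is safe to proceed. For robust dominance, the same reasoning applies with a dominant matching in place of the stable one, using \Cref{lem:dominant_rural_popular}: if the computed dominant matching leaves $x$ unmatched it is robust dominant by \Cref{lem:unmatched}, and otherwise all dominant matchings for $\iA$, and hence all robust dominant matchings, match $x$.

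Second I would treat matchings that match $x$, handled by the loop. Any robust popular (or dominant) matching that matches $x$ contains exactly one edge $e$ incident to $x$, so iterating over all such edges is exhaustive. For a fixed $e$, \Cref{cor:robustedge} reduces the existence of a robust popular (dominant) matching containing $e$ to solving \PopEdge (respectively \DomEdge) on the hybrid instance $\iH$. Since these problems are polynomial-time solvable, the loop finds a robust popular (dominant) matching matching $x$ if and only if one exists. Combining both phases, the algorithm outputs a robust popular (dominant) matching whenever one exists and reports failure otherwise.

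For the running time, computing a stable matching via Deferred Acceptance and computing a dominant matching are both polynomial, constructing each hybrid instance is immediate from its definition, the loop executes at most $|N_x|$ times, and each iteration solves one instance of \PopEdge or \DomEdge in polynomial time; the total is therefore polynomial. The only genuinely delicate point is the reduction of the unmatched case to a single stable (dominant) matching check: this relies crucially on the rural-hospitals-type invariance of \Cref{lem:rural_popular,lem:dominant_rural_popular}, without which one would a priori need to search over exponentially many candidate matchings to rule out leaving $x$ unmatched.
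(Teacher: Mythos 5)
Your proposal is correct and follows essentially the same approach as the paper's proof: it establishes correctness of \Cref{alg:robust} by handling the case of $x$ unmatched via a stable (or dominant) matching together with \Cref{lem:unmatched,lem:rural_popular,lem:dominant_rural_popular}, and the case of $x$ matched via \Cref{cor:robustedge} applied to each hybrid instance, with the same polynomial-time accounting. The only cosmetic difference is that you invoke \Cref{lem:rural_popular} in its contrapositive form (computed stable matching matches $x$ implies every popular matching does), whereas the paper argues from the hypothesized robust matching to the stable one; the underlying argument is identical.
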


\begin{proof}
    The polynomial running time follows because \StabMat, \PopEdge, and \DomEdge can be solved in polynomial time \citep{GaSh62a,CsKa17a}.

    Let us consider the correctness of \Cref{alg:robust}.
    First, note that, in the case of popularity, if \Cref{alg:robust} returns a matching in line~2, then it returns a popular matching for $\iA$ because stable matchings are popular \citep{Gard75a}.
    Hence, by \Cref{lem:unmatched}, it returns a robust popular (or dominant) matching in this case.
    Moreover, if \Cref{alg:robust} returns a matching in line~5, it is a robust popular (or dominant) matching according to \Cref{cor:robustedge}.
    Hence, if \Cref{alg:robust} returns a matching, then it is a robust popular (or dominant) matching.
    Thus, it remains to show that \Cref{alg:robust} returns a matching if there exists a robust popular (or dominant) matching in the considered instance.
    
    Assume, therefore, that there exists a robust popular (or dominant) matching $M$.
    Assume first that $M$ leaves $x$ unmatched.
    Then, in the case of popularity, by \Cref{lem:rural_popular}, every stable matching leaves $x$ unmatched, and \Cref{alg:robust} returns a matching in line~2.
    In the case of dominance, \Cref{lem:dominant_rural_popular} implies that all dominant matchings match the same vertices and hence a matching leaving $x$ unmatched is returned in line~2 as well.
    In addition, if $x$ is matched in $M$ by an edge $e$, then, by \Cref{cor:robustedge}, $\iH$ contains a popular (or dominant) matching containing~$e$.
    Hence, \Cref{alg:robust} returns a matching in line~5.
\end{proof}

We illustrate the proof as well as hybrid instances by continuing \Cref{ex:basic}.

\begin{example}\label{ex:proofhybrid}
    
    Recall that, in \Cref{ex:basic}, the two input instances $\iA$ and $\iB$ only differ with respect to the preferences of agent~$w_1$.
    Moreover, we have already seen that $M_1 = \{\{w_1,f_1\}, \{w_2,f_3\}, \{w_3,f_2\}, \{w_4,f_4\}\}$ is stable (and, therefore, popular) for $\iA$ but not popular for $\iB$ and that $M_2 = \{\{w_1,f_3\}, \{w_2,f_1\}, \{w_3,f_2\}, \{w_4,f_4\}\}$ is robust popular.
    
    Since the stable matching $M_1$ for $\iA$ matches $w_1$, we can conclude that there exists no robust popular matching that leaves $w_1$ unmatched.
    Hence, we have to consider the hybrid instances $\iH$ for $e\in \{\{w_1,f_1\},\{w_1,f_2\},\{w_1,f_3\}\}$, i.e., all edges incident to $w_1$.
    Interestingly, $\iA$ can serve as a hybrid instance for $e \in \{\{w_1,f_2\},\{w_1,f_3\}\}$ and $\iB$ can serve as a hybrid instance for $e = \{w_1,f_1\}$.
    In fact, this follows from a more general observation concerning hybrid instances for an instance pair $(\iA,\iB)$ where $\iB$ evolves from $\iA$ by a downshift of agent~$y$ in the preference order of agent~$x$.
    Whenever this is the case, $\iB$ serves as a hybrid instance for $e = \{x,y\}$, while $\iA$ serves as the hybrid instance for all other edges containing~$x$.
    
    Now, since $M_2$ is popular for $\iA$, it is a popular matching containing $e$ for the hybrid instance~$\iH$ with $e = \{w_1,f_3\}$.
    Hence, \Cref{alg:robust} finds the robust popular matching for $\iP$.\hfill$\lhd$
\end{example}

Finally, by straightforward extensions of the techniques developed in this section, we can generalize our result for the case of more than two instances that all differ only with respect to the preferences of one agent~$x$.
To find a robust popular (or dominant) matching containing a specific edge $e = \{x,y\}$, we define the preference order of~$x$ in a generalized hybrid instance 
by putting the agents preferred to $y$ by $x$ in \emph{any} input instance above $y$. 
This ensures that whenever we contest the popularity (or dominance) of a matching in the hybrid instance with a matching where~$x$ receives a better partner~$z$, then the popularity (or dominance) of this matching is also contested in the input instances that have $z$ ranked above $y$.

\begin{theorem}\label{thm:multiple_instances}
    There exists a polynomial-time algorithm for the following problem:
    Given a collection of {\MatP} instances $(\ins_1,\dots, \ins_k)$, which are all defined for the same underlying graph and differ only with respect to the preferences of a single agent, does there exist a matching that is popular (or dominant) for $\ins_i$ for all $1\le i\le k$?
\end{theorem}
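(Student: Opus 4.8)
The plan is to mirror the two-instance argument, replacing the single hybrid instance with a \emph{generalized hybrid instance} that aggregates information from all $k$ input instances. Fix an edge $e = \{x,y\}$ incident to the varying agent $x$, and let $P = \bigcup_{i=1}^{k} \{z \in N_x : z \succ_x^{\ins_i} y\}$ be the set of agents that $x$ prefers to $y$ in \emph{at least one} input instance. I would define $\mathcal{H}_e$ as the instance on the common graph $G$ in which every agent $z \neq x$ keeps their (instance-independent) preference order, while $x$ adopts any linear order $\succ'$ that places all of $P$ at the top in arbitrary order, then $y$, then the remaining neighbors. The two structural lemmas then generalize with essentially the same proofs.

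For the analog of \Cref{lem:hybrid_complete_1}, suppose $e \in M$ and $M$ is popular for $\mathcal{H}_e$. For any competing matching $M'$ and any index $i$, the votes of all agents $z \neq x$ coincide in $\ins_i$ and $\mathcal{H}_e$. For $x$ itself, whenever $y \succ_x^{\mathcal{H}_e} z$ we have $z \notin P$, hence $z$ lies below $y$ in \emph{every} input instance, so $y \succ_x^{\ins_i} z$; since $M(x) = y$ this yields $\vote_x^{\ins_i}(M,M') \ge \vote_x^{\mathcal{H}_e}(M,M')$. Summing gives $\pmarg^{\ins_i}(M,M') \ge \pmarg^{\mathcal{H}_e}(M,M') \ge 0$ for every $i$, and the strict version transfers dominance. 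Conversely, for the analog of \Cref{lem:hybrid_complete_2}, I would show that for each $M'$ there is a single index $i$ with $\pmarg^{\mathcal{H}_e}(M,M') = \pmarg^{\ins_i}(M,M')$ (or $\ge$): if $\vote_x^{\mathcal{H}_e}(M,M') \in \{0,+1\}$ then $\ins_1$ works, while if $\vote_x^{\mathcal{H}_e}(M,M') = -1$ then $M'(x) \succ_x^{\mathcal{H}_e} y$ forces $M'(x) \in P$, so $M'(x) \succ_x^{\ins_i} y$ for some witnessing $i$, for which $\vote_x^{\ins_i}(M,M') = -1$ and the margins agree. Robust popularity (dominance) across all $\ins_i$ therefore implies popularity (dominance) in $\mathcal{H}_e$. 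Together these give the exact analog of \Cref{cor:robustedge}: the collection admits a robust popular (dominant) matching containing $e$ if and only if \PopEdge (resp.\ \DomEdge) on $\mathcal{H}_e$ with designated edge $e$ is a Yes-instance.

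It remains to handle matchings leaving $x$ unmatched. \Cref{lem:unmatched} generalizes immediately: if $x$ is unmatched in $M$ then $\vote_x^{\ins_i}(M,M')$ is independent of $i$ for every $M'$, so $M$ is popular (dominant) for one $\ins_i$ if and only if it is for all of them. Hence, as in \Cref{alg:robust}, I would compute a stable (dominant) matching in $\ins_1$ and invoke \Cref{lem:rural_popular} (resp.\ \Cref{lem:dominant_rural_popular}) to decide whether a robust solution leaving $x$ unmatched exists. The final algorithm computes this stable or dominant matching, and otherwise loops over the polynomially many edges $e$ incident to $x$, solving \PopEdge or \DomEdge on each $\mathcal{H}_e$; since $P$ is computable in time polynomial in $k$ and the instance size, the whole procedure runs in polynomial time.

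The main obstacle is verifying the backward correspondence (the analog of \Cref{lem:hybrid_complete_2}) in the many-instance setting: one must ensure that a \emph{single} input instance witnesses every potential loss of $M$ in $\mathcal{H}_e$. This is exactly what the union structure of $P$ guarantees — a partner ranked above $y$ in the aggregated order is ranked above $y$ in some genuine instance — so the aggregation creates no spurious popularity. The forward direction relies on the dual fact that anything ranked below $y$ in $\mathcal{H}_e$ is ranked below $y$ in all instances, so the two uses of the union are complementary.
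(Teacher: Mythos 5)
Your proposal is correct and follows essentially the same route as the paper: the paper also defines a generalized hybrid instance by placing the union over all input instances of the agents preferred to $y$ above $y$ in $x$'s order, and then argues that the two-instance lemmas and algorithm carry over verbatim. In fact, the paper only sketches this extension as "straightforward," and your write-up fleshes out exactly the details (the generalized analogs of \Cref{lem:hybrid_complete_1,lem:hybrid_complete_2,lem:unmatched} and the modified \Cref{alg:robust}) that the paper leaves implicit.
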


\subsection{Perturbations of Two Agents}\label{sec:2hard}

In this section, we continue the consideration of instance pairs with the identical underlying graph.
While we have previously seen a polynomial-time algorithm for solving \RobustProb and \RobustDom if the perturbed instance only differs by a single agent permuting their preferences, we now allow several agents to change their preference orders.
In this case, we obtain two striking hardness results:
The first one holds for \RobustProb and \RobustDom and the case when two agents from the same side permuting their preferences. 
Inspecting the proof, one can see that one of these agents only has two neighbors and therefore has only two ways to represent their preferences.
The other agent has several neighbors, but their preference perturbation can be designed to be a preference reversal.
The second hardness result investigates when two agents only perform the simplest perturbation operation of a swap. 
We obtain a hardness result for \RobustProb, when one worker and one firm each perform a swap.
Both results demonstrate a complexity dichotomy with respect to the number of agents perturbing their preferences.

\subsubsection{Perturbations of Two Same-Type Agents}

The proof of our first hardness result is inspired by the hardness result of \ForbEdge\ by~\cite{FKPZ19a}.

\begin{theorem}\label{thm:twooneside}
    \RobustProb\ and \RobustDom\ are \NP-complete, even if $\iA$ and $\iB$ differ for only two agents on the same side. 
\end{theorem}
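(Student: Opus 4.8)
The plan is to prove \NP-completeness by reducing from \ForbEdge, which asks for a popular matching avoiding two disjoint designated edges $e$ and $e'$. The intuition is that the polynomial-time result of \Cref{sec:oneagent} crucially used the fact that a single perturbing agent could be ``collapsed'' into one hybrid instance via a \PopEdge call; with two perturbing agents this collapse fails, and we want to leverage exactly the expressive power that two independent binary choices provide. My strategy is to use each of the two same-side perturbing agents as a ``switch'' that, across the two instances $\iA$ and $\iB$, forces a robust popular matching to avoid a specified edge. Concretely, I would take an instance $\ins$ of \ForbEdge\ with forbidden disjoint edges $e=\{a,b\}$ and $e'=\{a',b'\}$ and build a pair $\iP$ so that the agents whose preferences differ are the two endpoints on one side, say $a$ and $a'$ (both workers, or both firms). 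The key design goal is a gadget ensuring that no robust popular matching can use $e$ or $e'$, while every popular matching of $\ins$ avoiding both extends to a robust popular matching of $\iP$, and conversely.

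The core mechanism I would aim for is an edge-forcing trick built from the swap of preferences. For agent $a$, I would introduce one or two auxiliary partners and arrange $a$'s preferences in $\iA$ and $\iB$ so that $b$ is ranked just above an auxiliary partner in one instance and just below it in the other (a preference reversal, as the theorem statement hints is allowed for one of the agents). The point is to make the edge $e=\{a,b\}$ behave like a $(+,+)$ or $(-,-)$ edge inconsistently across the two instances: if a matching used $e$, then in exactly one of $\iA,\iB$ there would be a popularity-violating augmenting structure (an alternating path or cycle through a $(-,-)$ edge, in the language of \Cref{thm:pop_char}) involving the auxiliary partner, so $e$ cannot appear in any robust popular matching. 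The same construction applied to $a'$ and the auxiliary gadget for $b'$ forbids $e'$. Since only $a$ and $a'$ have different preferences in $\iA$ versus $\iB$, and both lie on the same side, this meets the two-same-side-agent restriction. The hardness-result hint that ``one of these agents only has two neighbors'' suggests the cleaner route is to make $a$ a degree-two vertex whose two possible matches encode the binary avoid/use decision, with the other agent carrying a full preference reversal.

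The verification has two directions. For soundness, I would show that any robust popular matching $M$ of $\iP$, when restricted to the original agents, is popular in $\ins$ and avoids both $e$ and $e'$; popularity transfers because $M$ is popular in (say) $\iA$ and the gadget agents can be matched consistently, and the avoidance is exactly what the switching gadgets enforce. For completeness, I would take a popular matching of $\ins$ avoiding $e,e'$, match the auxiliary agents in their forced stable way, and verify via \Cref{thm:pop_char} that the resulting matching satisfies conditions (i)--(iii) in \emph{both} $\iA$ and $\iB$ — checking that the reversal of $a$'s (and $a'$'s) preferences does not create any new $(-,-)$ edge that lies on a forbidden alternating cycle or path. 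To get the \RobustDom\ claim simultaneously, I would ensure the gadget agents are always matched (so the matching is maximum-size in the relevant sense) and that the same alternating-structure checks rule out defeat by any larger matching; making the reduction preserve dominance is where I expect to spend extra care, possibly by padding each agent's list so that every popular matching is automatically maximum-size and hence dominant.

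The main obstacle will be the two directions of the popularity verification under the \emph{reversed} preferences: I must guarantee that the preference reversal of $a$ (and $a'$) forbids $e$ and $e'$ robustly — i.e., in at least one instance no matter what — without accidentally destroying the popularity of legitimate matchings in the \emph{other} instance. Balancing these competing requirements is delicate, because the auxiliary gadget must simultaneously (a) penalize use of the forbidden edge in one instance and (b) be harmless in the other. I would manage this by keeping the gadget local (a short path or a single auxiliary worker/firm per forbidden edge whose top choice is tightly coupled to $a$ or $b$), so that any new alternating structure introduced by the reversal is confined to the gadget and can be analyzed by a finite case check. Membership in \NP\ is immediate, since popularity (and dominance) of a given matching in each of the two instances can be verified in polynomial time via \Cref{thm:pop_char}.
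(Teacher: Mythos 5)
There is a genuine gap at the core of your construction: the ``switch'' gadget you propose cannot forbid an arbitrary edge of a \ForbEdge instance. Suppose you attach $\ell_a$ (and a fallback $r_a$) to $a$ and rank $\ell_a$ just below $b$ in $\iA$ and just above $b$ in $\iB$. If a robust popular matching $M$ contains $e=\{a,b\}$, then $\ell_a$ must be matched to $r_a$ (otherwise $\{\ell_a,r_a\}$ is a $(-,-)$ edge at an unmatched vertex), and the natural defeating move in $\iB$ — replace $\{a,b\}$ and $\{\ell_a,r_a\}$ by $\{a,\ell_a\}$ — gains votes from $a$ and $\ell_a$ but loses the votes of the now-unmatched $b$ and $r_a$: the popularity margin is $0$, not positive, so $e$ is \emph{not} forbidden. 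Dropping $r_a$ fixes this ($+1$ margin), but then the forward direction breaks: a popular matching of $\ins$ avoiding $e$ may leave $a$ unmatched, and then $\{a,\ell_a\}$ is a $(-,-)$ edge at an unmatched vertex in \emph{both} instances, so the extension is not popular in $\iA$ either. Escaping this tension is exactly what the special structure of \ForbEdgeForceVert provides: there $N_a^{\ins}=\{b\}$ and $b$ has a mutually top-ranked alternative $c$, so when $e\in M$ one can rematch $b$ to $c$ and collect votes from $a$, $\ell_a$, $b$, \emph{and} $c$ against only $M(c)$ and $r_a$ — this is precisely how the paper's \Cref{thm:hardness} works. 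Plain \ForbEdge gives you no such structure around $e$ and $e'$ (the paper's footnote only licenses assuming the two edges are disjoint, not that their endpoints are leaves with top-ranked alternatives), and you cannot simply ``make $a$ a degree-two vertex,'' since that would require proving \NP-hardness of a restricted variant of \ForbEdge that is not established anywhere.

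The paper's actual proof takes an entirely different route: a direct reduction from \mSAT with chains of literal gadgets, in which the two same-side perturbing agents ($s_1$, with only two neighbors, and $v_2$, whose list is fully reversed) act as \emph{global} controllers of alternating-path structure in the sense of \Cref{thm:pop_char}, rather than as local edge-forbidding switches; the constructed matching covers all agents except one ($t_3$), so popularity and dominance coincide and the three-way equivalence (satisfiable $\Leftrightarrow$ robust dominant exists $\Leftrightarrow$ robust popular exists) yields \NP-hardness of \RobustProb and \RobustDom simultaneously. Your proposal would also leave the \RobustDom claim unresolved even if the gadget worked, since \ForbEdge concerns popular (not dominant) matchings and ``padding each agent's list'' so that every popular matching becomes dominant is not a construction that preserves the correspondence you need.
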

\begin{proof}
Membership in \NP{} is straightforward, as we can verify if a matching $M$ is popular or dominant in polynomial time. 
Indeed, a robust popular (or robust dominant) matching with respect to two given input instances of {\MatP} serves as a polynomial-size certificate for a Yes-instance.
We can verify it by simply checking whether the matching is popular (or dominant) in both instances in polynomial time (\citealp[Theorem~9]{BIM10a}; \citealp[Lemma~1]{CsKa17a}).

    For \NP-hardness, we perform a reduction from \mSAT, as defined in \Cref{sec:problems}.
    For this, consider an instance $\ins$ of \mSAT\ with $m$ clauses given by $\varphi = C_1\wedge \dots \wedge C_m$ and $n$ variables $\{ X_1,\dots, X_n\}$.

    We create an instance pair $\iP$ of \RobustProb\ and \RobustDom\ (we use the same reduction for both). The reduction is illustrated in \Cref{fig:constrCombined} and formally described as follows.
    
\begin{figure}
    \centering
    
	\newcommand{\rankcolor}{black}
	\newcommand{\iAcolor}{myblue}
	\newcommand{\iBcolor}{myred}

	\resizebox{1\textwidth}{!}{
	\begin{tikzpicture}
	  \pgfmathsetmacro{\vertexsize}{0.6}
	  \pgfmathsetmacro{\gadgetrad}{1.2}
	  \pgfmathsetmacro{\edgedist}{.8}
	  \pgfmathsetmacro{\gadgetdist}{3.7}
	  \pgfmathsetmacro{\gadgetstretch}{7}

	  \foreach \k/\mult in {1/2} {
	    \pgfmathsetmacro{\y}{\mult * \gadgetdist}
	    \foreach \l/\x in {1/0,2/\gadgetstretch, 3/2*\gadgetstretch} {
	      \coordinate (p\l\k) at  (\x,\y) {};
	      \node[draw, rectangle, inner sep=0, minimum size=\vertexsize cm, minimum height=\vertexsize cm] (b\l\k) at ($(p\l\k)+(-\gadgetrad,0)$) {$b_{\l}^{\k}$};
	      \node[draw, rectangle, inner sep=0, minimum size=\vertexsize cm, minimum height=\vertexsize cm] (e\l\k) at ($(p\l\k)+(0,\gadgetrad)$) {$e_{\l}^{\k}$};
	      \node[draw, circle, inner sep=0, minimum size=\vertexsize cm, minimum height=\vertexsize cm] (c\l\k) at ($(p\l\k)+(0,-\gadgetrad)$) {$c_{\l}^{\k}$};
	      \node[draw, circle, inner sep=0, minimum size=\vertexsize cm, minimum height=\vertexsize cm] (d\l\k) at ($(p\l\k)+(\gadgetrad,0)$) {$d_{\l}^{\k}$};
	      \node[draw, rectangle, inner sep=0, minimum size=\vertexsize cm, minimum height=\vertexsize cm] (f\l\k) at ($(p\l\k)+(\gadgetstretch/2,0)+(-\edgedist,0)$) {$f_{\l}^{\k}$};
	      \node[draw, circle, inner sep=0, minimum size=\vertexsize cm, minimum height=\vertexsize cm] (a\l\k) at ($(p\l\k)+(\gadgetstretch/2,0)+(\edgedist,0)$) {$a_{\l}^{\k}$};

	      \draw (b\l\k) edge node[yshift = -.15cm,pos = .1] {\color{\rankcolor}\footnotesize $3$} node[yshift = -.15cm,pos = .95] {\color{\rankcolor}\footnotesize $1$} (d\l\k);
	      \draw (c\l\k) edge node[xshift =.15cm,pos = .1] {\color{\rankcolor}\footnotesize $1$} node[xshift =.15cm,pos = .9] {\color{\rankcolor}\footnotesize $2$} (e\l\k);
	      \draw (d\l\k) edge node[xshift =.1cm,yshift = .1cm, pos = .1] {\color{\rankcolor}\footnotesize $2$} node[xshift =.1cm,yshift = .1cm,pos = .9] {\color{\rankcolor}\footnotesize $1$} (e\l\k);
	      \draw (d\l\k) edge node[yshift = -.15cm,pos = .1] {\color{\rankcolor}\footnotesize $3$} node[yshift = -.15cm,pos = .9] {\color{\rankcolor}\footnotesize $1$} (f\l\k);
	      \draw (f\l\k) edge node[yshift = -.15cm,pos = .1] {\color{\rankcolor}\footnotesize $2$} node[yshift = -.15cm,pos = .92] {\color{\rankcolor}\footnotesize $1$} (a\l\k);
      
	    }
    
	      \draw (b1\k) edge node[below,pos = -.1] {\color{\rankcolor}\footnotesize $1$} node[below,pos = .7] {\color{\rankcolor}\footnotesize $3$} (c1\k);
	      \draw (b2\k) edge node[below,pos = -.1] {\color{\rankcolor}\footnotesize $1$} node[below,pos = .7] {\color{\rankcolor}\footnotesize $3$} (c2\k);
	      \draw (b3\k) edge node[below,pos = -.1] {\color{\rankcolor}\footnotesize $1$} node[below,pos = .7] {\color{\rankcolor}\footnotesize $2$} (c3\k);
	      \draw (a1\k) edge node[yshift = -.15cm,pos = .1] {\color{\rankcolor}\footnotesize $2$} node[yshift = -.15cm,pos = .85] {\color{\rankcolor}\footnotesize $2$} (b2\k);
	      \draw (a2\k) edge node[yshift = -.15cm,pos = .1] {\color{\rankcolor}\footnotesize $2$} node[yshift = -.15cm,pos = .85] {\color{\rankcolor}\footnotesize $2$} (b3\k);
    
	  }

	  \foreach \k/\mult in {2/1} {
	    \pgfmathsetmacro{\y}{\mult * \gadgetdist}
	    \foreach \l/\x in {1/0,2/\gadgetstretch, 3/2*\gadgetstretch} {
	      \coordinate (p\l\k) at  (\x,\y) {};
	      \node[draw, rectangle, inner sep=0, minimum size=\vertexsize cm, minimum height=\vertexsize cm] (b\l\k) at ($(p\l\k)+(225:\gadgetrad)$) {$\bar b_{\l}^{\k}$};
	      \node[draw, rectangle, inner sep=0, minimum size=\vertexsize cm, minimum height=\vertexsize cm] (e\l\k) at ($(p\l\k)+(135:\gadgetrad)$) {$\bar e_{\l}^{\k}$};
	      \node[draw, circle, inner sep=0, minimum size=\vertexsize cm, minimum height=\vertexsize cm] (c\l\k) at ($(p\l\k)+(45:\gadgetrad)$) {$\bar c_{\l}^{\k}$};
	      \node[draw, circle, inner sep=0, minimum size=\vertexsize cm, minimum height=\vertexsize cm] (d\l\k) at ($(p\l\k)+(315:\gadgetrad)$) {$\bar d_{\l}^{\k}$};
	      \node[draw, rectangle, inner sep=0, minimum size=\vertexsize cm, minimum height=\vertexsize cm] (f\l\k) at ($(p\l\k)+(\gadgetstretch/2,0)+(-\edgedist,0)$) {$\bar f_{\l}^{\k}$};
	      \node[draw, circle, inner sep=0, minimum size=\vertexsize cm, minimum height=\vertexsize cm] (a\l\k) at ($(p\l\k)+(\gadgetstretch/2,0)+(\edgedist,0)$) {$\bar a_{\l}^{\k}$};

	      \draw (b\l\k) edge node[xshift =-.1cm,yshift = .1cm,pos = .1] {\color{\rankcolor}\footnotesize $2$} node[xshift =-.1cm,yshift = .1cm,pos = .9] {\color{\rankcolor}\footnotesize $2$} (c\l\k);
	      \draw (c\l\k) edge node[yshift =.15cm,pos = .1] {\color{\rankcolor}\footnotesize $1$} node[yshift =.15cm,pos = .9] {\color{\rankcolor}\footnotesize $3$} (e\l\k);
	      \draw (d\l\k) edge node[xshift =.1cm,yshift = .1cm, pos = .1] {\color{\rankcolor}\footnotesize $4$} node[xshift =.1cm,yshift = .1cm,pos = .94] {\color{\rankcolor}\footnotesize $2$} (e\l\k);
	      \draw (d\l\k) edge node[yshift = -.15cm,pos = .1] {\color{\rankcolor}\footnotesize $3$} node[yshift = -.15cm,pos = .9] {\color{\rankcolor}\footnotesize $2$} (f\l\k);
	      \draw (f\l\k) edge node[yshift = -.15cm,pos = .1] {\color{\rankcolor}\footnotesize $1$} node[yshift = -.15cm,pos = .92] {\color{\rankcolor}\footnotesize $1$} (a\l\k);
      
	    }
	      \draw (b1\k) edge node[yshift = .15cm,pos = .1] {\color{\rankcolor}\footnotesize $4$} node[yshift = .15cm,pos = .95] {\color{\rankcolor}\footnotesize $2$} (d1\k);
	      \draw (b3\k) edge node[yshift = .15cm,pos = .1] {\color{\rankcolor}\footnotesize $4$} node[yshift = .15cm,pos = .95] {\color{\rankcolor}\footnotesize $2$} (d3\k);
	      \draw (b2\k) edge node[yshift = .15cm,pos = .1] {\color{\rankcolor}\footnotesize $5$} node[yshift = .15cm,pos = .95] {\color{\rankcolor}\footnotesize $2$} (d2\k);
	      \draw (a1\k) edge node[yshift = -.15cm,pos = .1] {\color{\rankcolor}\footnotesize $2$} node[yshift = -.15cm,pos = .92] {\color{\rankcolor}\footnotesize $1$} (b2\k);
	      \draw (a2\k) edge node[yshift = -.15cm,pos = .1] {\color{\rankcolor}\footnotesize $2$} node[yshift = -.15cm,pos = .92] {\color{\rankcolor}\footnotesize $1$} (b3\k);
	      \draw (a1\k) edge node[yshift = .15cm,pos = .05] {\color{\rankcolor}\footnotesize $3$} node[yshift = .15cm,pos = .92] {\color{\rankcolor}\footnotesize $1$} (e2\k);
	      \draw (a2\k) edge node[yshift = .15cm,pos = .05] {\color{\rankcolor}\footnotesize $3$} node[yshift = .15cm,pos = .92] {\color{\rankcolor}\footnotesize $1$} (e3\k);
	  }

	  \foreach \k/\mult in {3/0} {
	    \pgfmathsetmacro{\y}{\mult * \gadgetdist}
	    \foreach \l/\x in {1/0,2/\gadgetstretch, 3/2*\gadgetstretch} {
	      \coordinate (p\l\k) at  (\x,\y) {};
	      \node[draw, rectangle, inner sep=0, minimum size=\vertexsize cm, minimum height=\vertexsize cm] (b\l\k) at ($(p\l\k)+(-\gadgetrad,0)$) {$b_{\l}^{\k}$};
	      \node[draw, rectangle, inner sep=0, minimum size=\vertexsize cm, minimum height=\vertexsize cm] (e\l\k) at ($(p\l\k)+(0,-\gadgetrad)$) {$e_{\l}^{\k}$};
	      \node[draw, circle, inner sep=0, minimum size=\vertexsize cm, minimum height=\vertexsize cm] (c\l\k) at ($(p\l\k)+(0,\gadgetrad)$) {$c_{\l}^{\k}$};
	      \node[draw, circle, inner sep=0, minimum size=\vertexsize cm, minimum height=\vertexsize cm] (d\l\k) at ($(p\l\k)+(\gadgetrad,0)$) {$d_{\l}^{\k}$};
	      \node[draw, rectangle, inner sep=0, minimum size=\vertexsize cm, minimum height=\vertexsize cm] (f\l\k) at ($(p\l\k)+(\gadgetstretch/2,0)+(-\edgedist,0)$) {$f_{\l}^{\k}$};
	      \node[draw, circle, inner sep=0, minimum size=\vertexsize cm, minimum height=\vertexsize cm] (a\l\k) at ($(p\l\k)+(\gadgetstretch/2,0)+(\edgedist,0)$) {$a_{\l}^{\k}$};

	      \draw (b\l\k) edge node[yshift = .15cm,pos = .1] {\color{\rankcolor}\footnotesize $3$} node[yshift = .15cm,pos = .95] {\color{\rankcolor}\footnotesize $1$} (d\l\k);
	      \draw (c\l\k) edge node[xshift =.15cm,pos = .1] {\color{\rankcolor}\footnotesize $1$} node[xshift =.15cm,pos = .9] {\color{\rankcolor}\footnotesize $2$} (e\l\k);
	      \draw (d\l\k) edge node[yshift = -.2cm, pos = 0] {\color{\rankcolor}\footnotesize $2$} node[yshift = -.2cm,pos = .85] {\color{\rankcolor}\footnotesize $1$} (e\l\k);
	      \draw (d\l\k) edge node[yshift = -.15cm,pos = .1] {\color{\rankcolor}\footnotesize $3$} node[yshift = -.15cm,pos = .9] {\color{\rankcolor}\footnotesize $1$} (f\l\k);
	      \draw (f\l\k) edge node[yshift = -.15cm,pos = .1] {\color{\rankcolor}\footnotesize $2$} node[yshift = -.15cm,pos = .92] {\color{\rankcolor}\footnotesize $1$} (a\l\k);
      
	    }
    
	      \draw (b1\k) edge node[yshift = .2cm,pos = 0] {\color{\rankcolor}\footnotesize $1$} node[yshift = .2cm,pos = .75] {\color{\rankcolor}\footnotesize $2$} (c1\k);
	      \draw (b2\k) edge node[yshift = .2cm,pos = 0] {\color{\rankcolor}\footnotesize $1$} node[yshift = .2cm,pos = .75] {\color{\rankcolor}\footnotesize $3$} (c2\k);
	      \draw (b3\k) edge node[yshift = .2cm,pos = 0] {\color{\rankcolor}\footnotesize $1$} node[yshift = .2cm,pos = .75] {\color{\rankcolor}\footnotesize $3$} (c3\k);
	      \draw (a1\k) edge node[yshift = -.15cm,pos = .1] {\color{\rankcolor}\footnotesize $2$} node[yshift = -.15cm,pos = .85] {\color{\rankcolor}\footnotesize $2$} (b2\k);
	      \draw (a2\k) edge node[yshift = -.15cm,pos = .1] {\color{\rankcolor}\footnotesize $2$} node[yshift = -.15cm,pos = .85] {\color{\rankcolor}\footnotesize $2$} (b3\k);
    
	  }

	  \draw (c11) edge[bend left = 17] node[xshift = .1cm, yshift = .2cm,pos = 0] {\color{\rankcolor}\footnotesize $2$} node[yshift = .2cm,pos = 1] {\color{\rankcolor}\footnotesize $3$} (b22);
	  \draw (c21) edge[bend left = 17] node[xshift = .1cm, yshift = .2cm,pos = 0] {\color{\rankcolor}\footnotesize $2$} node[yshift = .2cm,pos = 1] {\color{\rankcolor}\footnotesize $3$} (b32);
  
	  \draw (b12) edge node[xshift = .1cm, yshift = -.2cm,pos = 0] {\color{\rankcolor}\footnotesize $3$} node[xshift = -.1cm, yshift = .2cm,pos = 1] {\color{\rankcolor}\footnotesize $2$} (c23);
	  \draw (b22) edge node[xshift = .1cm, yshift = -.2cm,pos = 0] {\color{\rankcolor}\footnotesize $4$} node[xshift = -.1cm, yshift = .2cm,pos = 1] {\color{\rankcolor}\footnotesize $2$} (c33);

	    \node[draw, rectangle, inner sep=0, minimum size=\vertexsize cm, minimum height=\vertexsize cm] (v1) at ($(a32)+(2*\edgedist,0)$) {$v_1$};
	    \node[draw, circle, inner sep=0, minimum size=\vertexsize cm, minimum height=\vertexsize cm] (w1) at ($(v1)+(0,-2*\edgedist)$) {$w_1$};
	    \node[draw, rectangle, inner sep=0, minimum size=\vertexsize cm, minimum height=\vertexsize cm] (v2) at ($(v1)+(0,-4*\edgedist)$) {$v_2$};
	    \node[draw, circle, inner sep=0, minimum size=\vertexsize cm, minimum height=\vertexsize cm] (w2) at ($(v1)+(0,-6*\edgedist)$) {$w_2$};  

	    \node[draw, circle, inner sep=0, minimum size=\vertexsize cm, minimum height=\vertexsize cm] (t1) at ($(b12)!.5!(e12)+(-2*\edgedist,0)$) {$t_1$};
	    \node[draw, rectangle, inner sep=0, minimum size=\vertexsize cm, minimum height=\vertexsize cm] (s1) at ($(t1)+(0,-2*\edgedist)$) {$s_1$};
	    \node[draw, circle, inner sep=0, minimum size=\vertexsize cm, minimum height=\vertexsize cm] (t2) at ($(t1)+(0,-4*\edgedist)$) {$t_2$};
	    \node[draw, rectangle, inner sep=0, minimum size=\vertexsize cm, minimum height=\vertexsize cm] (s2) at ($(t1)+(0,-6*\edgedist)$) {$s_2$};
	    \node[draw, circle, inner sep=0, minimum size=\vertexsize cm, minimum height=\vertexsize cm] (t3) at ($(t1)+(0,-8*\edgedist)$) {$t_3$};
     
	      \draw (a31) edge node[xshift = .2cm, yshift = -.1cm,pos = 0] {\color{\rankcolor}\footnotesize $2$} node[xshift = .05cm, yshift = .2cm,pos = 1] {\color{\rankcolor}\footnotesize $1$} (v1);
	      \draw (a32) edge node[xshift = .1cm, yshift = .15cm,pos = 0] {\color{\rankcolor}\footnotesize $2$} node[xshift = -.1cm, yshift = .15cm,pos = 1] {\color{\rankcolor}\footnotesize $2$} (v1);
	      \draw (a33) edge node[xshift = -.1cm, yshift = .2cm,pos = 0] {\color{\rankcolor}\footnotesize $2$} node[xshift = -.2cm, yshift = -.15cm,pos = 1] {\color{\rankcolor}\footnotesize $3$} (v1);

	       \draw (v1) edge node[xshift = .15cm, yshift = -.15cm,pos = 0] {\color{\rankcolor}\footnotesize $4$} node[xshift = .15cm, yshift = .15cm,pos = 1] {\color{\rankcolor}\footnotesize $2$} (w1);
	       \draw (w1) edge node[xshift = .15cm, yshift = -.15cm,pos = 0] {\color{\rankcolor}\footnotesize $1$} node[xshift = -.15cm, yshift = .15cm,pos = 1] {\color{\iAcolor}\footnotesize $1$} node[xshift = .15cm, yshift = .15cm,pos = 1] {\color{\iBcolor}\footnotesize $\ell$} (v2);
	       \draw (v2) edge node[xshift = -.15cm, yshift = -.15cm,pos = 0] {\color{\iAcolor}\footnotesize $2$} node[xshift = .4cm, yshift = -.15cm,pos = 0] {\color{\iBcolor}\footnotesize $\ell-1$} node[xshift = .15cm, yshift = .15cm,pos = 1] {\color{\rankcolor}\footnotesize $1$} (w2);

	    \draw (t1) edge node[xshift = -.1cm, yshift = .2cm,pos = 0] {\color{\rankcolor}\footnotesize $2$} node[xshift = -.2cm, yshift = -.2cm,pos = 1] {\color{\rankcolor}\footnotesize $2$} (b11);
	    \draw (t1) edge node[xshift = .07cm, yshift = .2cm,pos = 0] {\color{\rankcolor}\footnotesize $3$} node[xshift = -.15cm, yshift = .1cm,pos = 1] {\color{\rankcolor}\footnotesize $1$} (e12);
	    \draw (t1) edge node[xshift = .12cm, yshift = .1cm,pos = 0] {\color{\rankcolor}\footnotesize $4$} node[xshift = -.1cm, yshift = .2cm,pos = 1] {\color{\rankcolor}\footnotesize $1$} (b12);
	    \draw (t1) edge node[xshift = .15cm, yshift = -.1cm,pos = 0] {\color{\rankcolor}\footnotesize $5$} node[xshift = .1cm, yshift = .15cm,pos = 1] {\color{\rankcolor}\footnotesize $2$} (b13);

	    \draw (t1) edge node[xshift = -.15cm, yshift = -.15cm,pos = 0] {\color{\rankcolor}\footnotesize $1$} node[xshift = -.15cm, yshift = .15cm,pos = 1] {\color{\iAcolor}\footnotesize $2$} node[xshift = .15cm, yshift = .15cm,pos = 1] {\color{\iBcolor}\footnotesize $1$} (s1);
	    \draw (s1) edge node[xshift = -.15cm, yshift = -.15cm,pos = 0] {\color{\iAcolor}\footnotesize $1$} node[xshift = .15cm, yshift = -.15cm,pos = 0] {\color{\iBcolor}\footnotesize $2$} node[xshift = -.15cm, yshift = .15cm,pos = 1] {\color{\rankcolor}\footnotesize $2$} (t2);
	    \draw (t2) edge node[xshift = -.15cm, yshift = -.15cm,pos = 0] {\color{\rankcolor}\footnotesize $1$} node[xshift = -.15cm, yshift = .15cm,pos = 1] {\color{\rankcolor}\footnotesize $1$} (s2);
	    \draw (s2) edge node[xshift = -.15cm, yshift = -.15cm,pos = 0] {\color{\rankcolor}\footnotesize $2$} node[xshift = -.15cm, yshift = .15cm,pos = 1] {\color{\rankcolor}\footnotesize $1$} (t3);
      
	       \foreach \ang in {135,142,149,156,163}{
	       \draw (v2) edge[dashed] ($(v2) + (\ang:3*\edgedist)$);

	       }

	\end{tikzpicture}
	}

    \caption{Illustration of \Cref{thm:twooneside} for an instance $C_1\wedge C_2\wedge C_3$ with $C_1 = X_1\vee X_2 \vee X_3$, $C_2 = \overline{X}_4\vee \overline{X}_1\vee \overline{X}_2$, and $C_3 = X_5\vee X_4\vee X_1$. 
    Workers are represented by circles and firms by squares.
    For the sake of clarity, the edges between $v_2$ and vertices of $W\setminus \{ w_1,w_2,t_2,t_3\}$ are omitted and represented by the outgoing dashed edges from $v_2$. 
    These are best for agents of the type $\overline{d}_i^j$ and worst for the others. 
    The numbers on edges denote the rank in the preference lists, where $\ell$ and $\ell-1$ denoting the last second-to-last options, respectively.
    For the two agents $s_1$ and $v_2$ that changed their preferences in $\iB$ compared to $\iA$, we depict their preferences in $\iA$ on the left in blue and in $\iB$ on the right in red.
    }
    \label{fig:constrCombined}
\end{figure}

    For each clause $C_j$ ($j\in [m]$) and literal $\ell_{j_i}\in C_j$ ($i\in [3]$), we create a literal gadget $\gadg(\ell_{j_i})$:
    \begin{itemize}

        \item if $\ell_{j_i}$ is a positive literal, then $\gadg (\ell_{j_i})$ has workers $a_i^j,c_i^j,d_i^j\in W$ and firms $b_i^j,e_i^j,f_i^j\in F$.
        \item if $\ell_{j_i}$ is a negative literal, then $\gadg (\ell_{j_i})$ has workers $\nega_i^j,\negc_i^j,\negd_i^j\in W$ and firms $\negb_i^j,\nege_i^j,\negf_i^j\in F$.
    \end{itemize}

Moreover, we add workers $w_1,w_2$ and $t_1,t_2,t_3$ and firms $v_1,v_2$ and $s_1,s_2$. 

We denote $\overline{B} := \{\negb_i^j\colon \exists k\in [n] \text{ with } \ell_{j_i} = \overline{X}_k\}$ and $C := \{c_i^j\colon \exists k\in [n] \text{ with } \ell_{j_i} = X_k\}$.
Moreover, we denote $\overline{B}(c_i^j) := \{ \negb_{i'}^{j'}\colon \exists k\in [n] \text{ with } \ell_{j_i}=X_k \wedge \ell_{j'_{i'}}=\overline{X}_k\}$ 
(i.e., the set of those agents 
in $\overline{B}$ 
that correspond to literals which are exactly the negation of the $i$th literal in Clause $C_j$) 
and similarly, $C(\negb_i^j) := \{ c_{i'}^{j'}\colon \exists k\in [n] \text{ with } \ell_{j_i}=\overline{X}_k\wedge \ell_{j'_{i'}}=X_k\}$.
These sets will become important to define edges in the instance that are between the gadgets representing literals of different clauses.

Also, we define $A_3^j := \{ a_3^j\colon j\in [m]\}$, 
$\overline{A}_3^j := \{ \nega_3^j\colon j\in [m]\}$, 
$B_1^j := \{ b_1^j\colon j\in [m]\}$,
$\overline{B}_1^j := \{ \negb_1^j\colon j\in [m]\}$, 
and $\overline{E}_1^j := \{ \nege_1^j\mid j\in [m]\}$.
Finally, for a set $S$, let $[S]$ denote an arbitrary order of its elements.

\begin{table}
    \caption{Preferences in $\iA$ in the reduction of \Cref{thm:twooneside}.}
    \label{tab:twoonesideA}
\begin{center}
\renewcommand{\arraystretch}{1.3}
\begin{tabular}{cl|cl}
& Workers & & Firms \\\hline
   $a_i^j:$  & $f_i^j\succ b_{i+1}^j\succ v_2$  $(i\in [2])$ & $b_i^j:$ &  $c_i^j\succ a_{i-1}^j\succ d_i^j$ ($i=2,3)$\\
   $a_3^j:$  & $f_3^j\succ v_1\succ v_2$ & $b_1^j:$ & $c_1^j\succ t_1\succ d_1^j$ \\
   $\nega_i^j:$ &  $\negf_i^j\succ \negb_i^j\succ \nege_i^j\succ v_2$ ($i\in [2]$)& $\negb_i^j:$  & $\nega_{i-1}^j\succ \negc_i^j\succ [C(\negb_i^j)]\succ \negd_i^j$ ($i=2,3$)\\
   $\nega_3^j:$ & $\negf_3^j\succ v_1\succ v_2$ & $\negb_1^j:$ & $t_1\succ \negc_i^j\succ [C(\negb_1^j)]\succ \negd_i^j$ \\
$c_i^j:$ & $e_i^j\succ [\overline{B}(c_j^i)]\succ  b_i^j\succ v_2$ ($i\in [3])$ & $e_i^j:$ & $d_i^j\succ c_i^j$ ($i\in [3]$) \\
$\negc_i^j:$ & $\nege_i^j\succ \negb_i^j\succ v_2$ ($i=2,3)$ ($i\in [3]$) & $\nege_i^j:$ & $ \nega_{i-1}^j\succ \negd_i^j\succ \negc_i^j$ ($i = 2,3$)  \\

    $d_i^j:$ & $b_i^j\succ e_i^j\succ f_i^j\succ v_2$ ($i\in [3]$)&  $\nege_1^j:$ & $ t_1\succ \negd_i^j\succ \negc_i^j$  \\
$\negd_i^j:$  & $v_2\succ \negb_i^j\succ \negf_i^j\succ  \nege_i^j$ ($i\in [3]$) &  $f_i^j:$ & $d_i^j\succ a_i^j$ ($i\in [3]$)\\
 $t_1:$ & $s_1\succ [B_1^j\cup \overline{E}_1^j\cup \overline{B}_1^j]\succ v_2$  &  $\negf_i^j:$& $\nega_i^j\succ \negd_i^j$ ($i\in [3]$) \\
$t_2:$ & $s_2\succ s_1$ & $s_1:$ & $t_2\succ t_1$  \\
$t_3:$ & $s_2$ & $s_2:$ & $t_2\succ t_3$ \\
$w_1:$ & $v_2\succ v_1$ & $v_1:$ & $[A_3^j\cup \overline{A}_3^j]\succ w_1$ \\
 $w_2:$ & $v_2$ & $v_2:$ & $w_1\succ w_2\succ [W\setminus \{ t_2,t_3,w_1,w_2\}]$\\
\end{tabular}
\end{center}
\end{table}

The preferences of the agents in $\iA$ are stated in \Cref{tab:twoonesideA}.
In $\iB$, only the following two preference lists are different.
\begin{itemize}
    \item $s_1:$ $t_1\succ t_2$,
    \item $v_2:$ $[W\setminus \{ t_2,t_3,w_1,w_2\}]\succ w_2\succ w_1.$
\end{itemize}
Note that these agents are both firms, i.e., they are on the same side.

We now prove correctness of our reduction.
Specifically, we will prove equivalence of the following three statements:
\begin{enumerate}
    \item \label{it:sat} $\ins$ has a satisfying truth assignment.
    \item \label{it:dom} $\iP$ contains a robust dominant matching.
    \item \label{it:pop} $\iP$ contains a robust popular matching.
\end{enumerate}

Clearly, every robust dominant matching is robust popular, i.e., Statement~\ref{it:dom} implies Statement~\ref{it:pop}.
To complete the equivalence, we will show in the next two claims that Statement~\ref{it:sat} implies Statement~\ref{it:dom} and that Statement~\ref{it:pop} implies Statement~\ref{it:sat}.

\begin{claim}\label{claim:matching}
    Assume that $\ins$ has a satisfying truth assignment $\Phi$. Then, there exists a matching $M$ dominant in both $\iA$ and $\iB$.
\end{claim}
\begin{proof}
    We create a matching $M$ as follows. First, we add $\{ \{ s_i,t_i\} ,\{ v_i,w_i\} \mid i\in [2]\}$. 
    Then, for a positive literal $\ell_{j_i}$, if the literal is assigned True, we add $\{ \{b_i^j,c_i^j\}, \{ d_i^j,e_i^j\}, \{ a_i^j,f_i^j\} \}$, and if it is assigned False, then we add  $\{ \{b_i^j,d_i^j\}, \{ c_i^j,e_i^j\}, \{ a_i^j,f_i^j\} \}$.
    Moreover, for a negative literal $\ell_{j_i}$, if the literal is assigned True (and therefore corresponds to a False variable), we add $\{ \{ \negb_i^j,\negd_i^j\}, \{ \negc_i^j,\nege_i^j\}, \{ \nega_i^j,\negf_i^j\}\}$, and if it is assigned False, then we add $\{ \{ \negb_i^j,\negc_i^j\}, \{ \negd_i^j,\nege_i^j\}, \{ \nega_i^j,\negf_i^j\}\}$. See Figure~\ref{fig:matching}

    \begin{figure}
        \centering
		\newcommand{\rankcolor}{black}

		\resizebox{1\textwidth}{!}{
		\begin{tikzpicture}
		  \pgfmathsetmacro{\vertexsize}{0.6}
		  \pgfmathsetmacro{\gadgetrad}{1.2}
		  \pgfmathsetmacro{\edgedist}{.8}
		  \pgfmathsetmacro{\gadgetdist}{3.7}
		  \pgfmathsetmacro{\gadgetstretch}{7}

		  \foreach \k/\mult in {1/2} {
		    \pgfmathsetmacro{\y}{\mult * \gadgetdist}
		    \foreach \l/\x in {1/0,2/\gadgetstretch, 3/2*\gadgetstretch} {
		      \coordinate (p\l\k) at  (\x,\y) {};
		      \node[draw, rectangle, inner sep=0, minimum size=\vertexsize cm, minimum height=\vertexsize cm] (b\l\k) at ($(p\l\k)+(-\gadgetrad,0)$) {$b_{\l}^{\k}$};
		      \node[draw, rectangle, inner sep=0, minimum size=\vertexsize cm, minimum height=\vertexsize cm] (e\l\k) at ($(p\l\k)+(0,\gadgetrad)$) {$e_{\l}^{\k}$};
		      \node[draw, circle, inner sep=0, minimum size=\vertexsize cm, minimum height=\vertexsize cm] (c\l\k) at ($(p\l\k)+(0,-\gadgetrad)$) {$c_{\l}^{\k}$};
		      \node[draw, circle, inner sep=0, minimum size=\vertexsize cm, minimum height=\vertexsize cm] (d\l\k) at ($(p\l\k)+(\gadgetrad,0)$) {$d_{\l}^{\k}$};
		      \node[draw, rectangle, inner sep=0, minimum size=\vertexsize cm, minimum height=\vertexsize cm] (f\l\k) at ($(p\l\k)+(\gadgetstretch/2,0)+(-\edgedist,0)$) {$f_{\l}^{\k}$};
		      \node[draw, circle, inner sep=0, minimum size=\vertexsize cm, minimum height=\vertexsize cm] (a\l\k) at ($(p\l\k)+(\gadgetstretch/2,0)+(\edgedist,0)$) {$a_{\l}^{\k}$};

		      \draw (b\l\k) edge node[yshift = -.15cm,pos = .1] {\color{\rankcolor}\footnotesize $3$} node[yshift = -.15cm,pos = .95] {\color{\rankcolor}\footnotesize $1$} (d\l\k);
		      \draw (c\l\k) edge node[xshift =.15cm,pos = .1] {\color{\rankcolor}\footnotesize $1$} node[xshift =.15cm,pos = .9] {\color{\rankcolor}\footnotesize $2$} (e\l\k);
		      \draw (d\l\k) edge node[xshift =.1cm,yshift = .1cm, pos = .1] {\color{\rankcolor}\footnotesize $2$} node[xshift =.1cm,yshift = .1cm,pos = .9] {\color{\rankcolor}\footnotesize $1$} (e\l\k);
		      \draw (d\l\k) edge node[yshift = -.15cm,pos = .1] {\color{\rankcolor}\footnotesize $3$} node[yshift = -.15cm,pos = .9] {\color{\rankcolor}\footnotesize $1$} (f\l\k);
		      \draw (f\l\k) edge node[yshift = -.15cm,pos = .1] {\color{\rankcolor}\footnotesize $2$} node[yshift = -.15cm,pos = .92] {\color{\rankcolor}\footnotesize $1$} (a\l\k);
      
		    }
    
		      \draw (b1\k) edge node[below,pos = -.1] {\color{\rankcolor}\footnotesize $1$} node[below,pos = .7] {\color{\rankcolor}\footnotesize $3$} (c1\k);
		      \draw (b2\k) edge node[below,pos = -.1] {\color{\rankcolor}\footnotesize $1$} node[below,pos = .7] {\color{\rankcolor}\footnotesize $3$} (c2\k);
		      \draw (b3\k) edge node[below,pos = -.1] {\color{\rankcolor}\footnotesize $1$} node[below,pos = .7] {\color{\rankcolor}\footnotesize $2$} (c3\k);
		      \draw (a1\k) edge node[yshift = -.15cm,pos = .1] {\color{\rankcolor}\footnotesize $2$} node[yshift = -.15cm,pos = .85] {\color{\rankcolor}\footnotesize $2$} (b2\k);
		      \draw (a2\k) edge node[yshift = -.15cm,pos = .1] {\color{\rankcolor}\footnotesize $2$} node[yshift = -.15cm,pos = .85] {\color{\rankcolor}\footnotesize $2$} (b3\k);
    
		  }

		  \foreach \k/\mult in {2/1} {
		    \pgfmathsetmacro{\y}{\mult * \gadgetdist}
		    \foreach \l/\x in {1/0,2/\gadgetstretch, 3/2*\gadgetstretch} {
		      \coordinate (p\l\k) at  (\x,\y) {};
		      \node[draw, rectangle, inner sep=0, minimum size=\vertexsize cm, minimum height=\vertexsize cm] (b\l\k) at ($(p\l\k)+(225:\gadgetrad)$) {$\bar b_{\l}^{\k}$};
		      \node[draw, rectangle, inner sep=0, minimum size=\vertexsize cm, minimum height=\vertexsize cm] (e\l\k) at ($(p\l\k)+(135:\gadgetrad)$) {$\bar e_{\l}^{\k}$};
		      \node[draw, circle, inner sep=0, minimum size=\vertexsize cm, minimum height=\vertexsize cm] (c\l\k) at ($(p\l\k)+(45:\gadgetrad)$) {$\bar c_{\l}^{\k}$};
		      \node[draw, circle, inner sep=0, minimum size=\vertexsize cm, minimum height=\vertexsize cm] (d\l\k) at ($(p\l\k)+(315:\gadgetrad)$) {$\bar d_{\l}^{\k}$};
		      \node[draw, rectangle, inner sep=0, minimum size=\vertexsize cm, minimum height=\vertexsize cm] (f\l\k) at ($(p\l\k)+(\gadgetstretch/2,0)+(-\edgedist,0)$) {$\bar f_{\l}^{\k}$};
		      \node[draw, circle, inner sep=0, minimum size=\vertexsize cm, minimum height=\vertexsize cm] (a\l\k) at ($(p\l\k)+(\gadgetstretch/2,0)+(\edgedist,0)$) {$\bar a_{\l}^{\k}$};

		      \draw (b\l\k) edge node[xshift =-.1cm,yshift = .1cm,pos = .1] {\color{\rankcolor}\footnotesize $2$} node[xshift =-.1cm,yshift = .1cm,pos = .9] {\color{\rankcolor}\footnotesize $2$} (c\l\k);
		      \draw (c\l\k) edge node[yshift =.15cm,pos = .1] {\color{\rankcolor}\footnotesize $1$} node[yshift =.15cm,pos = .9] {\color{\rankcolor}\footnotesize $3$} (e\l\k);
		      \draw (d\l\k) edge node[xshift =.1cm,yshift = .1cm, pos = .1] {\color{\rankcolor}\footnotesize $4$} node[xshift =.1cm,yshift = .1cm,pos = .94] {\color{\rankcolor}\footnotesize $2$} (e\l\k);
		      \draw (d\l\k) edge node[yshift = -.15cm,pos = .1] {\color{\rankcolor}\footnotesize $3$} node[yshift = -.15cm,pos = .9] {\color{\rankcolor}\footnotesize $2$} (f\l\k);
		      \draw (f\l\k) edge node[yshift = -.15cm,pos = .1] {\color{\rankcolor}\footnotesize $1$} node[yshift = -.15cm,pos = .92] {\color{\rankcolor}\footnotesize $1$} (a\l\k);
      
		    }
		      \draw (b1\k) edge node[yshift = .15cm,pos = .1] {\color{\rankcolor}\footnotesize $4$} node[yshift = .15cm,pos = .95] {\color{\rankcolor}\footnotesize $2$} (d1\k);
		      \draw (b3\k) edge node[yshift = .15cm,pos = .1] {\color{\rankcolor}\footnotesize $4$} node[yshift = .15cm,pos = .95] {\color{\rankcolor}\footnotesize $2$} (d3\k);
		      \draw (b2\k) edge node[yshift = .15cm,pos = .1] {\color{\rankcolor}\footnotesize $5$} node[yshift = .15cm,pos = .95] {\color{\rankcolor}\footnotesize $2$} (d2\k);
		      \draw (a1\k) edge node[yshift = -.15cm,pos = .1] {\color{\rankcolor}\footnotesize $2$} node[yshift = -.15cm,pos = .92] {\color{\rankcolor}\footnotesize $1$} (b2\k);
		      \draw (a2\k) edge node[yshift = -.15cm,pos = .1] {\color{\rankcolor}\footnotesize $2$} node[yshift = -.15cm,pos = .92] {\color{\rankcolor}\footnotesize $1$} (b3\k);
		      \draw (a1\k) edge node[yshift = .15cm,pos = .05] {\color{\rankcolor}\footnotesize $3$} node[yshift = .15cm,pos = .92] {\color{\rankcolor}\footnotesize $1$} (e2\k);
		      \draw (a2\k) edge node[yshift = .15cm,pos = .05] {\color{\rankcolor}\footnotesize $3$} node[yshift = .15cm,pos = .92] {\color{\rankcolor}\footnotesize $1$} (e3\k);
		  }

		  \foreach \k/\mult in {3/0} {
		    \pgfmathsetmacro{\y}{\mult * \gadgetdist}
		    \foreach \l/\x in {1/0,2/\gadgetstretch, 3/2*\gadgetstretch} {
		      \coordinate (p\l\k) at  (\x,\y) {};
		      \node[draw, rectangle, inner sep=0, minimum size=\vertexsize cm, minimum height=\vertexsize cm] (b\l\k) at ($(p\l\k)+(-\gadgetrad,0)$) {$b_{\l}^{\k}$};
		      \node[draw, rectangle, inner sep=0, minimum size=\vertexsize cm, minimum height=\vertexsize cm] (e\l\k) at ($(p\l\k)+(0,-\gadgetrad)$) {$e_{\l}^{\k}$};
		      \node[draw, circle, inner sep=0, minimum size=\vertexsize cm, minimum height=\vertexsize cm] (c\l\k) at ($(p\l\k)+(0,\gadgetrad)$) {$c_{\l}^{\k}$};
		      \node[draw, circle, inner sep=0, minimum size=\vertexsize cm, minimum height=\vertexsize cm] (d\l\k) at ($(p\l\k)+(\gadgetrad,0)$) {$d_{\l}^{\k}$};
		      \node[draw, rectangle, inner sep=0, minimum size=\vertexsize cm, minimum height=\vertexsize cm] (f\l\k) at ($(p\l\k)+(\gadgetstretch/2,0)+(-\edgedist,0)$) {$f_{\l}^{\k}$};
		      \node[draw, circle, inner sep=0, minimum size=\vertexsize cm, minimum height=\vertexsize cm] (a\l\k) at ($(p\l\k)+(\gadgetstretch/2,0)+(\edgedist,0)$) {$a_{\l}^{\k}$};

		      \draw (b\l\k) edge node[yshift = .15cm,pos = .1] {\color{\rankcolor}\footnotesize $3$} node[yshift = .15cm,pos = .95] {\color{\rankcolor}\footnotesize $1$} (d\l\k);
		      \draw (c\l\k) edge node[xshift =.15cm,pos = .1] {\color{\rankcolor}\footnotesize $1$} node[xshift =.15cm,pos = .9] {\color{\rankcolor}\footnotesize $2$} (e\l\k);
		      \draw (d\l\k) edge node[yshift = -.2cm, pos = 0] {\color{\rankcolor}\footnotesize $2$} node[yshift = -.2cm,pos = .85] {\color{\rankcolor}\footnotesize $1$} (e\l\k);
		      \draw (d\l\k) edge node[yshift = -.15cm,pos = .1] {\color{\rankcolor}\footnotesize $3$} node[yshift = -.15cm,pos = .9] {\color{\rankcolor}\footnotesize $1$} (f\l\k);
		      \draw (f\l\k) edge node[yshift = -.15cm,pos = .1] {\color{\rankcolor}\footnotesize $2$} node[yshift = -.15cm,pos = .92] {\color{\rankcolor}\footnotesize $1$} (a\l\k);
      
		    }
    
		      \draw (b1\k) edge node[yshift = .2cm,pos = 0] {\color{\rankcolor}\footnotesize $1$} node[yshift = .2cm,pos = .75] {\color{\rankcolor}\footnotesize $2$} (c1\k);
		      \draw (b2\k) edge node[yshift = .2cm,pos = 0] {\color{\rankcolor}\footnotesize $1$} node[yshift = .2cm,pos = .75] {\color{\rankcolor}\footnotesize $3$} (c2\k);
		      \draw (b3\k) edge node[yshift = .2cm,pos = 0] {\color{\rankcolor}\footnotesize $1$} node[yshift = .2cm,pos = .75] {\color{\rankcolor}\footnotesize $3$} (c3\k);
		      \draw (a1\k) edge node[yshift = -.15cm,pos = .1] {\color{\rankcolor}\footnotesize $2$} node[yshift = -.15cm,pos = .85] {\color{\rankcolor}\footnotesize $2$} (b2\k);
		      \draw (a2\k) edge node[yshift = -.15cm,pos = .1] {\color{\rankcolor}\footnotesize $2$} node[yshift = -.15cm,pos = .85] {\color{\rankcolor}\footnotesize $2$} (b3\k);
    
		  }

		  \draw (c11) edge[bend left = 17] node[xshift = .1cm, yshift = .2cm,pos = 0] {\color{\rankcolor}\footnotesize $2$} node[yshift = .2cm,pos = 1] {\color{\rankcolor}\footnotesize $3$} (b22);
		  \draw (c21) edge[bend left = 17] node[xshift = .1cm, yshift = .2cm,pos = 0] {\color{\rankcolor}\footnotesize $2$} node[yshift = .2cm,pos = 1] {\color{\rankcolor}\footnotesize $3$} (b32);
  
		  \draw (b12) edge node[xshift = .1cm, yshift = -.2cm,pos = 0] {\color{\rankcolor}\footnotesize $3$} node[xshift = -.1cm, yshift = .2cm,pos = 1] {\color{\rankcolor}\footnotesize $2$} (c23);
		  \draw (b22) edge node[xshift = .1cm, yshift = -.2cm,pos = 0] {\color{\rankcolor}\footnotesize $4$} node[xshift = -.1cm, yshift = .2cm,pos = 1] {\color{\rankcolor}\footnotesize $2$} (c33);

		    \node[draw, rectangle, inner sep=0, minimum size=\vertexsize cm, minimum height=\vertexsize cm] (v1) at ($(a32)+(2*\edgedist,0)$) {$v_1$};
		    \node[draw, circle, inner sep=0, minimum size=\vertexsize cm, minimum height=\vertexsize cm] (w1) at ($(v1)+(0,-2*\edgedist)$) {$w_1$};
		    \node[draw, rectangle, inner sep=0, minimum size=\vertexsize cm, minimum height=\vertexsize cm] (v2) at ($(v1)+(0,-4*\edgedist)$) {$v_2$};
		    \node[draw, circle, inner sep=0, minimum size=\vertexsize cm, minimum height=\vertexsize cm] (w2) at ($(v1)+(0,-6*\edgedist)$) {$w_2$};  

		    \node[draw, circle, inner sep=0, minimum size=\vertexsize cm, minimum height=\vertexsize cm] (t1) at ($(b12)!.5!(e12)+(-2*\edgedist,0)$) {$t_1$};
		    \node[draw, rectangle, inner sep=0, minimum size=\vertexsize cm, minimum height=\vertexsize cm] (s1) at ($(t1)+(0,-2*\edgedist)$) {$s_1$};
		    \node[draw, circle, inner sep=0, minimum size=\vertexsize cm, minimum height=\vertexsize cm] (t2) at ($(t1)+(0,-4*\edgedist)$) {$t_2$};
		    \node[draw, rectangle, inner sep=0, minimum size=\vertexsize cm, minimum height=\vertexsize cm] (s2) at ($(t1)+(0,-6*\edgedist)$) {$s_2$};
		    \node[draw, circle, inner sep=0, minimum size=\vertexsize cm, minimum height=\vertexsize cm] (t3) at ($(t1)+(0,-8*\edgedist)$) {$t_3$};
     
		      \draw (a31) edge node[xshift = .2cm, yshift = -.1cm,pos = 0] {\color{\rankcolor}\footnotesize $2$} node[xshift = .05cm, yshift = .2cm,pos = 1] {\color{\rankcolor}\footnotesize $1$} (v1);
		      \draw (a32) edge node[xshift = .1cm, yshift = .15cm,pos = 0] {\color{\rankcolor}\footnotesize $2$} node[xshift = -.1cm, yshift = .15cm,pos = 1] {\color{\rankcolor}\footnotesize $2$} (v1);
		      \draw (a33) edge node[xshift = -.1cm, yshift = .2cm,pos = 0] {\color{\rankcolor}\footnotesize $2$} node[xshift = -.2cm, yshift = -.15cm,pos = 1] {\color{\rankcolor}\footnotesize $3$} (v1);

		       \draw (v1) edge node[xshift = .15cm, yshift = -.15cm,pos = 0] {\color{\rankcolor}\footnotesize $4$} node[xshift = .15cm, yshift = .15cm,pos = 1] {\color{\rankcolor}\footnotesize $2$} (w1);
		       \draw (w1) edge node[xshift = .15cm, yshift = -.15cm,pos = 0] {\color{\rankcolor}\footnotesize $1$} node[xshift = .15cm, yshift = .15cm,pos = 1] {\color{\rankcolor}\footnotesize $1$} (v2);
		       \draw (v2) edge node[xshift = .15cm, yshift = -.15cm,pos = 0] {\color{\rankcolor}\footnotesize $2$} node[xshift = .15cm, yshift = .15cm,pos = 1] {\color{\rankcolor}\footnotesize $1$} (w2);

		    \draw (t1) edge node[xshift = -.1cm, yshift = .2cm,pos = 0] {\color{\rankcolor}\footnotesize $2$} node[xshift = -.2cm, yshift = -.2cm,pos = 1] {\color{\rankcolor}\footnotesize $2$} (b11);
		    \draw (t1) edge node[xshift = .07cm, yshift = .2cm,pos = 0] {\color{\rankcolor}\footnotesize $3$} node[xshift = -.15cm, yshift = .1cm,pos = 1] {\color{\rankcolor}\footnotesize $1$} (e12);
		    \draw (t1) edge node[xshift = .12cm, yshift = .1cm,pos = 0] {\color{\rankcolor}\footnotesize $4$} node[xshift = -.1cm, yshift = .2cm,pos = 1] {\color{\rankcolor}\footnotesize $1$} (b12);
		    \draw (t1) edge node[xshift = .15cm, yshift = -.1cm,pos = 0] {\color{\rankcolor}\footnotesize $5$} node[xshift = .1cm, yshift = .15cm,pos = 1] {\color{\rankcolor}\footnotesize $2$} (b13);

		    \draw (t1) edge node[xshift = -.15cm, yshift = -.15cm,pos = 0] {\color{\rankcolor}\footnotesize $1$} node[xshift = -.15cm, yshift = .15cm,pos = 1] {\color{\rankcolor}\footnotesize $2$} (s1);
		    \draw (s1) edge node[xshift = -.15cm, yshift = -.15cm,pos = 0] {\color{\rankcolor}\footnotesize $1$} node[xshift = -.15cm, yshift = .15cm,pos = 1] {\color{\rankcolor}\footnotesize $2$} (t2);
		    \draw (t2) edge node[xshift = -.15cm, yshift = -.15cm,pos = 0] {\color{\rankcolor}\footnotesize $1$} node[xshift = -.15cm, yshift = .15cm,pos = 1] {\color{\rankcolor}\footnotesize $1$} (s2);
		    \draw (s2) edge node[xshift = -.15cm, yshift = -.15cm,pos = 0] {\color{\rankcolor}\footnotesize $2$} node[xshift = -.15cm, yshift = .15cm,pos = 1] {\color{\rankcolor}\footnotesize $1$} (t3);
      
		       \foreach \ang in {135,142,149,156,163}{
		       \draw (v2) edge[dashed] ($(v2) + (\ang:3*\edgedist)$);

		       }

		    \foreach \v/\w in {t1/s1,t2/s2,v1/w1,v2/w2,b11/d11,c11/e11,b21/c21,d21/e21,b31/c31,d31/e31,e12/c12,b12/d12,e22/c22,b22/d22,e32/d32,b32/c32,b13/c13,d13/e13,b23/d23,c23/e23,b33/d33,c33/e33}
		    \draw (\v) edge[very thick] (\w);

		    \foreach \i in {1,2,3}
		    {
		    \foreach \j in {1,2,3}
		    {
		    \draw (f\i\j) edge[very thick] (a\i\j);
		    }
		    }
  
		\end{tikzpicture}
		} 
        
        \caption{Illustration of the constructed matching $M$ in \Cref{claim:matching} for the instance $\iA$ in the example of \Cref{fig:constrCombined}.
        The matching corresponds to the assignment where $X_1$ and $X_4$ are set to False, and $X_2$, $X_3$, and $X_5$ are set to True.}
        \label{fig:matching}
    \end{figure}

    Clearly, $M$ covers all vertices except $t_3$, hence $M$ is a maximum size matching. As a consequence, $M$ is popular in either instance if and only if it is dominant.
    Therefore, it suffices to show that $M$ is popular in both $\iA$ and $\iB$.

    First, we focus on $\iA$. 
    We want to use the characterization of \Cref{thm:pop_char} to prove popularity.
    Consider $G_M^{\iA}$ and the $(-,-)$ (i.e., blocking) edges. 
    As $\Phi$ is consistent (i.e., there is no $i\in [n]$, such that both $X_i$ and $\overline{X}_i$ is set to True somewhere), none of the edges between $\overline{B}=\{ \negb_i^j\mid j\in [m], i\in [3]\}$ and $C=\{ c_i^j\mid j\in [m], i\in [3]\}$ are a blocking edge, as either $c_i^j$ or $b_i^j$ obtains a better partner. 
    Hence, it is straightforward to verify that the only $(-,-)$ edge is $\{ v_2,w_1\}$.

    Therefore, it is enough to show that in $G_M^{\iA}$, there is no alternating path with respect to $M$ between $\{ v_2,w_1\}$ and $t_3$ and no alternating cycle through $\{ v_2,w_1\}$. The latter is impossible, as $\{ v_2,w_2\}\in M$ and $w_2$ is a leaf. 
    
    Now, suppose for the contrary that there exists an alternating path with respect to $M$ between $\{ v_2,w_1\}$ and $t_3$, and denote it by $P$.
    Note that $P$ cannot use edges containing $v_2$ other than $\{v_2,w_1\}$.
    Indeed, $\{v_2,w_2\}\in M$ and $w_2$ is a leaf, so the path cannot enter from $w_2$ and would get stuck when entering via a nonmatching edge to $v_2$.
    
    We now claim that $t_3$ is in a different component in $G_M^{\iA}\setminus \{ \{ v_2,w\} \colon w\in W,w\neq t_3\}$ than $v_2$, 
    which contradicts the existence of such an alternating path.
    
    As all clauses $C_j$ contain a True literal, for all $j$, at least one edge in the chain of literal gadgets $\gadg (\ell_{j_1}),\gadg (\ell_{j_2}),\gadg (\ell_{j_3})$ is a $(+,+)$ edge: if these are negative literals, then it is $\{ \negd_i^j,\negf_i^j\}$ for a true literal $\ell_{j_i}$ and if these are positive literals, then it is $\{ a_{i-1}^j,b_i^j\}$ for a true literal $\ell_{j_i}$ (there, we identify $a_0^j:=t_1)$. Furthermore, it is easy to see that $\{ \negd_i^j,\negf_i^j\}$ and $\{ a_{i-1}^j,b_i^j\}$ are both unavoidable for any alternating path between $\{ v_2,w_1\}$ and $t_3$ that does not use edges between $\overline{B}$  and $C$. Hence, $P$ must contain an edge between $\overline{B}$ and $C,$ as it cannot use $(+,+)$ edges.

    We now show that this leads to a contradiction.
    If we orient the path $P$ towards $t_3$, then it is clear that any edge of $P\setminus M$ is directed from firms towards workers.
    Furthermore, any edge between $\overline{B}$ and $C$ is either a $(+,+)$ edge, or a $(-,+)$ edge. 
    Suppose such a $(-,+)$ edge $\{ c_i^j,\negb_{i'}^{j'}\}$ is used in the path and take the one closest to $v_2$.
    Since $\overline{B}$ only contains firms and $C$ only workers, it must be the firm $\negb_{i'}^{j'}$ that is closer to $v_2$ on the path. 
    Since the path reached $\negb_{i'}^{j'}$ without edges from $\overline{B}$ to $C$, the edge $\{ \negd_{i'}^{j'},\negf_{i'}^{j'}\}$ is not a $(+,+)$ edge, so $\{ \negd_{i'}^{j'},\nege_{i'}^{j'}\}, \{ \negb_{i'}^{j'},\negc_{i'}^{j'}\}\in M$. 
    Moreover, as $\{ c_i^j,\negb_{i'}^{j'}\}$ is a $(-,+)$ edge, we must have $\{ b_i^j,c_i^j\}, \{ d_i^j,e_i^j\} \in M$.
    If $\{ b_i^j,d_i^j\} \in P$, then $ \{ d_i^j,e_i^j\} \in P$ and $e_i^j$ must be the end of $P$ (as the only neighbors of $e_i^j$ are $c_i^j$ and $d_i^j$ who are already on the path). This is a contradiction. 
    However, the other edge adjacent to $b_i^j$ (either $\{ b_i^j,a_{i-1}\}$ or $\{ b_i^j,t_1\}$ is a $(+,+)$ edge in $\iA$), and we derive a contradiction once again.

    Therefore, we have shown that there can be no alternating path with respect to $M$ between $\{ v_2,w_1\}$ and $t_3$ in $G_M^{\iA}$, as it would need to use edges between $\overline{B}$ and $C$ as we have shown first, but that is impossible, as we have shown above. So $M$ is popular in $\iA$.

    Next, we consider $\iB$. 
    Here, $\{ s_1,t_2\}$ is a $(+,+)$ edge, hence $t_3$ is isolated in $G_M^{\iB}$. Similarly as before, no edge between $\overline{B}$ and $C$ is a $(-,-)$ edge. 
    Hence, the only $(-,-)$ edges are all incident to $v_2$. 
    Again, as $\{ v_2,w_2\} \in M$, there can be no alternating cycle through any of them. 
    Moreover, as $t_3$ is isolated, there cannot be any alternating path in $G_M^{\iB}$ between $(-,-)$ edges, or between a $(-,-)$ edge and an uncovered vertex by $M$. Hence, $M$ is popular in $\iB$, too.
\end{proof}

It remains to show that popular matchings imply the existence of satisfying assignments.

\begin{claim}
    If there is a matching $M$ that is popular in both $\iA$ and $\iB$, then there exists a satisfying truth assignment $\Phi$ to $\varphi$.
\end{claim}
\begin{proof}
    Let $M$ be such a matching. First, we show that $M$ must match all agents except for $t_3$. 
    
    On the one hand, $\{ v_2,w_2\}\in M$, since otherwise $w_2$ in unmatched by $M$, which is a contradiction as shown next.
    Indeed, there is $X\in \{A,B\}$ such that $w_2\succ_{v_2}^{\iX}M(v_2)$, so $\{ v_2,w_2\}$ itself constitutes an alternating path in $G_M^{\iX}$ between an unmatched agent that contains a $(-,-)$ edge, contradicting that $M$ is popular in $\iX$.
    On the other hand, suppose now that some $w\in W\setminus \{ t_2,t_3\}$ with $w\neq w_2$ is not matched. 
    Then again, there is $X\in \{A,B\}$ such that $w\succ_{v_2}^{\iX}w_2$, which is a contradiction for the same reason. Finally, $t_2$ must be matched, otherwise, in both instances, $\{s_2,t_2\}$ is a $(-,-)$ edge adjacent to an unmatched vertex, contradicting popularity. 
    Hence, since there is precisely one more worker than firms and all workers except $t_3$ are matched, $M$ must match all agents except for $t_3$.

    This immediately implies that $\{ v_1,w_1\},\{ v_2,w_2\},\{ s_1,t_1\},\{ s_2,t_2\} \in M$. 
    We now claim that no edge between $\overline{B}$ and $C$ is included in $M$.
    Assume for contradiction that, for some $j\in [m]$ and $i\in [3]$, $c_i^j$ was matched by such an edge.
    Then, as $e_i^j$ has only two neighbors, we have $\{e_i^j,d_i^j\}\in M$.  Continuing this, we iteratively find firms  on the literal gadget for which all but one possible neighbors have already been matched, enforcing further matching edges.
    If $i = 1$, we would need that $\{b_i^j,t_1\}\in M$, which is impossible since $\{ s_1,t_1\}$.
    Hence, $i\ge 2$ and $\{b_i^j,a_{i-1}^j\}\in M$.
    We further have $\{f_{i-1},d_{i-1}^j\}, \{e_{i-1}^j,c_{i-1}^j\}\in M$.
    Now, if $i = 2$, we would need that $\{b_{i-1}^j,t_1\}\in M$, which is again impossible.
    Hence, we must have $i = 3$.
    Tracing further, we need $\{b_2^j,a_1^j\},\{f_1,d_1^j\}, \{e_1^j,c_1^j\}\in M$.
    Hence, $b_1^j$ must be matched to their only remaining neighbor, $t_1$, a contradiction.
    We conclude that no edge between $\overline{B}$ and $C$ is included in $M$.
    
    Next, we consider the chains of negative literal gadgets.
    We claim that $\{ \nega_i^j,\negf_i^j\}\in M$ for $j\in [m]$, $i\in [3]$. 
    Indeed, assume that this was not the case and consider an edge $\{ \nega_i^j,\negf_i^j\}\notin M$ such that $\{ \nega_{i'}^{j},\negf_{i'}^{j}\}\in M$ for all $i'\in [3]$ with $i' < i$, i.e., $i$ is minimal in $[3]$ with this property.
    Note that $\{ \negd_i^j,\negf_i^j\}\in M$ to match $\negf_i^j$.
    Now, since no edge between $\overline{B}$ and $C$ is included in $M$ and by the minimality of $j$ the agents in $\{\negc_i^j,\negd_i^j,\nege_i^j\}$ have to be matched among themselves. 
    Hence, one of them remains unmatched, a contradiction.
    It follows that $\{ \nega_i^j,\negf_i^j\}\in M$ for $j\in [m]$, $i\in [3]$.
    To match all agents in the negative literal gadget, it must hold that 
    for any $j\in [m]$, $i\in [3]$, either $\{ \{ \negb_i^j,\negd_i^j\}, \{ \negc_i^j,\nege_i^j\} , \{ \nega_i^j,\negf_i^j\} \} \subseteq M$ or $\{ \{ \negb_i^j,\negc_i^j\}, \{ \negd_i^j,\nege_i^j\} , \{ \nega_i^j,\negf_i^j\} \} \subseteq M$. 
    Then, consider the chains of the positive literal gadgets. 
    Applying a similar 
    argument, we get that for any $j\in [m]$, $i\in [3]$, either $\{  \{ b_i^j,c_i^j\}, \{ d_i^j,e_i^j\}, \{ a_i^j,f_i^j\}\} \subseteq M$ or $\{  \{ b_i^j,d_i^j\}, \{ c_i^j,e_i^j\}, \{ a_i^j,f_i^j\}\} \subseteq M$.

    Define a truth assignment $\Phi$ as follows. 
    We set variable $X_l$ as True, if there exists a positive literal $\ell_{j_i}=X_l$ such that $\{  \{ b_i^j,c_i^j\}, \{ d_i^j,e_i^j\}, \{ a_i^j,f_i^j\}\} \subseteq M$ and False if there is a negative literal $\ell_{j'_{i'}}=\overline{X}_l$ such that $\{ \{ \negb_{i'}^{j'},\negd_{i'}^{j'}\}, \{ \negc_{i'}^{j'},\nege_{i'}^{j'}\} , \{ \nega_{i'}^{j'},\negf_{i'}^{j'}\} \} \subseteq M$. 
    If neither holds, then we also set $X_l$ as True.

    First of all, we show that this is a consistent truth assignment. 
    Suppose for the contrary, that there is a variable $X_l$, such that there exists literals $\ell_{j_i}=X_l$ and $\ell_{j'_{i'}}=\overline{X}_l$, such that $\{  \{ b_i^j,c_i^j\}, \{ d_i^j,e_i^j\}, \{ a_i^j,f_i^j\}\} \subseteq M$ and $\{ \{ \negb_{i'}^{j'},\negd_{i'}^{j'}\}, \{ \negc_{i'}^{j'},\nege_{i'}^{j'}\} , \{ \nega_{i'}^{j'},\negf_{i'}^{j'}\} \} \subseteq M$ both hold. Then, $\{ c_i^j,\negb_{i'}^{j'}\}$ is a $(-,-)$ edge in both $\iA$ and $\iB$ and hence the alternating path $\{ \{c_i^j,\negb_{i'}^{j'}\}, \{ \negb_{i'}^{j'},\negd_{i'}^{j'}\} ,\{ \negd_{i'}^{j'},v_2\} \}$ is a subset of $G_M^{\iB}$ and witnesses the unpopularity of $M$ in $\iB$, a contradiction.  

    Finally, we show that $\Phi$ satisfies every clause. Suppose that a positive clause $C_j=(\ell_{j_1}\vee \ell_{j_2}\vee \ell_{j_3})$ (containing only positive literals) only contains literals set to False. Then, in $G_M^{\iA}$, the alternating path $\{ \{ v_2,w_1\}, \{ w_1,v_1\}, \{ v_1, a_3^j\} \} \cup \{ \{ a_i^j,f_i^j\}, \{ f_i^j,d_i^j\}, \{ d_i^j,b_i^j\}, \{ b_i^j,a_{i-1}^j\} \mid i\in [3]\} \cup \{ \{ t_1,s_1\}, \{ s_1,t_2\},\{ t_2,s_2\},\{s_2,t_3\}\}$ (here we let $a_0^j:=t_1$) between the uncovered agent $t_3$ and the blocking edge $\{ v_2,w_1\}$ witnesses the unpopularity of $M$ in $\iA$, a contradiction.

    Suppose that a negative clause only contains literals set to True. 
    Then, in $G_M^{\iA}$, the alternating path $\{ \{ v_2,w_1\}, \{ w_1,v_1\}, \{ v_1, \nega_3^j\} \} \cup \{ \{ \nega_i^j,\negf_i^j\}, \{ \negf_i^j,\negd_i^j\}, \{ \negd_i^j,\nege_i^j\}, \{ \nege_i^j,\nega_{i-1}^j\} \mid i\in [3]\} \cup \{ \{ t_1,s_1\}, \{ s_1,t_2\},\{ t_2,s_2\},\{s_2,t_3\}\}$ (here we let $\nega_0^j:=t_1$) between the uncovered agent $t_3$ and the blocking edge $\{ v_2,w_1\}$ witnesses the unpopularity of $M$ in $\iA$, a contradiction.
    We conclude that $\Phi$ is a satisfying truth assignment to $\varphi$.
\end{proof}

Since, we have shown the equivalence of all three statements, we have shown 
\NP-hardness of both \RobustProb\ and \RobustDom. 
\end{proof}

\subsubsection{Perturbations by Two Swaps}

Next, we consider our hardness results when two agents (but not from the same side) only perform a swap.
Our proof idea is to reduce from \ForbEdgeForceVert, i.e., the problem of finding a popular matching that avoids a designated edge while matching a designated agent.
In the reduced instance of \RobustProb, we use one of the instances to represent the input instance and to contain all originally popular matchings.
In the second instance, we perform two swaps that prevent the designated edge and ensure matching of the designated vertex.
We achieve this by introducing auxiliary agents.
For the designated edge they represent the case of the leaf agent being unmatched, and for the designated vertex they present an outside option to ensure them being matched in every popular matching.
However, the swap of preferences prevents matching with the outside option in a popular matching.

\begin{theorem}\label{thm:hardness}
    \RobustProb is \NP-complete even if the perturbed instance only differs by a swap of two agents.
\end{theorem}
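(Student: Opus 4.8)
The plan is to establish membership in \NP{} exactly as in \Cref{thm:twooneside} (a robust popular matching is a polynomial-size certificate, checkable for popularity in both instances in polynomial time), and to prove \NP-hardness by a reduction from \ForbEdgeForceVert. Recall that there we are given an instance $\ins$ with a designated edge $e=\{a,b\}$, where $a$ is a leaf with $N_a=\{b\}$, the firm $b$ has exactly one further neighbor $c$, and $b,c$ top-rank each other, together with a vertex $d$ to be covered. I would build a single graph $G$ shared by $\iA$ and $\iB$ (so that the two instances are perturbations of one another) by attaching two small appendages to $\ins$, and realize ``forbid $e$'' and ``cover $d$'' through one swap on each side.

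For the construction I add to the graph of $\ins$ a new firm $\hat a$ joined only to $a$, so that $a$ becomes a degree-two worker with $N_a=\{b,\hat a\}$, and a new firm $g$ together with a new worker $p$ with edges $\{d,g\}$ and $\{g,p\}$, where $p$ is a leaf. All preferences of the original agents are inherited, $\hat a$ and $p$ are leaves, $d$ ranks its outside option $g$ last, and $g$ has $N_g=\{d,p\}$. The two instances differ only by one swap of the worker $a$ and one swap of the firm $g$: in $\iA$ we set $a\colon b\succ\hat a$ and $g\colon d\succ p$, while in $\iB$ we set $a\colon\hat a\succ b$ and $g\colon p\succ d$. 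Since $a$ and $g$ each have exactly two neighbors, each change is a swap of two adjacent alternatives, and the swapping agents lie on opposite sides. I would present this configuration (which needs $a$ and $d$ on the same side, so that the two swaps are by one worker and one firm) and note that the complementary side configuration is symmetric.

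Correctness rests on two gadget mechanisms, both verified via \Cref{thm:pop_char}. For the forbidden edge, in $\iB$ the worker $a$ prefers $\hat a$ to $b$; hence if $e=\{a,b\}\in M$, then the leaf $\hat a$ is unmatched and $\{a,\hat a\}$ is a $(-,-)$ edge from an unmatched vertex, so $M$ is unpopular in $\iB$. The same argument forces $\{a,\hat a\}\in M$, i.e.\ $a$ unmatched in $\ins$ and $e$ avoided. For the forced vertex, in $\iB$ the firm $g$ prefers $p$, so $\{d,g\}\in M$ would leave $p$ unmatched and create a $(-,-)$ edge $\{g,p\}$ from the unmatched $p$; thus $\{d,g\}\notin M$ and $\{g,p\}\in M$. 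Conversely, in $\iA$ the firm $g$ prefers $d$, so an unmatched $d$ would make $\{d,g\}$ a $(-,-)$ edge from an unmatched vertex, contradicting popularity in $\iA$; hence $d$ is matched to a genuine neighbor and is covered in $\ins$. Setting $M^*:=M\cap E^{\ins}$ then yields a matching of $\ins$ avoiding $e$ and covering $d$, and for the converse any popular $M^*$ of $\ins$ avoiding $e$ and covering $d$ extends to $M:=M^*\cup\{\{a,\hat a\},\{g,p\}\}$.

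The principal obstacle is proving that popularity transfers exactly between $M$ on $\iP$ and $M^*$ on $\ins$. In $\iB$ both connecting edges $\{a,b\}$ and $\{d,g\}$ become $(+,+)$ edges, because $a$ is matched to its $\iB$-favorite $\hat a$ and $g$ to its $\iB$-favorite $p$; the appendages detach, $G_M^{\iB}$ reduces to $G_{M^*}^{\ins}$ with $a$ isolated, and deleting edges cannot create violating cycles or paths, so popularity in $\iB$ is immediate. The delicate direction is $\iA$, where $\{a,b\}$ and $\{d,g\}$ survive as pendant paths ending in the matched leaves $\hat a$ and $p$. For the forward implication I would lift any violating alternating structure of $G_{M^*}^{\ins}$ into $G_M^{\iA}$; the only non-routine case is an alternating path that in $\ins$ starts from the unmatched leaf $a$ and contains a $(-,-)$ edge, which no longer starts at an unmatched vertex once $a$ is matched to $\hat a$. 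I would handle it by truncating at the first $(-,-)$ edge and prepending the matching edge $\{\hat a,a\}$, turning it into an alternating path \emph{ending} in a $(-,-)$ edge and hence a violation of condition~(iii) of \Cref{thm:pop_char} in $\iA$. For the reverse implication I would argue that the pendant paths $b$–$a$–$\hat a$ and $d$–$g$–$p$ introduce no $(-,-)$ edge and dead-end at leaves, so they cannot be part of any new violating structure. The special structure of \ForbEdgeForceVert is essential here: it guarantees that $b$ is matched to $c$ in any popular $M^*$ avoiding $e$, which pins down the status of $\{a,b\}$ and keeps the appendages inert.
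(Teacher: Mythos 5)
Your overall strategy is the same as the paper's: reduce from \ForbEddgeForceVert-style input (the paper also reduces from \ForbEdgeForceVert), keep the source instance intact, and realize ``forbid $e$'' and ``cover $d$'' via two swap gadgets; your $d$-gadget (firm $g$ adjacent to $d$ and a leaf $p$, with the swap $d\succ p$ versus $p\succ d$) is literally the paper's gadget $(\ell_d,r_d)$. The genuine gap is in your forbidden-edge gadget: you attach only a single leaf $\hat a$ to $a$, whereas the paper attaches $\ell_a$ to $a$ \emph{and} a further leaf $r_a$ to $\ell_a$. The vertex $r_a$ is not cosmetic. In the paper, any robust popular matching contains $\{a,\ell_a\}$, so $r_a$ stays unmatched, and every alternating path that in $\ins$ starts at the unmatched vertex $a$ lifts to an alternating path in $G_M^{\iA}$ starting at the \emph{unmatched} vertex $r_a$ (via $r_a$--$\ell_a$--$a$); thus violations of condition~(ii) of \Cref{thm:pop_char} in $\ins$ survive as condition~(ii) violations in $\iA$. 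In your construction all gadget vertices are matched, this witness disappears, and your proposed repair---prepend $\{\hat a,a\}$ and invoke condition~(iii)---misreads that condition: ``ending in $(-,-)$ edges'' means ending in $(-,-)$ edges at \emph{both} ends (equivalently, containing at least two $(-,-)$ edges). A path with a single $(-,-)$ edge whose other end is a matched vertex is no violation at all; if it were, every popular matching would be stable, since a blocking edge by itself is such a path.

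The gap is fatal, not merely a proof defect: your reduction maps some No-instances to Yes-instances. Take $\ins$ with workers $a,c,u_1,u_2$, firms $b,f,h$, edges $\{a,b\},\{b,c\},\{c,f\},\{f,u_1\},\{u_1,h\},\{h,u_2\}$, preferences $b\colon c\succ a$, $c\colon b\succ f$, $f\colon c\succ u_1$, $u_1\colon h\succ f$, $h\colon u_1\succ u_2$ (agents $a,u_2$ have unique neighbors), designated edge $e=\{a,b\}$ and forced vertex $d=u_2$. Any matching avoiding $e$ and covering $u_2$ must contain $\{b,c\}$ and $\{h,u_2\}$, and then either $\{f,u_1\}\in M^*$, in which case $\{\{a,b\},\{c,f\},\{u_1,h\}\}$ beats $M^*$ by margin $+1$, or $f,u_1$ are unmatched, in which case the blocking edge $\{f,u_1\}$ has unmatched endpoints; so $\ins$ is a No-instance of \ForbEdgeForceVert. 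Yet in your reduced pair $\iP$, the perfect matching $M=\{\{a,\hat a\},\{b,c\},\{f,u_1\},\{h,u_2\},\{g,p\}\}$ is popular in both $\iA$ and $\iB$: the underlying graph is a tree, every agent is matched, and the only $(-,-)$ edge in either instance is $\{u_1,h\}$, so none of the three conditions of \Cref{thm:pop_char} can be violated. (With the paper's gadget the same matching is defeated in $\iA$ by $\{\{r_a,\ell_a\},\{a,b\},\{c,f\},\{u_1,h\},\{r_d,\ell_d\}\}$, precisely because $r_a$ supplies the unmatched endpoint.) To make your argument sound you must add the second auxiliary vertex $r_a$ exactly as the paper does; the rest of your outline then goes through.
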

\begin{proof}
    First, note that membership in {\NP} is straightforward.
    A robust popular matching with respect to two given input instances of {\MatP} serves as a polynomial-size certificate for a Yes-instance.
    We can verify it by simply checking whether the matching is popular in both instances in polynomial time \citep[Theorem~9]{BIM10a}.

    For \NP-hardness, we perform a reduction from \ForbEdgeForceVert, as defined in \Cref{sec:problems}.
    For this, consider an instance of \ForbEdgeForceVert given by an instance $\ins$ of \MatP on the graph $(W\cup F, E)$ together with a designated edge $e = \{a,b\}\in E$ and a vertex $d\in W\cup F$, where $N_a^{\ins} = \{b\}$, $b$ only contains one neighbor $c$ other than $a$, and $b$ and $c$ top-rank each other.
    
    The idea of the reduced instance $\iP$ is to enhance the source instance by a set of auxiliary agents for the designated edge and vertex.
    An illustration of the key agents is given in \Cref{fig:forbiddenedge,fig:forcedvertex}.
    Let $N' = W\cup F \cup \{\ell_a, r_a, \ell_d,r_d\}$ be the agent set. 
    Moreover, the edge set is given by $E' = E \cup \{\{a,\ell_a\},\{r_a,\ell_a\},\{d,\ell_d\},\{r_d,\ell_d\}\}$.
    Hence, the underlying graph is still bipartite, but whether the auxiliary agents are workers or firms, depends on the agent types of $a$ and $d$.

    For $x\in \{a,d\}$, we define $S_x = \{x, \ell_x, r_x\}$, i.e., the set containing this agent and their associated auxiliary agents.
    
    \begin{figure}
        \centering
        \begin{tikzpicture}
            \pgfmathsetmacro\vertexsize{.85}
            \pgfmathsetmacro\halfspan{1.1}
            \pgfmathsetmacro\graphspan{2.2}
            \pgfmathsetmacro\graphheight{1.2}
            \node[draw, circle, minimum size = \vertexsize cm, label = 180:{\color{myred}$b\succ^{\iA}_a \ell_a$}] (a) at (0,0) {$a$};
            \node[draw, circle, minimum size = \vertexsize cm, label = 0:$c\succ^{\iA}_b a$] (b) at (\graphspan,0) {$b$};
            \node[draw, circle, minimum size = \vertexsize cm] (ra) at (0,\graphheight) {$r_a$};
            \node[draw, circle, minimum size = \vertexsize cm, label = 0:$a \succ^{\iA}_{\ell_a} r_a$] (la) at (\graphspan,\graphheight) {$\ell_a$};
            \node[draw, circle, minimum size = \vertexsize cm, label = 180:$b \succ^{\iA}_{c} \dots$] (lb) at (0,-\graphheight) {$c$};

            \draw[thick] (ra) -- (la) -- (a) -- (b) -- (lb);
            \draw[thick] (lb) -- ($(lb)+ (0:\halfspan)$);
            \draw[thick,dotted] ($(lb)+ (0:\halfspan)$) -- ($(lb)+ (0:\halfspan) + (0:.35)$);
            \draw[thick] (lb) -- ($(lb)+ (330:\halfspan)$);
            \draw[thick,dotted] ($(lb)+ (330:\halfspan)$) -- ($(lb)+ (330:\halfspan) + (330:.35)$);
        \end{tikzpicture}
        \quad
        \begin{tikzpicture}
            \pgfmathsetmacro\vertexsize{.85}
            \pgfmathsetmacro\halfspan{1.1}
            \pgfmathsetmacro\graphspan{2.2}
            \pgfmathsetmacro\graphheight{1.2}
            \node[draw, circle, minimum size = \vertexsize cm, label = 180:{\color{myred}$\ell_a\succ^{\iB}_a b$}] (a) at (0,0) {$a$};
            \node[draw, circle, minimum size = \vertexsize cm, label = 0:$c\succ^{\iB}_b a$] (b) at (\graphspan,0) {$b$};
            \node[draw, circle, minimum size = \vertexsize cm] (ra) at (0,\graphheight) {$r_a$};
            \node[draw, circle, minimum size = \vertexsize cm, label = 0:$a \succ^{\iB}_{\ell_a} r_a$] (la) at (\graphspan,\graphheight) {$\ell_a$};
            \node[draw, circle, minimum size = \vertexsize cm, label = 180:$b \succ^{\iB}_{c} \dots$] (lb) at (0,-\graphheight) {$c$};

            \draw[thick] (ra) -- (la) -- (a) -- (b) -- (lb);
            \draw[thick] (lb) -- ($(lb)+ (0:\halfspan)$);
            \draw[thick,dotted] ($(lb)+ (0:\halfspan)$) -- ($(lb)+ (0:\halfspan) + (0:.35)$);
            \draw[thick] (lb) -- ($(lb)+ (330:\halfspan)$);
            \draw[thick,dotted] ($(lb)+ (330:\halfspan)$) -- ($(lb)+ (330:\halfspan) + (330:.35)$);
        \end{tikzpicture}
        \caption{Auxiliary agents for the forbidden edge $e = \{a,b\}$ of the reduced instances in the proof of \Cref{thm:hardness}. The preferences in $\iA$ and $\iB$ are described in the left and right picture, respectively.
        The only agent that changes their preferences by a simple swap is agent~$a$, as highlighted in red.
        }
        \label{fig:forbiddenedge}
    \end{figure}
    
    \begin{figure}
        \centering
        \begin{tikzpicture}
            \pgfmathsetmacro\vertexsize{.85}
            \pgfmathsetmacro\halfspan{1.1}
            \pgfmathsetmacro\graphspan{2.2}
            \pgfmathsetmacro\graphheight{1.2}
            \node[draw, circle, minimum size = \vertexsize cm, label = 180:{$\dots \succ^{\iA}_d \ell_d$}] (a) at (0,0) {$d$};
            \node[draw, circle, minimum size = \vertexsize cm] (ra) at (0,\graphheight) {$r_d$};
            \node[draw, circle, minimum size = \vertexsize cm, label = 0:{\color{myred}$d \succ^{\iA}_{\ell_d} r_d$}] (la) at (\graphspan,\graphheight) {$\ell_d$};

            \draw[thick] (ra) -- (la) -- (a);
            \draw[thick] (a) -- ($(a)+ (0:\halfspan)$);
            \draw[thick,dotted] ($(a)+ (0:\halfspan)$) -- ($(a)+ (0:\halfspan) + (0:.35)$);
            \draw[thick] (a) -- ($(a)+ (330:\halfspan)$);
            \draw[thick,dotted] ($(a)+ (330:\halfspan)$) -- ($(a)+ (330:\halfspan) + (330:.35)$);
        \end{tikzpicture}
        \quad
        \begin{tikzpicture}
            \pgfmathsetmacro\vertexsize{.85}
            \pgfmathsetmacro\halfspan{1.1}
            \pgfmathsetmacro\graphspan{2.2}
            \pgfmathsetmacro\graphheight{1.2}
            \node[draw, circle, minimum size = \vertexsize cm, label = 180:{$\dots \succ^{\iB}_d \ell_d$}] (a) at (0,0) {$d$};
            \node[draw, circle, minimum size = \vertexsize cm] (ra) at (0,\graphheight) {$r_d$};
            \node[draw, circle, minimum size = \vertexsize cm, label = 0:{\color{myred}$r_d \succ^{\iB}_{\ell_d} d$}] (la) at (\graphspan,\graphheight) {$\ell_d$};

            \draw[thick] (ra) -- (la) -- (a);
            \draw[thick] (a) -- ($(a)+ (0:\halfspan)$);
            \draw[thick,dotted] ($(a)+ (0:\halfspan)$) -- ($(a)+ (0:\halfspan) + (0:.35)$);
            \draw[thick] (a) -- ($(a)+ (330:\halfspan)$);
            \draw[thick,dotted] ($(a)+ (330:\halfspan)$) -- ($(a)+ (330:\halfspan) + (330:.35)$);
        \end{tikzpicture}
        \caption{Auxiliary agents for the forced vertex $d$ of the reduced instances in the proof of \Cref{thm:hardness}. The preferences in $\iA$ and $\iB$ are described in the left and right picture, respectively.
        The only agent that changes their preferences by a simple swap is agent~$\ell_d$, as highlighted in red.
        }
        \label{fig:forcedvertex}
    \end{figure}

    The preferences are mostly inherited from the source instance.
    For $i\in (W\cup F)\setminus \{a,d\}$, we define $\succ_i^{\iA} = \succ_i^{\ins}$ and $\succ_i^{\iB} = \succ_i^{\ins}$.
    Moreover, for both instances, $X\in \{A,B\}$,
    \begin{itemize}
        \item the preference list of $\ell_a$ is $a \succ^{\iX}_{\ell_a} r_a$,
        \item the preference list of $d$ appends $\ell_d$ at the very end, i.e., for all $x,x'\in N_x^{\ins}$, we have $x \succ^{\iX}_{d} \ell_d$ as well as $x \succ^{\iX}_{d} x'$ if and only if $x \succ^{\ins}_{d} x'$, and
        \item the preference lists of $r_a$ and $r_d$ only contain their single neighbor.
    \end{itemize}

    The only difference in the preferences are for agents $a$ and $\ell_d$.
    Specifically, we have $b\succ^{\iA}_a \ell_a$, $\ell_a \succ^{\iB}_a b$, $d \succ^{\iA}_{\ell_d} r_d$, and $r_d \succ^{\iB}_{\ell_d} d$.
    Note that both preference orders differ in exactly a swap (of their only two neighbors).
    
    We are ready to prove the correctness of the reduction.
    To this end, we will show that $\ins$ contains a popular matching $M$ that covers $d$ and with $e\notin M$ if and only if the reduced instance contains a matching popular for both $\iA$ and $\iB$.

    $\implies$
    Assume first that $\ins$ contains a popular matching $M$ that covers $d$ and with $e\notin M$.
    Define the matching $M' = M \cup \{\{a,\ell_a\},\{r_d,\ell_d\}\}$.
    We claim that $M'$ is popular for both $\iA$ and $\iB$.
    
    Assume for contradiction that $M'$ is not popular for instance $\iA$ and that there exists a matching $\hat M'$ with $\pmarg^{\iA}(\hat M',M') > 0$.

    Define $\hat M = \{e\in \hat M'\colon e\subseteq W\cup F\}$, i.e., the matching $\hat M'$ restricted to agents present in the source instance.
    We will argue that $\pmarg^{\ins}(\hat M,M) > 0$.
    Let $i\in (W\cup F)\setminus \{a,d\}$.
    Then, it holds that $\hat M(i) = \hat M'(i)$ if $i$ is matched in $\hat M'$ or $i$ is unmatched in both.
    Since the preferences of $i$ are identical in $\ins$ and $\iA$, we have
    that $\vote_i^{\ins}(\hat M,M) = \vote_i^{\iA}(\hat M',M')$.
    We refer to this as Observation~$(\diamond)$.

    Next, consider agent $a$.
    We will now show that $\vote_a^{\ins}(\hat M,M) \ge \vote_{S_a}^{\iA}(\hat M',M')$, and refer to this fact as Observation~$(\diamond\diamond)$.

    First, note that, whenever $r_a$ votes in favor of $\hat M'$, then $\ell_a$ votes in favor of $M'$ and hence $\vote_{\{r_a,\ell_a\}}^{\iA}(\hat M',M') \le 0$.
    Now, recall that $a$ is unmatched in $M$ because $e\notin M$. 
    Then, if $a$ is unmatched in $\hat M$, then $\vote_a^{\ins}(\hat M,M) = 0$ while $\vote_a^{\iA}(\hat M',M') \le 0$ Indeed, either $a$ is still matched in $a$ or is unmatched, in which case they prefer $M'$.
    Hence, $\vote_a^{\ins}(\hat M,M) = 0 \ge \vote_a^{\iA}(\hat M',M') + \vote_{\{r_a,\ell_a\}}^{\iA}(\hat M',M') = \vote_{S_a}^{\iA}(\hat M',M')$, as desired.

    Otherwise, $a$ is matched in $\hat M$, i.e., $\{a,b\}\in \hat M$.
    Then, since $\hat M$ is the restriction of $\hat M'$, we have that $\{a,b\}\in \hat M'$, and hence $\vote_a^{\ins}(\hat M,M) = \vote_a^{\ins}(\hat M',M')$.
    Hence, we again have that $\vote_a^{\ins}(\hat M,M) \ge \vote_{S_a}^{\iA}(\hat M',M')$.
    This concludes the proof of Observation~$(\diamond\diamond)$.
    
    Next, consider agent $d$.
    We will now show that $\vote_d^{\ins}(\hat M,M) \ge \vote_{S_d}^{\iA}(\hat M',M')$, and refer to this fact as Observation~$(\diamond\diamond\diamond)$.

    First, note that, whenever $\ell_d$ votes in favor of $\hat M'$, then $r_d$ will be unmatched in $\hat M'$ and, therefore, votes in favor of $M'$.
    Hence, $\vote_{\{r_d,\ell_d\}}^{\iA}(\hat M',M') \le 0$.
    
    Now, recall that $d$ is matched in $M$.
    If $d$ is matched in $\hat M$, then they perform the same vote and we have $\vote_d^{\ins}(\hat M,M) = \vote_d^{\ins}(\hat M',M')$.
    Hence, $\vote_d^{\ins}(\hat M,M) \ge \vote_{S_d}^{\iA}(\hat M',M')$.
    Otherwise, $d$ is unmatched in $\hat M$. 
    Hence, they are unmatched in $\hat M'$ or $\hat M'(d) = \ell_d$, both of which is worse than $M'(d)$, because $\ell_d$ is last-ranked by $d$.
    Hence, $\vote_d^{\ins}(\hat M,M) = \vote_d^{\ins}(\hat M',M') = -1$, and we again conclude that $\vote_d^{\ins}(\hat M,M) \ge \vote_{S_d}^{\iA}(\hat M',M')$.
    This concludes the proof of Observation~$(\diamond\diamond\diamond)$.
    
    Combining Observations~$(\diamond)$, $(\diamond\diamond)$, and~$(\diamond\diamond\diamond)$, we obtain
    \begin{align*}
        \pmarg^{\ins}(\hat M, M) &= \sum_{x\in (W\cup F)\setminus \{a,d\}} \vote_x^{\ins}(\hat M, M) + \sum_{x\in \{a,d\}} \vote_x^{\ins}(\hat M, M)\\
        & \ge \sum_{x\in (W\cup F)\setminus \{a,d\}} \vote_x^{\iA}(\hat M', M') + \sum_{x\in \{a,d\}} \vote_{S_x}^{\iA}(\hat M', M')\\
        &= \pmarg^{\iA}(\hat M', M') > 0\text.
    \end{align*}
    This contradicts the popularity of $M$.
    Hence, we have derived a contradiction and $M'$ is popular for $\iA$.

    Now, since $\iB$ only differs from $\iA$ by a swap of agents that are not matching partners in $M'$, we have that for every matching $\hat M'$ and every agent $x\in W'\cup F'$, it holds that $\vote_x^{\iB}(M', \hat M') \ge \vote_x^{\iA}(M',\hat M')$.
    Therefore, $\pmarg^{\iB}(M',\hat M')\ge \pmarg^{\iA}(M',\hat M')$.
    Hence, the popularity of $M'$ in $\iB$ follows from the popularity of $M'$ in $\iA$.
    This concludes the proof of the first implication.

    $\impliedby$ Conversely, assume that $M'$ is a matching that is popular for both $\iA$ and $\iB$.
    Define the matching $M = \{f\in M'\colon f\subseteq W\cup F\}$.
    We will first show that $M$ covers $d$ and that  $e\notin M$.
    Subsequently, we will conclude by showing that $M$ is popular in the source.

    First, assume for contradiction that $d$ is not covered by $M$.
    Then $d$ is not covered by $M'$ or $\{d,\ell_d\}\in M'$.
    In the former case, consider the matching $\hat M'_1 = \{f\in M'\colon f\subseteq W\cup F\cup \{r_a,\ell_a\}\}\cup \{d,\ell_d\}$.
    Then, all agents $x\in (W\cup F\cup \{r_a,\ell_a\})\setminus\{d\}$ have the identical partners in $M'$ and $\hat M'_1$ (or no partners in both), and, therefore, $\vote_x^{\iA}(\hat M'_1,M') = 0$.
    Moreover, in $\iA$, we have that $\vote_d^{\iA}(\hat M'_1,M') = \vote_{\ell_d}^{\iA}(\hat M'_1,M') = 1$.
    Hence, even if $r_d$ prefers $M'$, it follows that $\pmarg^{\iA}(\hat M'_1,M') > 0$, contradicting that $M'$ is popular in $\iA$.
    Now consider the case when $\{d,\ell_d\}\in M'$.
    Define $\hat M'_2 = \{f\in M'\colon f\subseteq W\cup F\cup \{r_a,\ell_a\}\}\cup \{r_d,\ell_d\}$.
    As in the first case, all agents $x\in (W\cup F\cup \{r_a,\ell_a\})\setminus\{d\}$ satisfy $\vote^{\iB}_x(\hat M'_2,M') = 0$.
    Moreover, we have that $\vote_{r_d}^{\iB}(\hat M'_2,M') = \vote^{\iB}_{\ell_d}(\hat M'_2,M') = 1$.
    Hence, even if $d$ prefers $M'$, it follows that $\pmarg^{\iB}(\hat M'_2,M') > 0$, contradicting that $M'$ is popular in $\iB$.
    We have obtained a contradiction in both cases and conclude that $d$ is covered by $M$.

    Now, assume for contradiction that $e\in M$.
    Then, $e$ must already have been in $M'$.
    Recall that $b$ and $c$ are each other's most preferred partner.
    Hence, if $c$ was unmatched in $M'$, then replacing $e$ by $\{b,c\}$ would be more popular.
    Consider the matching $\hat M'_3 = (M'\setminus \{e,\{c,M(c)\}\})\cup \{\{b,c\},\{a,\ell_a\}\}$.
    We compute the popularity margin between $\hat M'_3$ and $M'$ in $\iB$.
    The only agents that are not necessarily indifferent between $\hat M'_3$ and $M'$ are $r_a$, $\ell_a$, $a$, $b$, $c$, and $M(c)$.
    However, for all $x\in \{\ell_a,a,b,c\}$, it holds that $\vote_{x}^{\iB}(\hat M'_3,M') = 1$.
    Hence, since only $M(c)$ and $r_a$ can be in favor of $M'$, we conclude that $\pmarg^{\iB}(\hat M'_3,M') > 0$, contradicting the popularity of $M'$ in $\iB$.
    Hence, it must hold that $e\notin M$.

    Finally, we claim that $M$ is popular in $\ins$.
    Assume for contradiction that there exists a matching $\hat M$ on $\ins$ with $\pmarg^{\ins}(\hat M, M) > 0$.
    We challenge $M'$ in $\iA$ with a matching that depends on whether or not $a$ is matched in $\hat M$.
    If $a$ is matched in $M$, consider $\hat M'_4 = \hat M \cup \{\{r_a,\ell_a\},\{r_d,\ell_d\}\}$.
    If $a$ is unmatched in $M$, consider $\hat M'_5 = \hat M \cup \{\{a,\ell_a\},\{r_d,\ell_d\}\}$.

    For most agents, we can determine the popularity margins for $\hat M'_4$ and $\hat M'_5$ simultaneously. Therefore, let $i\in \{4,5\}$.
    Now, for all agents in $x\in (W\cup F)\setminus\{a\}$, we have $\hat M(x) = \hat M'_i(x)$ or they are unmatched in both $\hat M$ and $\hat M'_i$ as well as $M(x) = M'(x)$ or they are unmatched in both.
    Hence, $\vote_x^{\iA}(\hat M'_i,M') = \vote_x^{\ins}(\hat M,M)$.
    We refer to this insight as Observation~$(\star)$.

    Moreover, we have that $\vote^{\iA}_{\{r_d,\ell_d\}}(\hat M'_i,M')\ge 0$.
    Indeed, this is immediate if $\{r_d,\ell_d\}\in M'$. Otherwise, we have $\vote^{\iA}_{r_d}(\hat M'_i,M') = 1$ and the inequality holds even if $\ell_d$ is in favor of $M'$.
    We refer to this insight as Observation~$(\star\star)$.

    Next, we know that $\{a,\ell_a\}\in M'$.
    Indeed, since $e\notin M'$, $a$ would be uncovered otherwise.
    Hence, including $\{a,\ell_a\}$ while possibly deleting $\{r_a,\ell_a\}$ would be preferred by $a$ and $\ell_a$ while only $r_a$ can be against it.
    
    We conclude by distinguishing whether $a$ is matched or not in $\hat M$.

    If $a$ is matched in $\hat M$, then, since $b\succ^{\iA}_a \ell_a$, it holds that $\vote_a^{\iA}(\hat M'_4, M') = 1$.
    Moreover, we have that $\vote_{r_a}^{\iA}(\hat M'_4, M') = 1$ while $\vote_{\ell_a}^{\iA}(\hat M'_4, M') = -1$.
    Hence, $\vote_{S_a}^{\iA}(\hat M'_4, M') = 1 = \vote_a^{\ins}(\hat M,M)$.
    Together with Observations~$(\star)$ and $(\star\star)$, we conclude that $\pmarg^{\iA}(\hat M'_4,M') \ge \pmarg^{\ins}(\hat M,M) > 0$, a contradiction. 
    
    Finally, if $a$ is not matched in $\hat M$, then all of $a$, $\ell_a$, and $r_a$ are in the same situation in $M'$ and $\hat M'$ while $a$ is unmatched in both $M$ and $\hat M$. Hence, $\vote_{S_a}^{\iA}(\hat M'_5,M') = 0 = \vote_{a}^{\ins}(M',M')$.
    Again, combining this with Observations~$(\star)$ and $(\star\star)$ yields $\pmarg^{\iA}(\hat M'_5,M') \ge \pmarg^{\ins}(\hat M,M) > 0$, a contradiction. 
    
    Since we have excluded all cases, there cannot be a matching more popular in $M$ in $\ins$.
    This concludes the proof.
\end{proof}

In light of the combination of \Cref{thm:twooneside,thm:hardness}, one can reason about the complexity of \RobustProb (and \RobustDom) when only two agents from the same side perform a swap.
A natural attempt towards a polynomial-time algorithm would be to generalize our technique involving hybrid instances from \Cref{sec:oneagent}.
However, even for this restricted setting, \Cref{lem:hybrid_complete_2}, and, therefore, \Cref{cor:robustedge}, break down.
We provide such an example in \Cref{app:twoagents}. 
We feel that algorithms reach a limitation and conjecture a hardness result even in this case.

\subsection{Unpopular Agents}

We continue the consideration of instances of \RobustProb with a common underlying graph, but from a different angle.
In this section, we consider agents that are not matched by any popular matching.
We refer to such an agent as an \emph{unpopular agent}.
All other agents are called \emph{popular agents}.
Given an instance $\ins$ of {\MatP}, let $\UI$ denote the set of unpopular agents in $\ins$.
The consideration of unpopular agents leads to a class of instances of \RobustProb that are trivially Yes-instances because popular matchings are maintained.

\begin{proposition}\label{prop:unpopular}
    Consider an instance $\iP$ of \RobustProb where only the preference orders of agents in $\UA$, i.e., of unpopular agents in $\iA$, differ in the perturbed instance.
    Then, $\iP$ is a Yes-instance of \RobustProb.
\end{proposition}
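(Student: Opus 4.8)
The plan is to exhibit an explicit robust popular matching, namely a stable matching of $\iA$. First I would invoke the Deferred Acceptance Algorithm to obtain a stable matching $M$ for $\iA$; by the result of \citet{Gard75a}, $M$ is popular for $\iA$. The decisive structural observation is that every agent in $\UA$ is, by definition of an unpopular agent, unmatched in \emph{every} popular matching of $\iA$, and in particular unmatched in $M$. Hence all agents whose preferences are perturbed between $\iA$ and $\iB$ are left unmatched by $M$.

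Next I would compare the popularity margins of $M$ against an arbitrary matching $M'$ across the two instances, mirroring the argument of \Cref{lem:unmatched}. For any agent $x$ that is unmatched in $M$, the value $\vote_x(M,M')$ equals $-1$ if $x$ is matched in $M'$ and $0$ otherwise; crucially, this value depends only on whether $x$ is matched in $M'$ and not at all on the preference order of $x$. Consequently $\vote_x^{\iA}(M,M') = \vote_x^{\iB}(M,M')$ for every such $x$, and in particular for every $x\in\UA$. For all remaining agents the preferences coincide in $\iA$ and $\iB$ by assumption, so their votes agree trivially.

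Summing over all agents then gives $\pmarg^{\iA}(M,M') = \pmarg^{\iB}(M,M')$ for every matching $M'$, where I use that the two instances share the same underlying graph so that the set of admissible matchings is identical. Since $M$ is popular for $\iA$, the left-hand side is nonnegative for all $M'$, whence the right-hand side is as well, so $M$ is popular for $\iB$. Thus $M$ is a robust popular matching and $\iP$ is a Yes-instance.

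I do not anticipate a genuine obstacle here: the statement is essentially a strengthening of \Cref{lem:unmatched}, where the single unmatched perturbing agent is replaced by the whole set of unpopular agents, all of which are simultaneously unmatched in any popular matching. The only point demanding a little care is verifying that the vote of an agent unmatched in $M$ is insensitive to that agent's preference list, which is immediate from the definition of $\vote$.
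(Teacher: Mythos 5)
Your proposal is correct and follows essentially the same argument as the paper: fix a popular matching of $\iA$, observe that every agent in $\UA$ is unmatched in it so their votes are insensitive to their preference lists, note that all other agents have identical preferences, and conclude that the popularity margin against any matching $M'$ is equal in $\iA$ and $\iB$. The only (immaterial) difference is that you instantiate $M$ as a stable matching obtained via Deferred Acceptance, which makes existence explicit, whereas the paper takes an arbitrary popular matching of $\iA$.
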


\begin{proof}
    Let $\iP$ be an instance of \RobustProb where $\iA$ and $\iB$ only differ with respect to the preference orders of agents in $\UA$.
    Let $M$ be a popular matching in $\iA$.
    We claim that $M$ is also popular for $\iB$.

    Let $M'$ be any other matching.
    Let $x\in (W\cup F)\setminus \UA$ be a popular agent. 
    Then, because the preferences of $x$ are the same in both instances, $\vote_x^{\iB}(M',M) = \vote_x^{\iA}(M',M)$.
    Now, let $x\in \UA$.
    Since $x$ is unmatched in $M$, $x$ votes in favor of $M'$ in both $\iA$ and $\iB$ if $x$ is matched in $M'$ and is indifferent between the two matchings if $x$ remains unmatched.
    Hence, once again $\vote_x^{\iB}(M',M) = \vote_x^{\iA}(M',M)$.
    Together, $\pmarg^{\iB}(M',M) = \pmarg^{\iA}(M',M)$.
    Hence, the popularity of $M$ in $\iB$ follows from the popularity of $M$ in $\iA$.
\end{proof}

Put differently, the computation of robust matchings is not sensitive to perturbances of agents that do not matter to popularity in the first place.
Notably, the set of unpopular agents can be computed efficiently:
We can compute their complement, i.e., the set of popular agents, by simply checking an instance of \PopEdge for every available edge.
Moreover, like for perturbations of one agent in \Cref{thm:multiple_instances}, \Cref{prop:unpopular} extends to multiple instances if these only differ with respect to perturbances of the preferences of the unpopular agents in one of these instances.

\subsection{Reduced Availability}\label{sec:redavb}

We turn to the consideration of \RobustProb for the case of alternated availability, i.e., the underlying graph may change while maintaining preference orders among common edges.
In particular, we consider the special case where the underlying graph is complete.
To this end, an instance $\ins$ of {\MatP} is said to be \emph{complete} if $G^{\ins}$ is the complete bipartite graph on vertex set $W\cup F$, i.e., the edge set is the Cartesian product of the set of workers and firms $E^{\ins} = W \times F$.
Note that if one of the {\MatP} instances of a \RobustProb instance is complete, then alternated availability is identical to reduced availability.
Our first result is an efficient algorithm for this case.

\begin{proposition}\label{prop:ReduceAcc}
    \RobustProb can be solved in polynomial time for input instances $\iP$ where $\iB$ evolves from $\iA$ by reducing availability and $\iA$ is complete.
\end{proposition}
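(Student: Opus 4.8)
The plan is to reduce robust popularity in this setting to a single popularity query on the complete instance $\iA$. Because $\iB$ evolves from $\iA$ by reducing availability, we have $E^{\iB}\subseteq E^{\iA}$, and each agent's preference order in $\iB$ is exactly the restriction of its order in $\iA$ to the neighbors that survive in $\iB$. The central step is to establish the following characterization: a matching $M$ is robust popular for $\iP$ if and only if $M\subseteq E^{\iB}$ and $M$ is popular in $\iA$. Equivalently, the robust popular matchings are precisely the popular matchings of the complete instance $\iA$ that avoid every edge of the forbidden set $F := E^{\iA}\setminus E^{\iB}$. Notably, this equivalence uses only reduced availability and not completeness; completeness will enter solely in the algorithmic step.

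One direction is immediate: a robust popular matching is in particular a matching of both instances, hence contained in $E^{\iA}\cap E^{\iB}=E^{\iB}$, and it is popular in $\iA$ by definition. The crux is the converse, and I would argue it as follows. Let $M\subseteq E^{\iB}$ be popular in $\iA$, and let $M'$ be an arbitrary matching of $\iB$. Then $M'\subseteq E^{\iB}\subseteq E^{\iA}$, so $M'$ is also a matching of $\iA$ and thus a legitimate competitor there. Since both $M$ and $M'$ use only $\iB$-edges, the partners $M(x)$ and $M'(x)$ of every agent $x$ are $\iB$-neighbors, on which $\succ_x^{\iA}$ and $\succ_x^{\iB}$ coincide; moreover the matched/unmatched status of $x$ does not depend on the instance. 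Hence $\vote_x^{\iB}(M,M')=\vote_x^{\iA}(M,M')$ for every $x$, and summing gives $\pmarg^{\iB}(M,M')=\pmarg^{\iA}(M,M')\ge 0$, where the inequality holds because $M$ is popular in $\iA$. As $M'$ was arbitrary, $M$ is popular in $\iB$, so $M$ is robust popular. I expect this transfer argument to be the main obstacle, and it is precisely here that reduced availability is essential: it guarantees that every competitor of $M$ in $\iB$ is also a competitor in $\iA$, so popularity in $\iA$ cannot be defeated by a matching that only becomes available in $\iB$.

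With the characterization in place, the algorithm is immediate: it suffices to decide whether the complete instance $\iA$ admits a popular matching avoiding the forbidden set $F$, and to return such a matching if it exists. By the results of \citet{CsKa17a} for complete instances, maximum-weight popular matchings can be computed in polynomial time, and one can therefore find a popular matching avoiding any prescribed edge set or report that none exists. Concretely, I would assign weight $0$ to every edge of $E^{\iB}$ and weight $-1$ to every edge of $F$, compute a maximum-weight popular matching $M^\ast$ of $\iA$, and observe that $M^\ast$ has weight $0$ exactly when it avoids $F$; a popular matching of $\iA$ avoiding $F$ exists if and only if the maximum weight equals $0$. If so, $M^\ast$ is returned as a robust popular matching by the characterization, and otherwise $\iP$ is a No-instance. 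Since all invoked subroutines run in polynomial time, this solves \RobustProb in polynomial time under the stated hypotheses.
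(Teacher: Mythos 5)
Your proposal is correct and follows essentially the same route as the paper's proof: both reduce the problem to a maximum-weight popular matching computation on the complete instance $\iA$ with weight $0$ on edges of $E^{\iB}$ and $-1$ on removed edges, and both justify this via the observation that any matching of $\iB$ is also a matching of $\iA$ with identical popularity margin (you merely spell out this transfer argument vote-by-vote, where the paper states it in one line). One cosmetic remark: your forbidden set is named $F$, which collides with the paper's notation for the set of firms, so a different symbol would be advisable.
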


\begin{proof}
    We show how to solve the problem by solving a maximum weight popular matching problem.
    Consider an instance $\ins$ for {\MatP} and assume that we are given a weight function $w\colon E^{\ins}\to \mathbb Q$.
    The \emph{weight} of a matching is defined as $w(M) :=\sum_{e\in M} w(e)$.
    It is known that the problem of computing a matching of maximum weight among popular matchings can be solved in polynomial time for complete instances of {\MatP} \citep{CsKa17a}.

    Now, consider an instance $\iP$ of \RobustProb where $\iB$ evolves from $\iA$ by reducing availability and $\iA$ is complete.
    We define the weight function $w\colon E^{\iA} \to \{-1,0\}$ by $w(e) = 0$ if $e\in E^{\iB}$ and $w(e) = -1$, otherwise. 
    
    We claim that $\iP$ is a Yes-instance of \RobustProb if and only if the maximum weight popular matching in $\iA$ with respect to $w$ has a weight of~$0$.

    First, assume that $M$ is a popular matching for both $\iA$ and $\iB$.
    Then, $M\subseteq E^{\iB}$ and $M$ is a popular matching for $\iA$ with $w(M) = 0$.

    Conversely, if $M$ is a popular matching for $\iA$ with $w(M) = 0$.
    Then, $M\subseteq E^{\iB}$.
    Moreover, any other matching for $\iB$ is also a matching in $\iA$ with the identical popularity margin.
    Hence, the popularity of $M$ for $\iB$ follows from the popularity of $M$ for $\iA$.
\end{proof}

Interestingly, we can still extend \Cref{prop:ReduceAcc} to the case of multiple instances.
As long as one instance is complete, all other instances may differ by arbitrary altered availability.
For a proof, we can simply adjust the weight function in the proof of \Cref{prop:ReduceAcc} to be $0$ for edges present in \emph{all} instances.

\begin{proposition}
    There exists a polynomial-time algorithm for the following problem:
    Given a collection of {\MatP} instances $(\ins_1,\dots, \ins_k)$, where $\ins_1$ is complete and all instances only differ by altered availability, does there exist a matching that is popular for $\ins_i$ for all $1\le i\le k$?
\end{proposition}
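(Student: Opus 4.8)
The plan is to mirror the proof of \Cref{prop:ReduceAcc}, replacing the two-instance weight function by one that records membership in \emph{all} instances. The key enabling fact is that completeness of $\ins_1$ forces $E^{\ins_i}\subseteq E^{\ins_1} = W\times F$ for every $i$, and that, under altered availability, the preference order of each agent $x$ in $\ins_1$ coincides with the common underlying order $\succ_x$ whose restriction to $N_x^{\ins_i}$ yields $\succ_x^{\ins_i}$. I therefore set $E^* := \bigcap_{i=1}^k E^{\ins_i}$ and define a weight function $w\colon E^{\ins_1}\to \{-1,0\}$ by $w(e) = 0$ if $e\in E^*$ and $w(e) = -1$ otherwise. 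I then compute a maximum weight popular matching $M$ in $\ins_1$ with respect to $w$, which is possible in polynomial time because $\ins_1$ is complete \citep{CsKa17a}. The algorithm accepts if $w(M) = 0$ and rejects otherwise; since all weights are nonpositive integers, the maximum weight is $0$ exactly when some popular matching in $\ins_1$ is contained in $E^*$.

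For correctness I would establish that the collection admits a robust popular matching if and only if this maximum weight equals $0$. The forward direction is immediate: a matching popular in every $\ins_i$ is in particular a matching in each $\ins_i$, hence contained in $E^*$ with weight $0$, and being popular in $\ins_1$ it witnesses that the maximum weight is $0$. For the converse, suppose $M$ is popular in $\ins_1$ with $w(M) = 0$; then every edge of $M$ has weight $0$, so $M\subseteq E^*$, and in particular $M$ is a valid matching in each $\ins_i$.

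The crux, and where completeness of $\ins_1$ is essential, is transferring popularity from $\ins_1$ to each $\ins_i$. Fix $i$ and an arbitrary matching $M'$ in $\ins_i$. Because $\ins_1$ is complete, $M'\subseteq E^{\ins_i}\subseteq E^{\ins_1}$, so $M'$ is also a matching in $\ins_1$; thus every competitor in $\ins_i$ is already accounted for in $\ins_1$. Moreover, both $M$ and $M'$ only assign each agent $x$ a partner in $N_x^{\ins_i}$, and $\succ_x^{\ins_i}$ is the restriction of $\succ_x$ (which also governs $\ins_1$) to $N_x^{\ins_i}$, so every agent casts the same vote in both instances and $\pmarg^{\ins_i}(M,M') = \pmarg^{\ins_1}(M,M') \ge 0$. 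As $M'$ and $i$ were arbitrary, $M$ is popular in every $\ins_i$ and hence robust popular. I expect this vote-preservation step to be the only delicate point: it would break if some $\ins_i$ introduced a partner outside $N_x^{\ins_1}$, but completeness of $\ins_1$ precludes exactly this, so the argument of \Cref{prop:ReduceAcc} carries over verbatim with $E^*$ in place of $E^{\iB}$.
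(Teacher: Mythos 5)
Your proof is correct and takes essentially the same approach as the paper: it reduces the problem to a maximum-weight popular matching computation in the complete instance $\ins_1$, with weight $0$ on edges present in all instances and $-1$ otherwise, and transfers popularity from $\ins_1$ to each $\ins_i$ by observing that every matching of $\ins_i$ is also a matching of $\ins_1$ on which every agent casts the same vote, so the popularity margins coincide. One small remark: your weight function, which is zero exactly on the intersection $\bigcap_{i=1}^k E^{\ins_i}$, is the correct one — the paper's proof writes $\bigcup_{i=1}^k E_i$ at that point, which is a typo, as its surrounding claims (and the preceding discussion of edges ``present in \emph{all} instances'') make clear that the intersection is intended.
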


\begin{proof}
    Assume that $(\ins_1,\dots, \ins_k)$ is a collection of {\MatP} instances, where $G^{\ins_i} = (W\cup F,E_i)$ for $1\le i \le k$ and $E_1 = W\times F$, i.e., the first instance $\ins_1$ is complete.
    We claim that there exists a matching that is popular for $\ins_i$ for all $1\le i\le k$ if and only if there exists a popular matching $M$ for $\ins_1$ with $M\subseteq \bigcap_{i=1}^k E_i$.
    First, if $M$ is popular for $\ins_i$ for all $1\le i\le k$, then $M$ is in particular popular for $\ins_1$.
    Moreover, since $M$ is a matching for each instance, it holds that $M\subseteq E_i$ for all $1\le i\le k$.
    
    Conversely, assume that there exists a popular matching $M$ for $\ins_1$ with $M\subseteq \bigcap_{i=1}^k E_i$.
    Let $2\le i\le k$.
    Then, $M$ is a matching in $\ins_i$.
    Moreover, any other matching for $\ins_i$ is also a matching in $\ins_1$ with the identical popularity margin.
    Hence, the popularity of $M$ for $\ins_i$ follows from the popularity of $M$ for $\ins_1$.

    Now, consider the weight function $w\colon E^{\ins_1} \to \{-1,0\}$ by $w(e) = 0$ if $e\in \bigcup_{i=1}^k E_i$ and $w(e) = -1$, otherwise.
    Then, $M$ is a popular matching $M$ for $\ins_1$ with $M\subseteq \bigcap_{i=1}^k E_i$ if and only if the maximum weight popular matching in $\ins_1$ with respect to $w$ has weight $0$.
    Hence, as in \Cref{prop:ReduceAcc}, we can compute and check such a matching in polynomial time.
\end{proof}

However, the restriction that $\iA$ is a complete instance is essential for \Cref{prop:ReduceAcc} to hold.
If we drop it, we obtain a computational intractability as we show next.
The proof of this result is a simpler version of the proof of \Cref{thm:hardness} that works without auxiliary agents.

\begin{restatable}{proposition}{PropNPRedAva}
    \RobustProb is \NP-complete for input instances $\iP$ where $\iB$ evolves from $\iA$ by reducing availability.
\end{restatable}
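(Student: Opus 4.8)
The plan is to prove $\NP$-completeness by reducing from \ForbEdge, the problem of deciding whether an instance $\ins$ of \MatP admits a popular matching that avoids two designated edges $e$ and $e'$; this problem is $\NP$-complete \citep{FKPZ19a}. Membership in \NP is immediate, since a robust popular matching is a polynomial-size certificate whose popularity in both instances can be verified in polynomial time \citep[Theorem~9]{BIM10a}. The guiding observation is that, in the reduced-availability regime, forbidding an edge is free: one simply deletes it in the second instance. This is precisely the effect that the auxiliary agents in \Cref{thm:hardness} had to simulate through preference swaps, so the present reduction can dispense with auxiliary agents altogether, which is what makes it a simpler version of that proof.

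Concretely, given an instance $(\ins, e, e')$ of \ForbEdge, I would set $\iA := \ins$ and let $\iB$ be the instance obtained from $\iA$ by deleting the two edges $e$ and $e'$, while inheriting all preference orders over the remaining edges. Since $E^{\iB} = E^{\iA}\setminus\{e,e'\}\subseteq E^{\iA}$ and the preferences over common neighbours coincide, $\iB$ indeed evolves from $\iA$ by reducing availability, and the construction is clearly computable in polynomial time. It then remains to establish the correspondence that the pair $\iP$ admits a robust popular matching if and only if $\ins$ admits a popular matching that avoids both $e$ and $e'$.

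For the correctness I would show that a matching $M$ is popular for both $\iA$ and $\iB$ if and only if $M$ is popular for $\ins$ with $e,e'\notin M$. One direction is essentially definitional: if $M$ is popular for both instances, then being a matching in $\iB$ forces $e,e'\notin M$, while popularity in $\iA = \ins$ is just popularity in the source instance. The converse is the crux, and it rests on the same reasoning already used in the converse direction of \Cref{prop:ReduceAcc}: every matching of $\iB$ is also a matching of $\iA = \ins$, and for any two matchings $M, M'$ of $\iB$ the partners relevant to the agents' votes are unchanged, so $\pmarg^{\iB}(M,M') = \pmarg^{\ins}(M,M')$. Hence, if $M$ is popular for $\ins$ and avoids $e,e'$ (so that $M\subseteq E^{\iB}$), then $M$ is a matching in both instances; it is popular for $\iA = \ins$ by assumption, and popular for $\iB$ because every challenger $M'$ in $\iB$ is also a challenger in $\ins$, against which $M$ does not lose.

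The step I expect to require the most care is exactly this converse: one must verify that restricting the available partners in $\iB$ cannot create a \emph{new} matching defeating $M$, and that popularity in the smaller instance is a genuinely weaker demand than popularity in $\ins$ rather than an incomparable one. This holds because deleting edges only removes competing matchings, never introduces them, and never alters the votes of the surviving matchings, so no structural analysis via the characterisation of \Cref{thm:pop_char} is needed. With the equivalence in place, the reduction maps Yes-instances to Yes-instances and No-instances to No-instances, which together with membership in \NP yields the claimed \NP-completeness.
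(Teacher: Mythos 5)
Your proposal is correct and follows essentially the same route as the paper's own proof: a reduction from \ForbEdge in which $\iA$ is the source instance and $\iB$ is obtained by deleting the two forbidden edges, with correctness resting on the observation that matchings of $\iB$ are matchings of $\ins$ with identical popularity margins. The paper's argument is the same in every essential respect, including the \NP-membership certificate.
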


\begin{proof}
    Membership in {\NP} holds as in the proof of \Cref{thm:hardness}.
    For \NP-hardness, we perform another reduction from \ForbEdge as defined in \Cref{sec:problems}.
    Consider an instance of \ForbEdge given by an instance $\ins$ of \MatP on the graph $(W\cup F, E)$ together with two designated edges $e,e'\in E$. 
    We define a reduced instance $\iP$ of \RobustProb where $\iA$ is identical to $\ins$.
    Moreover, $\iB$ is defined by $G^{\iB} = (W\cup F, E^{\ins}\setminus \{e,e'\})$.
    For agent $x\in W\cup F$, preferences are inherited from $\ins$, i.e., for all $y,z\in N_x^{\iB}$, we define $y\succ_x^{\iB} z$ if and only if $y\succ_x^{\ins} z$.

    If $M$ is a popular matching for $\ins$ with $M\cap \{e,e'\} = \emptyset$, then $M$ is popular for $\iA$ because $\iA$ is identical to $\ins$.
    It is also popular for $\iB$ because any other matching for $\iB$ is also a matching in $\ins$ with the identical popularity margin.
    Conversely, if $M$ is a robust popular matching for $\iP$, then $M\cap \{e,e'\} = \emptyset$ because $M$ is a matching for $\iB$.
    In addition, $M$ is popular for $\ins$ because it is popular for $\iA$.
    This establishes the correctness of the reduction and completes the proof.
\end{proof}

Notably, by inspection of the proof, \RobustProb is already \NP-complete if the input instances only differ by reducing availability where two edges are removed.
By contrast, if only one edge is removed, the problem is equivalent to computing a popular matching with a single forbidden edge.
This problem can be solved in polynomial time \citep{FKPZ19a}.

\subsection{Robust Popular Matchings in Related Models}
We conclude our result section by discussing robustness of popularity in related models. 

First, robustness of matchings can be defined for other models of popularity.
As mentioned earlier, there exists the concept of strongly popular matchings, which have a strictly positive popularity margin against any other matching.
Since strongly popular matchings are unique and can be computed in polynomial time \citep{BIM10a}, robust strongly popular matchings can also be computed in polynomial time, whenever they exist: One can simply check if strongly popular matchings exist in all input instances and compare them.

Second, one can consider popularity for mixed matchings, which are probability distributions over deterministic matchings, and popularity is then defined as popularity in expectation \citep{KMN11a}.
Popular mixed matchings correspond to the points of a tractable polytope for which feasible points can be identified in polynomial time.
One can intersect the polytopes for multiple instances and still obtain a tractable polytope.
This approach yields a polynomial time algorithm to solve \RobustProb for mixed matchings and can even be applied for roommate games.
In these games, the input graph is not required to be bipartite anymore and the linear programming method can still be applied \citep{BrBu20a}.
Notably, this approach cannot be used to determine deterministic matchings.
Even if the polytopes for all input instances are integral, the intersection of the polytopes may be nonempty but not contain integral points anymore.
We discuss technical details concerning mixed popularity including
such an example in \Cref{app:LPlimit}.

Finally, one can consider popularity for more general input instances.
However, this quickly leads to intractabilities because the existence of popular matchings may not be guaranteed any more.
For example, it is \NP-hard to decide whether a popular matching exists if we consider roommate games \citep{FKPZ19a} or if we have bipartite graphs but weak preferences \citep{BIM10a}.
These results immediately imply \NP-hardness of \RobustProb because one can simply duplicate the source instance, and a robust popular matching exists if and only if the source instance admits a popular matching.

\section{Conclusion}
We have initiated the study of robustness for popular matchings by considering the algorithmic question of determining a popular matching across two or more instances of matching under preferences.
We investigated this problem for two restrictions.
First, we assumed that agents perturb only their preferences over a fixed set of available matching partners.
When a single agent perturbs their preference order, we presented a polynomial-time algorithm for solving \RobustProb, based on solving \PopEdge on suitably defined hybrid instances.
We demonstrate the applicability of this approach by using it to solve \RobustDom as well.
By contrast, we showed that \RobustProb and \RobustDom become \NP-complete even for the case where only two agents from the same side perturb their preferences. 
We obtain another hardness result for \RobustProb if only two (different-side) agents perform the simple operation of a swap of two adjacent alternatives.
Additionally, we identified a class of Yes-instances for \RobustProb in which only unpopular agents perturb their preference orders.

Moreover, we considered \RobustProb under reduced availability.
We established a complexity dichotomy based on preference completeness.
If one input instance is complete, \RobustProb can be solved efficiently by reducing it to a maximum-weight popular matching problem.
However, if this is not the case, we once again obtain an \NP-completeness result.

We believe that our work offers various exciting future research directions, and we conclude by discussing some of these.
First, as discussed at the end of \Cref{sec:2hard}, an immediate open problem is the complexity of \RobustProb and \RobustDom if only two agents from one side may change their preference orders by performing a swap. 
Notably, our approach of defining hybrid instances already faces limitations in this restricted case.
We, therefore, conjecture hardness even for this case.
Another specific open problem is to consider \RobustDom for reduced availability, since we only considered \RobustProb in \Cref{sec:redavb}.

A further interesting avenue would be to consider the problem of computing a matching of maximum size among robust popular matchings, whenever such a matching exists.
Note that, since dominant matchings are maximum-size popular matchings, this task can be performed in polynomial time for a single instance \citep{CsKa17a}.
This problem could, however, be more involved than robust popularity or robust dominance.
In fact, a maximum-size robust popular matching is not the same as a robust dominant matching, even if only one agent perturbs their preferences, see \Cref{app:dominant}.
We suspect hardness for computing maximum-size robust popular matchings even for this case. 

On a different note, it would be interesting to explore escape routes to our discovered hardness results.
For this, one could try to efficiently find matchings offering a compromise between popularity in each of the input instances.
For instance, one could attempt to find popular matchings in the second instance that have a large overlap with a given popular matching in the first instance.
For complete instances, this can be done by finding a maximum weight popular matching problem similar to the approach for \Cref{prop:ReduceAcc}.
In general, defining and investigating other notions of compromise matchings may lead to intriguing further discoveries.

\section*{Acknowledgements}
    A preliminary version of this paper appeared in the proceedings of the 23rd International Conference on Autonomous Agents and Multiagent Systems
    (AAMAS 2024).
    Most of this work was done when Martin Bullinger was at the University of Oxford.
	This work was supported in part by the AI Programme of The Alan Turing Institute and the NSF under grant CCF-2230414.
    We would like to thank Vijay Vazirani and Telikepalli Kavitha for their valuable discussions, and the anonymous reviewers from AAMAS for their comments.

\clearpage

\appendix

\section*{Appendix}

In the appendix, we provide further insight on challenges for applying our technique based on hybrid instances for two agents permuting their preference orders, as well as limitations of approaches based on linear programming.

\section{Hybrid Instances for Two Agents}\label{app:twoagents}

In this appendix, we discuss difficulties for defining hybrid instances similar to \Cref{sec:oneagent} if two agents perturb their preferences. 
The natural generalization of our approach using hybrid instances for a number of agents changing their preferences larger than one would be to 
\begin{enumerate}
    \item consider any possible combination of matching partners for the agents changing preference orders,
    \item for each combination, define a hybrid instance by moving all agents preferred to the designated partners in any input instance to the top of their preference orders, and
    \item investigate whether the hybrid instance contains a popular matching containing all of the designated edges.
\end{enumerate}

Unfortunately, this road map leads to multiple challenges.
The first step already has an exponential blowup with respect to the number of agents changing their preferences.
While this is undesirable, we might however still find a reasonable approach for a small fixed number of agents or even a fixed-parameter tractability with respect to the number of agents changing their preferences.

Moreover, the last step of this construction can lead to computational problems because finding a popular matching with at least two forced edges is \NP-complete \citep{FKPZ19a}.
As a solution, we could further restrict our input.
If we assume that the input instances are complete, we can solve the algorithm problem of finding a popular matching containing any subset of edges in polynomial time by solving a maximum weight popular matching problem 
\citep{CsKa17a}.
So, let us assume that we can deal with this challenge.

However, even then, we face further difficulties.
The following example shows that \Cref{lem:hybrid_complete_2} breaks down.

\begin{figure*}
	\begin{wbox}
	\centering
	\begin{minipage}{.25\textwidth}
        \centering
		\resizebox{.98\textwidth}{!}{
		\begin{tabular}{l|llll}
			$w_1$  & \color{myred}$f_1$ & \color{myred}$f_2$ & $f_3$ & $f_4$\\
			$w_2$  & \color{myred}$f_3$ & \color{myred}$f_4$ & $f_1$ & $f_2$\\
			$w_3$  & $f_4$ & $f_2$ & $f_1$ & $f_3$\\
			$w_4$  & $f_3$ & $f_1$ & $f_2$ & $f_4$
        \end{tabular} }
  
		\hspace{1cm}
		
		{\small Workers' preferences\\ in $\iA$}
		
		\hspace{1cm}
		
	\end{minipage}%
	\begin{minipage}{.25\textwidth}
        \centering
		\resizebox{.98\textwidth}{!}{
		\begin{tabular}{l|llll}
			$w_1$  & \color{myred}$f_2$ & \color{myred}$f_1$ & $f_3$ & $f_4$\\
			$w_2$  & \color{myred}$f_4$ & \color{myred}$f_3$ & $f_1$ & $f_2$\\
			$w_3$  & $f_4$ & $f_2$ & $f_1$ & $f_3$\\
			$w_4$  & $f_3$ & $f_1$ & $f_2$ & $f_4$
        \end{tabular}}
  
		\hspace{1cm}
		
		{\small Workers' preferences\\ in $\iB$} 
		
		\hspace{1cm}
		
	\end{minipage}%
	\begin{minipage}{.25\textwidth}
        \centering
		\resizebox{.98\textwidth}{!}{
		\begin{tabular}{l|llll}
			$f_1$  & $w_1$ & $w_2$ & $w_3$ & $w_4$\\
			$f_2$  & $w_1$ & $w_2$ & $w_3$ & $w_4$\\
			$f_3$  & $w_2$ & $w_4$ & $w_1$ & $w_3$\\
			$f_4$  & $w_2$ & $w_4$ & $w_1$ & $w_3$
        \end{tabular}}
  
		\hspace{1cm}
		
		{\small Firms' preferences\\ in $\iA$ and $\iB$}
		
		\hspace{1cm}
		
	\end{minipage}%
	\begin{minipage}{.25\textwidth}
        \centering
		\resizebox{.98\textwidth}{!}{
		\begin{tabular}{l|llll}
			$w_1$  & \color{myred}$f_2$ & \color{myred}$f_1$ & $f_3$ & $f_4$\\
			$w_2$  & \color{myred}$f_3$ & \color{myred}$f_4$ & $f_1$ & $f_2$\\
			$w_3$  & $f_4$ & $f_2$ & $f_1$ & $f_3$\\
			$w_4$  & $f_3$ & $f_1$ & $f_2$ & $f_4$
        \end{tabular}}
  
		\hspace{1cm}
		
		{\small Workers' preferences\\ in $\iHH$} 
		
		\hspace{1cm}
		
	\end{minipage}%
	\end{wbox}
	\caption{Difficulties for defining hybrid instances when two agents swap the preference order for two possible matching partners.
    In \Cref{ex:twoswap}, we define an instance $\iP$ where $w_1$ and $w_2$ both swap the preference order for their two most preferred firms.
    The matching $M = \{\{w_1,f_1\}, \{w_2,f_4\}, \{w_3,f_2\}, \{w_4,f_3\}\}$ is popular for both $\iA$ and $\iB$ but not for $\iHH$.}
	\label{fig:twoswap} 
\end{figure*}

\begin{example}\label{ex:twoswap}
Consider the instance $\iP$ defined in \Cref{fig:twoswap}.
In both instances of \MatP, the set of workers and firms is $W = \{w_1, w_2, w_3, w_4\}$ and $F = \{f_1, f_2, f_3, f_4\}$.
The firms have identical preference orders in both instances, whereas the workers $w_1$ and $w_2$ both swap the preference order for their two most preferred firms.
We want to check whether there exists a common popular matching containing $e = \{w_1,f_1\}$ and $e' = \{w_2, f_4\}$.
To this end, we define a hybrid instance $\iHH$, where $w_1$ ranks all firms preferred to $f_1$ in $\iA$ and $\iB$ above $f_1$, and $w_2$ ranks all firms preferred to $f_4$ in $\iA$ and $\iB$ above $f_4$.
We leave the remaining preference orders as in $\iA$ or in $\iB$.
The resulting preference orders are depicted in \Cref{fig:twoswap} to the very right.

Consider the matching $$M = \{\{w_1,f_1\}, \{w_2,f_4\}, \{w_3,f_2\}, \{w_4,f_3\}\}\text.$$
One can show that $M$ is popular for both $\iA$ and $\iB$.
A fast way to do this is to use the combinatorial description of popular matchings of \Cref{thm:pop_char}.
We draw a graph with vertex set $W\cup F$ and edges corresponding to the matching $M$.
Then, we add the $(+,+)$ and $(+,-)$ edges for agent pairs such that both or one of the edge's endpoints prefer the partner in this edge to their matching partner in $M$.
We draw a straight red edge for a $(+,+)$ edge that is preferred by both endpoints and a dashed red edge for a $(+,-)$ that is preferred by precisely one of its endpoints.
The left graph in \Cref{fig:TwoSwapPop} depicts this situation for $M$ in $\iA$ and the right graph for $M$ in $\iB$.
According to the characterization in \Cref{thm:pop_char}, $M$ is popular in both instances because it admits no undesired alternating paths or cycles.

\begin{figure}
    \centering
    \begin{tikzpicture}
            \pgfmathsetmacro\graphspan{2.5}
            \pgfmathsetmacro\graphheight{1.2}
            \pgfmathsetmacro\nodesize{.6cm}
            \node[draw, circle, minimum size = \nodesize] (w1) at (0,\graphheight) {$w_1$};
            \node[draw, circle, minimum size = \nodesize] (w2) at (0,0) {$w_2$};
            \node[draw, circle, minimum size = \nodesize] (w3) at (0,-\graphheight) {$w_3$};
            \node[draw, circle, minimum size = \nodesize] (w4) at (0,-2*\graphheight) {$w_4$};

            \node[draw, circle, minimum size = \nodesize] (f1) at (\graphspan,\graphheight) {$f_1$};
            \node[draw, circle, minimum size = \nodesize] (f2) at (\graphspan,0) {$f_2$};
            \node[draw, circle, minimum size = \nodesize] (f3) at (\graphspan,-\graphheight) {$f_3$};
            \node[draw, circle, minimum size = \nodesize] (f4) at (\graphspan,-2*\graphheight) {$f_4$};

            \node at (.5*\graphspan, -2.7*\graphheight) {$\iA$};
            
            \draw[thick] (w1) -- (f1);
            \draw[thick] (w2) -- (f4);
            \draw[thick] (w3) -- (f2);
            \draw[thick] (w4) -- (f3);

            \draw[thick, red, dashed] (w3) -- (f4);
            \draw[thick, red, dashed] (w1) -- (f2);
            \draw[thick, red] (w2) -- (f3);
    \end{tikzpicture}
    \qquad
    \begin{tikzpicture}
            \pgfmathsetmacro\graphspan{2.5}
            \pgfmathsetmacro\graphheight{1.2}
            \pgfmathsetmacro\nodesize{.6cm}
            \node[draw, circle, minimum size = \nodesize] (w1) at (0,\graphheight) {$w_1$};
            \node[draw, circle, minimum size = \nodesize] (w2) at (0,0) {$w_2$};
            \node[draw, circle, minimum size = \nodesize] (w3) at (0,-\graphheight) {$w_3$};
            \node[draw, circle, minimum size = \nodesize] (w4) at (0,-2*\graphheight) {$w_4$};

            \node[draw, circle, minimum size = \nodesize] (f1) at (\graphspan,\graphheight) {$f_1$};
            \node[draw, circle, minimum size = \nodesize] (f2) at (\graphspan,0) {$f_2$};
            \node[draw, circle, minimum size = \nodesize] (f3) at (\graphspan,-\graphheight) {$f_3$};
            \node[draw, circle, minimum size = \nodesize] (f4) at (\graphspan,-2*\graphheight) {$f_4$};

            \node at (.5*\graphspan, -2.7*\graphheight) {$\iB$};

            \draw[thick] (w1) -- (f1);
            \draw[thick] (w2) -- (f4);
            \draw[thick] (w3) -- (f2);
            \draw[thick] (w4) -- (f3);

            \draw[thick, red, dashed] (w3) -- (f4);
            \draw[thick, red] (w1) -- (f2);
            \draw[thick, red, dashed] (w2) -- (f3);
    \end{tikzpicture}
    \caption{Certificate of popularity of $M$ for both $\iA$ and $\iB$.
    A straight red and dashed red edges correspond to $(+,+)$ and $(+,-)$ edges, respectively.}
    \label{fig:TwoSwapPop}
\end{figure}

On the other hand, $M$ is not popular for $\iHH$.
Indeed, consider the matching $$M' = \{\{w_1,f_2\}, \{w_2,f_3\}, \{w_3,f_4\}, \{w_4,f_1\}\}\text.$$
Then, $M'$ is preferred by $w_1, w_2, w_3, f_2$, and $f_3$, whereas $M$ is preferred by $w_4, f_1$, and $f_4$.
Hence, $\pmarg^{\iHH}(M',M) = 2 > 0$ and $M$ is not popular.

Together, this example shows that \Cref{lem:hybrid_complete_2} does not generalize for the extension of hybrid instances considered here.
\hfill$\lhd$
\end{example}

In contrast to the previous example, one can show that a variant of \Cref{lem:hybrid_complete_1} still holds if we generalize hybrid instances similar to $\iHH$.
Nonetheless, we have seen in this section that our approach with hybrid instances causes new challenges for solving \RobustProb if two agents change their preferences, even if these agents are from the same side and only perform a swap of two agents.

\section{Limitations of LP-based Approaches}\label{app:LPlimit}

A common approach to tackle algorithmic questions concerning popularity is to apply linear programming \citep[see, e.g.,][]{KMN11a,BrBu20a}.
Indeed, one can define the so-called popularity polytope that contains all fractional popular matchings. 
These can be interpreted as probability distributions over deterministic matchings which have a non-negative popularity margin against every other matching in expectation.

Let us provide the formal framework for this approach.
Assume that we are given a complete instance $\ins$ of {\MatP} based on a graph $G = (W\cup F, E)$.
We can consider its popularity polytope  $\mathcal P \subseteq [0,1]^E$ defined as
\begin{align*}
\mathcal P := \{\mu\in \mathbb R ^ E \colon 
\sum_{e\in E, x\in e} \mu(e) &= 1\ \forall x\in W\cup F,\\
 \pmarg^{\ins}(\mu,\chi_M) &\ge 0\ \forall \text{ matchings } M\\
\mu(e) &\ge 0\ \forall e\in E\}\text.	
\end{align*}

There, $\chi_M$ denotes the incidence vector of a matching $M$.
The first set of constraints and the nonnegativity constraints define the \emph{perfect matching polytope} for this graph \citep{Edmo65b}.
The important set of constraints are given by the inequalities $\pmarg^{\ins}(\mu,\chi_M) \ge 0$, which denotes the expected popularity margin against an incident vector.\footnote{The precise definition for the generalized popularity margin for fractional matchings can be found in the paper by \citet{KMN11a}.}
Importantly, the integral points of $\mathcal P$ correspond to popular matchings of $\ins$.
Feasible points in the polytope can be found by solving the separation problem using an algorithm proposed by \citet{McCu08a}.
Moreover, the polytope is integral if it is based on an instance of {\MatP}, where the stable matchings are perfect matchings.
This is in particular the case for complete input instances.
In general, however, this polytope is only half-integral \citep{HuKa21a}.

We want to discuss prospects of applying the popularity polytope to aid solving \RobustProb.
The natural try would be to consider the combination of the constraints of the popularity polytopes of $\iA$ and $\iB$, provided that we are given an instance $\iP$ of \RobustProb.
It is then possible to check whether this polytope is nonempty, and this actually solves \RobustProb for mixed matchings.

However, the resulting polytope has limitations for answering questions about deterministic robust popular matchings.
Indeed, it can happen that it contains feasible solutions but no integral solution.
Hence, in particular, the resulting polytope need not be integral even if both the polytopes corresponding to the input instances are integral.

\begin{example}\label{ex:LPbound}
    Consider an instance $\iP$ of \RobustProb where $W = \{w_1, w_2, w_3\}$ and $F = \{f_1, f_2,, f_3\}$.
    The preference orders of the agents are as defined in \Cref{fig:LPboundary}.

\begin{figure*}
	\begin{wbox}
	\centering
	\begin{minipage}{.33\linewidth}
        \centering
		\begin{tabular}{l|lll}
			$w_1$  & $f_3$ & $f_1$ & $f_2$\\
			$w_2$  & $f_3$ & $f_1$ & $f_2$\\
			$w_3$  & $f_3$ & $f_1$ & $f_2$
        \end{tabular}
  
		\hspace{1cm}
		
		Workers' preferences in $\iA$ 
		
		\hspace{1cm}
		
	\end{minipage}%
	\begin{minipage}{.33\linewidth}
        \centering
		\begin{tabular}{l|lll}
			$w_1$  & $f_3$ & $f_2$ & $f_1$\\
			$w_2$  & $f_3$ & $f_2$ & $f_1$\\
			$w_3$  & $f_3$ & $f_2$ & $f_1$
        \end{tabular}
 
		\hspace{1cm}
		
		Workers' preferences in $\iB$ 
		
		\hspace{1cm}
		
	\end{minipage}%
	\begin{minipage}{.33\linewidth}
        \centering
		\begin{tabular}{l|lll}
			$f_1$  & $w_1$ & $w_2$ & $w_3$\\
			$f_2$  & $w_1$ & $w_2$ & $w_3$\\
			$f_3$  & $w_1$ & $w_2$ & $w_3$\\
        \end{tabular}
  
		\hspace{1cm}
		
		Firms' preferences in both
		
		\hspace{1cm}
		
	\end{minipage}%
	\end{wbox}
	\caption{Boundaries of linear programming approaches based on the popularity polytope in \Cref{ex:LPbound}.}
	\label{fig:LPboundary} 
\end{figure*} 

    It can be shown that the popular matchings in $\iA$ are given by
    \begin{itemize}
        \item $M^1_A = \{\{w_1,f_1\},\{w_2,f_2\},\{w_3,f_3\}\}$,
        \item $M^2_A = \{\{w_1,f_3\},\{w_2,f_1\},\{w_3,f_2\}\}$, and
        \item $M^3_A = \{\{w_1,f_2\},\{w_2,f_3\},\{w_3,f_1\}\}$.
    \end{itemize}
    
    In addition, the popular matchings in $\iB$ are given by
    \begin{itemize}
        \item $M^1_B = \{\{w_1,f_1\},\{w_2,f_3\},\{w_3,f_2\}\}$,
        \item $M^2_B = \{\{w_1,f_2\},\{w_2,f_1\},\{w_3,f_3\}\}$, and
        \item $M^3_B = \{\{w_1,f_3\},\{w_2,f_2\},\{w_3,f_1\}\}$.
    \end{itemize}
    
    As a consequence, $\iP$ is a No-instance of \RobustProb.
    However, the joint polytope for both $\iA$ and $\iB$ contains exactly the fractional matching that puts probability $\frac 13$ on any possible edge.
    This corresponds to putting probability $\frac 13$ on each of the popular matchings in one of the instances $\iA$ or~$\iB$.
    Hence, the joint polytope is nonempty but contains no integral point and, therefore, no point corresponding to a deterministic matching.
    \hfill$\lhd$
\end{example}

\section{Maximum-Size Robust Popular Matchings}
\label{app:dominant}

In this section, we want to provide an example to show that computing a robust popular matching of maximum size is not the same as computing a robust dominant matching, i.e., a matching that is a maximum-size popular matching in all involved instances.

Clearly, a robust dominant matching is a maximum-size robust popular matching.
However, as our final example shows, the converse is not necessarily true.

\begin{figure*}
	\begin{wbox}
	\centering
	\begin{minipage}{.33\linewidth}
        \centering
		\begin{tabular}{l|lll}
			$w_1$  & $f_1$ & $f_3$\\
			$w_2$  & $f_1$ & $f_2$\\
			$w_3$  & $f_3$
        \end{tabular}
  
		\hspace{1cm}
		
		Workers' preferences in $\iA$ 
		
		\hspace{1cm}
		
	\end{minipage}%
	\begin{minipage}{.33\linewidth}
        \centering
		\begin{tabular}{l|lll}
			$w_1$  & $f_3$ & $f_1$\\
			$w_2$  & $f_1$ & $f_2$\\
			$w_3$  & $f_3$
        \end{tabular}
 
		\hspace{1cm}
		
		Workers' preferences in $\iB$ 
		
		\hspace{1cm}
		
	\end{minipage}%
	\begin{minipage}{.33\linewidth}
        \centering
		\begin{tabular}{l|lll}
			$f_1$  & $w_2$ & $w_1$\\
			$f_2$  & $w_2$\\
			$f_3$  & $w_1$ & $w_3$\\
        \end{tabular}
  
		\hspace{1cm}
		
		Firms' preferences in both
		
		\hspace{1cm}
		
	\end{minipage}%
	\end{wbox}
	\caption{Instance pair in \Cref{ex:maxsize} where a maximum-size robust popular matching is not a robust dominant matching.
	\label{fig:maxsize}}
\end{figure*} 

\begin{example}\label{ex:maxsize}
    Consider an instance pair $\iP$ of two \MatP instances where $W = \{w_1, w_2, w_3\}$ and $F = \{f_1, f_2,, f_3\}$.
    The preference orders of the agents are as defined in \Cref{fig:maxsize}.
    Note that the two instances only differ by a swap in the preferences of $w_1$.

    In $\iA$, the matching $M_1 = \{\{w_1,f_3\},\{w_2,f_1\}\}$ is stable and, therefore, popular, whereas $M_2 = \{\{w_1,f_1\},\{w_2,f_2\},\{w_3,f_3\}$ is another popular matching.
    However, in $\iB$, while $M_1$ is still a stable and popular matching, there are no other popular matchings.
    In particular, $M_2$ is less popular than $M_1$ which is preferred by $w_1$, $w_2$, $f_1$, and $f_3$.

    Hence, the instance does not admit a robust dominant matching.
    However, $M_1$ is the unique robust popular matching.
    Therefore, $M_1$ is the maximum-size robust popular matching, but not a robust dominant matching.\hfill$\lhd$
\end{example}

\end{document}